\DeclareMathOperator*{\bigtimes}{\vartimes}
 \newtheorem{thm}{Theorem}[section]
 \newtheorem{lemma}[thm]{Lemma}
 \newtheorem{prop}[thm]{Proposition}
 \theoremstyle{definition}
 \newtheorem{defi}[thm]{Definition}
 \newtheorem*{property}{Property}
 \theoremstyle{remark}
 \newtheorem{rem}[thm]{Remark}
\newtheorem*{assumption}{Assumption}
 \numberwithin{equation}{section}
\newcommand{\eb}{\overline{\mathbb{E}}}
\newcommand{\e}{{\rm e}}
\newcommand{\rd}{{\rm d}}
\newcommand{\Hilb}{\mathcal{H}}
\newcommand{\bea}{\begin{eqnarray}}
\newcommand{\eea}{\end{eqnarray}}
\newcommand{\dd}{\mathrm{d}}
\newcommand{\ee}{\mathrm{e}}
\newcommand{\ii}{\mathrm{i}}
\renewcommand{\gg}{\mathfrak{g}}
\renewcommand{\sp}{\mathfrak{sp}}
\newcommand{\C}{\mathbb{C}}
\newcommand{\R}{\mathbb{R}}
\newcommand{\Z}{\mathbb{Z}}
\newcommand{\E}{{\mathbb E}}
\newcommand{\V}{{\mathbb V}}
\newcommand{\K}{{\mathbb K}}
\newcommand{\N}{\mathbb{N}}
\newcommand{\PP}{\mathbb{P}}
\newcommand{\M}{\mathbb{M}}
\newcommand{\DON}{D_0^{(N)}}
\newcommand{\DO}{D_{\omega}}
\newcommand{\DOsp}{D_{\omega,\ell}^{(N)}}
\newcommand{\DONxL}{D_{\omega}^{(x,L)}}
\newcommand{\DONL}{D_{\omega}^{(L)}}
\newcommand{\DONOL}{D_{\omega}^{(0,L)}}
\newcommand{\SO}{\mathrm{SO}}
\newcommand{\SN}{\mathrm{S}_{\mathrm{N}}(\R)\,}
\renewcommand{\Re}{\operatorname{Re}}
\renewcommand{\Im}{\operatorname{Im}}
\newcommand{\uu}{u^{\uparrow}}
\newcommand{\ud}{u_{\downarrow}}
\DeclareMathOperator{\tr}{tr}
\DeclareMathOperator{\diag}{diag}
\DeclareMathOperator{\supp}{supp}
\DeclareMathOperator{\s}{span}
\DeclareMathOperator{\dist}{dist}
\DeclareMathOperator{\GL}{GL}
\newcommand{\Dom}{\mathrm{Dom}}
\def\ec{{\mathbb E}}
\def\pc{{\mathbb P}}
\newcommand{\SpN}{\mathrm{Sp}_{\mathrm{N}}(\R)}
\newcommand{\spN}{\mathfrak{sp}_{\mathrm{N}}(\R)}
\newcommand{\spONN}{\mathfrak{spo}_{\mathrm{N}}(\R)}
\newcommand{\SpC}{\mathrm{Sp}_{\mathrm{N}}^*(\C)}
\newcommand{\SpONN}{\mathrm{SpO}_{\mathrm{N}}(\R)}
\newcommand{\dlO}{d_{\log\, \mathcal{O}}}
\newcommand{\omO}{\omega^{(0)}}
\definecolor{Plum}{rgb}{.5,0,1}
\numberwithin{equation}{section}
\def\d{\mathrm{d}}
\begin{document}

\title{Localization for random quasi-one-dimensional Dirac operators}

\author[H. Boumaza]{Hakim Boumaza*}
\address{LAGA, Universit\'e Sorbonne Paris Nord, 99 avenue J.B. Cl\'ement, F-93430 Villetaneuse}
\email{boumaza@math.univ-paris13.fr}
\thanks{*Corresponding author. \\ The research of H.B was partially funded by the ANR Project RAW, ANR-20-CE40-0012}
\author[S. Zalczer]{Sylvain Zalczer}
\address{KIT, Englerstrasse 2, 76131, Karlsruhe, Germany}
\email{sylvain.zalczer@kit.edu}
\thanks{\noindent The research of S.Z was funded by the Deutsche Forschungsgemeinschaft (DFG, German Research Foundation) - Project-ID 258734477 - SFB 1173.}

\date{}

%
\begin{abstract}
    We consider a random family of Dirac operators on  $N$ parallel real lines with an ergodic matrix-valued random potential. We establish a  criterion for Anderson and dynamical localization involving properties on the group generated by transfer matrices. In particular, we consider not only the usual case where this group is the symplectic group but also a strict subgroup of it. We establish under quite general hypotheses that the sum of the Lyapunov exponents and the integrated density of states are H\"older continuous. Moreover, for a set of concrete cases where the potentials are on Pauli matrices, we compute the transfer matrices and prove either localization or delocalization, depending on the potential and on the parity of $N$.
\end{abstract}

\maketitle

\section{Introduction} \label{sec_intro}

Dirac introduced the operator bearing his name in order to describe the motion of a relativistic electron. While nonrelativistic quantum matter is described by the Schr\"odinger operator, which is a second-order differential operator acting on scalar-valued functions, the Dirac operator has order 1 and acts on vector-valued functions. A standard reference on the Dirac operator from a mathematics point of view is the book of Thaller~\cite{thaller}.

In the last twenty years, the Dirac operator has been used as an effective Hamiltonian to describe graphene samples, although no relativistic effects are considered. Graphene is a 2-dimensional material made of carbon atoms arranged according to a honeycomb structure. The setting which we are mostly interested in this paper is the one of graphene nanoribbons, which consist in an infinite band of graphene, bounded transversely. In~\cite{BBKOB}, a model of Dirac operator on a waveguide is studied. 

The question of localization on graphene nanoribbons has already been considered by physicists, although they generally consider a discrete tight-binding model instead of the Dirac operator. In~\cite{dSP}, some localization is obtained for a quasi-periodic perturbation. In~\cite{NOR} and~\cite{EZXH}, different Anderson-like models are considered. 
In~\cite{Xiong}, several interesting phenomena are highlighted. First, different types of disorder can lead to different localization regimes. Second, in some cases there is localization for all nonzero energy but delocalization for energy zero, for symmetry reasons. Third, the Lyapunov exponents can be grouped into pairs, here again because of some symmetry of the system. We will be able to recover these properties in our results.

From a mathematical point of view, a few papers have already considered random Dirac operators. In~\cite{BCZ}, Barbaroux, Cornean and the second author proved localization at band edges for a gapped Dirac Hamiltonian in any dimension. The latter continued with~\cite{Zdos}, where he proved the Lipschitz regularity of the integrated density of states for the same model, and~\cite{Z23}, in which Anderson localization is proven at any  energy for a one-dimensional model. The present paper is an extension of~\cite{Z23} to the quasi-one-dimensional case.  Discretized versions of the Dirac operator have been studied by Prado and de Oliveira, either for a Bernoulli random potential~\cite{PdO07, PdOdO} or for a potential with an absolutely continuous distribution~\cite{PdOC}. In~\cite{SSB}, Sadel and Schulz-Baldes consider a quasi-one-dimensional random Dirac operator with some symmetry property. They look at the Lyapunov exponents and establish delocalization in some cases, but they do not prove localization.

In the present paper, we consider a semi-discretized model: the nanoribbon is represented by a finite number of parallel infinite lines. This makes it possible to use techniques which are specific to one-dimensional systems, based on transfer matrices and Lyapunov exponents. These methods have already been used to prove localization for Schr\"odinger operators  with random potentials even with singular distributions, in~\cite{KLS} in the discrete case and in~\cite{Bou09} in the continuous case. A survey article on these methods has recently been published~\cite{Bou23}. The proof of the localization itself relies on the method called \emph{multiscale analysis}, in its version developped by Germinet and Klein in~\cite{GKbootstrap}.

Our paper is organized as follows. In Section~\ref{sec_intro}, we present the model as well as the main tools used in the analysis, and we state our results. In Section~\ref{sec:loc-crit}, we prove a general criterion ensuring localization for a quasi-one-dimensional random Dirac operators and a second one valid in some particular case for which we cannot apply the general criterion. In Section~\ref{sec:splitting:Pauli}, we prove that we can apply thess critera to some explicit examples.  Finally, in Appendix~\ref{app:G5}, we study the properties of a group appearing in one of the cases considered in Section~\ref{sec:splitting:Pauli}.
\bigskip

\subsection{Quasi-one-dimensional operators of Dirac type}



Given an integer $N\geq 1$, the free Dirac operator on $N$ parallel straight lines is
\begin{equation}
    \DON:=J\frac{\dd}{\dd x}, \text{ with }J:=\begin{pmatrix}0&-I_N\\I_N&0\end{pmatrix}\text{ and } \Dom( \DON)=H^1(\R)\otimes \C^{2N}.
\end{equation}
It is easy to see that this operator is self-adjoint.

We add to this free operator a random potential. Let $(\Omega, \mathcal{A},\mathbb{P})$ a complete probability space and $\ell>0$ be a disorder parameter: the smaller $\ell$ is, the "more disordered" the system is. The random potential  $(V_{\omega}^{(n)})_{n\in \Z}$ is a sequence of  independent and identically distributed (i.i.d. for short) random variables such that, for every $n\in \Z$, the function  $x\mapsto V_{\omega}^{(n)} (x)$ takes values in the Hermitian matrices, is supported in $[0,\ell]$ and is  uniformly bounded in $x$, $n$ and $\omega$.

We consider the random family $\{\DO \}_{\omega \in \Omega}$ of quasi-one-dimensional Dirac operators defined for every realization $\omega\in \Omega$ by:
\begin{equation}\label{def_op_D_general}
\DO : = \DON +   \sum_{n\in \Z} V_{\omega}^{(n)} (\cdot-\ell n).
\end{equation}
 Under such conditions, for each $\omega \in \Omega$, the operator $\DO $ is self-adjoint on the Sobolev space  $H^1(\R)\otimes \C^{2N}$ and thus, for every $\omega\in \Omega$, the spectrum of $\DO $, denoted by $\sigma\left(\DO \right)$, is included in $\R$.
 
 The random potential is such that $\{\DO \}_{\omega \in \Omega}$ is a $\ell \Z$-ergodic  random family of operators. As a consequence,  there exists $\Sigma \subset \R$ such that, for $\pc$-almost every $\omega \in \Omega$, $\Sigma=\sigma(\DO )$. There also exist $\Sigma_{\mathrm{pp}}$, $\Sigma_{\mathrm{ac}}$ and $\Sigma_{\mathrm{sc}}$, subsets of $\R$, such that, for $\mathbb{P}$-almost every $\omega \in \Omega$, $\Sigma_{\mathrm{pp}}=\sigma_{\mathrm{pp}}(\DO)$, $\Sigma_{\mathrm{ac}}=\sigma_{\mathrm{ac}}(\DO)$ and $\Sigma_{\mathrm{sc}}=\sigma_{\mathrm{sc}}(\DO)$, respectively the pure point, absolutely continuous and singular continuous spectrum of $\DO$ (see for example~\cite[Theorem~4.3]{Kirsch-notes}).  


We aim at proving that under conditions involving notions coming from the theory of dynamical systems, the phenomenon of Anderson localization occurs for $\{\DO \}_{\omega \in \Omega}$  at all energies, except maybe those in a discrete set.
There are several mathematical definitions to translate the Anderson localization phenomenon for a general family of random operators. Let $\{ H_{\omega} \}_{\omega \in \Omega}$  be a family of self-adjoint random operator on a Hilbert space, which will be the space $H^1(\R)\otimes \C^{2N}$ in our model \eqref{def_op_D_general}.

\begin{defi}[AL]\label{def_anderson_loc}
Let $I$ be an interval of $\R$. We say that the family $\{ H_{\omega} \}_{\omega \in \Omega}$ of almost-sure spectrum $\Sigma$ has the property of \emph{Anderson localization} in $I$ when:
\begin{enumerate}
\item $\Sigma \cap I = \Sigma_{\mathrm{pp}} \cap I\neq \emptyset$ and $\Sigma_{\mathrm{ac}}\cap I=\Sigma_{\mathrm{sc}}\cap I=\emptyset$,
\item the eigenfunctions associated with the eigenvalues in $\Sigma \cap I$ decay exponentially to $0$ at infinity. 
\end{enumerate}
\end{defi}

Note that if $\{ H_{\omega} \}_{\omega \in \Omega}$ exhibits Anderson localization in $I$, $\pc$-almost surely the point spectrum of $H_{\omega}$ is dense in $\Sigma \cap I$. The definition of Anderson localization is a stationary definition, involving only the Hamiltonian $H_{\omega}$ and not the associated one-parameter group. The following definition takes into account the dynamics in time of the wave packets.

\begin{defi}[DL]\label{def_dyn_loc}
  Let $I$ be an interval of $\R$. We say that the family $\{ H_{\omega} \}_{\omega \in \Omega}$  is \emph{dynamically localized} in $I$, when
\begin{enumerate}
 \item   $\Sigma \cap I \neq \emptyset$,
\item for every compact interval $I_0\subset I$, every $\psi\in L^2(\R)\otimes \C^{2N}$ with compact support and every $p\geq 0$,
\begin{equation}\label{eq_def_loc_dynamique_continu}
\E\left( \sup_{t\in \R} \left\|(1+|\cdot|^2)^{\frac{p}{2}} \ee^{-\mathrm{i}tH_{\omega}}  \mathbf{1}_{I_0}(H_{\omega}) \psi \right\|_{L^2(\R)}^2 \right) < +\infty
\end{equation}
where $\mathbf{1}_{I_0}(H_{\omega})$ denotes the spectral projector on $I_0$ associated with $H_{\omega}$.
\end{enumerate}
\end{defi}

\noindent  The definition \ref{def_dyn_loc} is dynamical in nature and follows the evolution of wave packets over time. It tells us that the solutions of the Schr\"odinger equation are localized in space in the vicinity of their initial position and this, uniformly over time. This reflects the absence of quantum transport. More precisely, let $|x|$ be the position operator, \emph{i.e.}, the multiplication operator by $x\mapsto (1+||x||_2^2)^{\frac12}$ on $L^2(\R)\otimes \C^{2N}$. For any state $\psi \in L^2(\R)\otimes \C^{2N}$, if we denote $\psi_I(t)$ as the evolution of the spectral projection of $\psi$ at time $t$, then the moments of the position operator are bounded in $t$:
$$\exists C_{\psi,I}>0,\ \forall p\geq 0,\ \forall t\in \R,\ \left\langle\psi_I(t), |x|^p \psi_I(t)\right\rangle \leq C_{\psi,I}.$$ 
On the contrary, we will say that there is quantum transport in an interval $I'$ when
$$
\exists  \alpha >0,\ \exists p\geq 0,\ \forall t\in \R,\ \langle\psi_{I'}(t), |x|^p \psi_{I'}(t)\rangle \geq |t|^{\alpha}.
$$

\noindent Let us point out that the use of multiscale analysis as done in this paper will imply both Anderson localization and dynamical localization \cite{DS01}. Note that dynamical localization implies absence of continuous spectrum but does not imply in general, the exponential decay of the eigenfunctions as in (AL). It is also possible to define even stronger notions of localization, all of them being implied by the use of multiscale analysis. For an exhaustive presentation of these notions we refer to the third part of \cite{Kl08}.


Before stating our main results, two localization criteria for quasi-one-dimensional operators of Dirac type, we need to introduce the Lyapunov exponents and the Furstenberg  group of such operators.

\subsection{Transfer matrices, Lyapunov exponents and the Furstenberg  group}\label{sec:trans_mat}

In order to determine the almost-sure spectrum of $\{\DO \}_{\omega \in \Omega}$ and to study the asymptotic behaviour of the corresponding generalized eigenfunctions, one considers the equation for the generalized eigenvalues, for every $\omega \in \Omega$,
\begin{equation}\label{eq_eigenfunctions}
\DO u = E u,\quad \mbox{ where }  E\in \C \ \mbox{  and  }\ u=\left( \begin{smallmatrix}                                                 u^{\uparrow}\\ u_{\downarrow}                                                \end{smallmatrix}\right)\ :\ \R \to \C^{2N} \ .
\end{equation}
The notation $ u=\left( \begin{smallmatrix}                                                 u^{\uparrow}\\ u_{\downarrow} \end{smallmatrix} \right)$ refers to the decomposition spin up / spin down of the solution of the Dirac equation. 

Equation (\ref{eq_eigenfunctions}) is a linear differential system of order $1$. We introduce, for $E\in \C$ and every $x,y \in \R$, the transfer matrix  $T_x^y(E)$ of $\DO$ from $x$ to $y$ which maps a solution $(\uu,\ud)$  at time $x$ to the same solution at time $y$. It is defined by the relation 
\begin{equation}\label{eq_def_transfer_mat_xy}
\left( \begin{array}{c}
\uu(y) \\
\ud(y) 
\end{array} \right) = T_x^y(E) \left( \begin{array}{c}
\uu(x) \\
\ud(x) 
\end{array} \right)
\end{equation}
and in particular, $T_x^x(E)=I_{\mathrm{2N}}$ for every $x\in \R$. 
The transfer matrices are elements of the complex symplectic group
 \begin{equation}\label{eq:def_SpC}
 \SpC=\{M \in \mathcal{M}_{\mathrm{2N}}(\C)\ |\ M^*JM=J \}
 \end{equation}
 with $J=\left( \begin{smallmatrix} 0 & -I_N\\ I_N & 0  \end{smallmatrix} \right)$. Indeed, for $E\in \C$ and $x\in \R$ fixed  $y \mapsto T_x^y(E)$ satisfies $\DO  (T_x^y(E)) = E (T_x^y(E)) $ on $\R$. It implies $\left(\frac{d}{dy} T_x^y(E)\right)^* J T_x^y(E) + (T_x^y(E))^* J \frac{d}{dy} T_x^y(E) = 0 $. Hence, the function  $y\mapsto (T_x^y(E))^* J  T_x^y(E)$ is constant on $\R$. Taking the value at $y=x$ one obtains $J$ and $(T_x^y(E))^* J  T_x^y(E) =J$ for every $y \in \R$.

\begin{rem}
Note that despite its name, $\SpC$ is a real Lie group since it is a $C^{\infty}$ manifold and not a holomorphic manifold because of the presence of a conjugation in its definition.
\end{rem}
 
For $E\in \C$ fixed and two couples $(x,y)$ and $(x',y')$ in $\R^2$, the random matrices $T_x^y(E)$ and $T_{x'}^{y'}(E)$ are not necessarily independent. In order to apply the results of the theory of products of i.i.d. random matrices, we also introduce, for every $n\in \Z$, the transfer matrices $T_{\omega^{(n)}}(E)=T_{\ell n}^{\ell(n+1)}(E)$ from $\ell n$ to $\ell (n+1)$. The transfer matrix  $T_{\omega^{(n)}}(E)$ is therefore defined by the relation
\begin{equation}\label{eq_def_transfer_mat}
\left( \begin{array}{c}
\uu(\ell(n+1)) \\
\ud(\ell(n+1)) 
\end{array} \right) = T_{\omega^{(n)}}(E) \left( \begin{array}{c}
\uu(\ell n) \\
\ud(\ell n) 
\end{array} \right)
\end{equation}
for all $n\in \Z$. 

The sequence $(T_{\omega^{(n)}}(E))_{n\in \Z}$ is  a sequence of i.i.d.\ matrices because of the i.i.d.\ character of the $V_{\omega}^{(n)}$'s and the disjointness of their supports for different values of $n$.

By iterating the relation (\ref{eq_def_transfer_mat}) we get the asymptotic behaviour of $\left( \begin{smallmatrix}      
u^{\uparrow}\\ u_{\downarrow}         
\end{smallmatrix}\right)$. More precisely, we introduce, for $E\in \C$ fixed, the cocycle $\Phi_E : \Z\times \Omega \to \SpC$ defined by : $\forall n\in \Z,\ \forall \omega \in \Omega,$
$$ \Phi_E(n,\omega) \;=\; \left\lbrace 
\begin{array}{lcl}
T_{\omega^{(n-1)}}(E)\cdots T_{\omega^{(0)}}(E) & \mbox{ if } & n >0 \\
I_{\mathrm{N}} & \mbox{ if } & n=0 \\
(T_{\omega^{(n)}}(E))^{-1} \cdots (T_{\omega^{(-1)}}(E))^{-1} & \mbox{ if } & n <0
\end{array} \right.$$

To get the \emph{exponential} asymptotic behaviour of $\left( \begin{smallmatrix}                                                 u^{\uparrow}\\ u_{\downarrow}                                                \end{smallmatrix}\right)$, we define the exponential growth (or decay) exponents of the product of random matrices $T_{\omega^{(n-1)}}(E)\cdots T_{\omega^{(0)}}(E)$.

\begin{defi}\label{def_lyap_exp}
Let $E\in \C$. The \emph{Lyapunov exponents} $\gamma_{1}^{\pm}(E),\ldots,\gamma_{2N}^{\pm}(E)$ associated with the sequence $(T_{\omega^{(n)}}(E))_{n\in \Z}$ are defined inductively by
\begin{equation}\label{eq_def_lyap_exp}
\sum_{i=1}^{p} \gamma_{i}^{\pm}(E) = \lim_{n \to \pm\infty} \frac{1}{|n|}
\mathbb{E}(\log ||\wedge^{p} \Phi_E(n,\omega) ||)
\end{equation}
for every $p\in \{1,\ldots,2N\}$. Here, $\wedge^{p} M$ denotes the $p$-th exterior power of the matrix $M$, acting on the $p$-th exterior power of $\C^{2N}$.
\end{defi}

Since the transfer matrices all lie in $\SpC$, for every $i\in \{1,\ldots , 2N \}$, $\gamma_{i}^{+}(E)=\gamma_{i}^{-}(E)$. Indeed, for each $M\in \SpC$, $||M||=||M^{-1}||$.

One can link the Lyapunov exponents to the singulars values of the cocycle $\Phi_E(n,\omega)$.
\begin{prop}[\cite{BL},Proposition A.III.5.6]\label{prop_lyapvps}
If $s_{1}(\Phi_E(n,\omega))\geq \ldots \geq s_{2N}(\Phi_E(n,\omega))>0$ are the singular values of $\Phi_E(n,\omega)$, then, for $\mathsf{P}$-almost every $\omega \in \Omega$,
$$\forall p\in \{1,\ldots, 2N\},\ \gamma_{p}^{\pm}(E)= \lim_{n \to \pm \infty} \frac{1}{|n|} \E(\log s_{p}(\Phi_E(n,\omega)) = \lim_{n \to \pm \infty} \frac{1}{|n|} \log s_{p}(\Phi_E(n,\omega)).$$
\end{prop}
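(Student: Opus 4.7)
The plan is to reduce the statement to a combination of the singular value decomposition and a standard application of Kingman's subadditive ergodic theorem (or equivalently, Oseledets' multiplicative ergodic theorem).

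The first step is the linear-algebraic identity
\begin{equation*}
\|\wedge^p M\| = s_1(M)\,s_2(M)\cdots s_p(M)
\end{equation*}
for every $M \in \mathcal{M}_{2N}(\C)$. I would prove this via the singular value decomposition: writing $M = U\Sigma V^*$ with $U,V$ unitary and $\Sigma = \diag(s_1(M),\ldots,s_{2N}(M))$, one has $\wedge^p M = (\wedge^p U)(\wedge^p \Sigma)(\wedge^p V^*)$ where $\wedge^p U$ and $\wedge^p V^*$ are unitary on $\wedge^p \C^{2N}$. The operator $\wedge^p \Sigma$ is diagonal in the standard basis of $\wedge^p \C^{2N}$ with entries $\prod_{j \in I} s_j(M)$ indexed by $I \subset \{1,\ldots,2N\}$ with $|I|=p$, and the largest such entry is $s_1(M)\cdots s_p(M)$, which therefore equals $\|\wedge^p \Sigma\| = \|\wedge^p M\|$.

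Taking logarithms and substituting this identity into Definition~\ref{def_lyap_exp}, I would obtain
\begin{equation*}
\sum_{i=1}^{p} \gamma_i^{\pm}(E) = \lim_{n\to\pm\infty} \frac{1}{|n|} \sum_{i=1}^{p} \E\bigl(\log s_i(\Phi_E(n,\omega))\bigr),
\end{equation*}
provided the individual limits on the right exist. Their existence follows from Kingman's subadditive ergodic theorem applied to the $\ell\Z$-ergodic sequence of matrices $(T_{\omega^{(n)}}(E))_{n\in\Z}$, using the essential boundedness of the potential (which ensures $\E(\log^+\|T_{\omega^{(0)}}(E)\|)<\infty$ and in fact that $\log\|\wedge^p \Phi_E(n,\omega)\|$ is integrable). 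Subtracting the relation at level $p-1$ from the relation at level $p$ then yields the first equality of the proposition.

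For the almost-sure statement, I would invoke Oseledets' multiplicative ergodic theorem, which gives the existence, for $\pc$-almost every $\omega$, of the deterministic limits $\tfrac{1}{|n|}\log s_p(\Phi_E(n,\omega))$, together with almost-sure equality to the $L^1$ limits obtained above. The main obstacle, and really the only subtle point in the argument, is verifying that the boundedness hypotheses on $V_\omega^{(n)}$ stated in the model transfer to an integrability condition on the transfer cocycle — but since $E \in \C$ is fixed and the interval $[0,\ell]$ is compact, the solution of the linear ODE producing $T_{\omega^{(0)}}(E)$ is controlled by a deterministic Gronwall bound, so $\|T_{\omega^{(0)}}(E)\|$ and $\|T_{\omega^{(0)}}(E)^{-1}\|$ are uniformly bounded in $\omega$; this makes the application of Kingman and Oseledets immediate.
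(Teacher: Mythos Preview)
Your argument is correct and is essentially the standard proof of this fact. Note, however, that the paper does not give its own proof of this proposition: it is stated as a citation of \cite[Proposition~A.III.5.6]{BL}, so there is nothing to compare against beyond observing that your sketch is the standard derivation one finds in that reference (SVD identity $\|\wedge^p M\|=\prod_{i\le p}s_i(M)$, Kingman for the partial sums, then Oseledets for the almost-sure limits).

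One minor wording issue: you write ``provided the individual limits on the right exist'' and then appeal to Kingman. Kingman (applied to $n\mapsto\log\|\wedge^p\Phi_E(n,\omega)\|$) gives directly the existence of the limit of the \emph{partial sums} $\tfrac{1}{|n|}\sum_{i\le p}\E(\log s_i)$ for each $p$; the existence of the individual limit $\tfrac{1}{|n|}\E(\log s_p)$ then follows by subtracting the $(p-1)$-th relation from the $p$-th, not from a separate application of Kingman to $\log s_p$ (which is not subadditive on its own). This is what you do in the next sentence anyway, so the logic is fine --- just tighten the phrasing.
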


This implies in particular that $\gamma_{1}(E)\geq \ldots \geq \gamma_{2N}(E)$. Moreover, the symplecticity of the transfer matrices also implies the following symmetry property (\emph{cf.}~\cite[Proposition~A.IV.3.2]{BL}) 
$$\forall i \in \{1,\ldots,N\},\ \gamma_{2N-i+1}(E)= -\gamma_{i}(E).$$

\bigskip

\noindent To study the properties of the Lyapunov exponents, we introduce the group which contains all the different products of transfer matrices, the so-called Furstenberg  group. 

\begin{defi}\label{def_GE}
For every $E\in \C$, let $\mu_E$ be the common distribution of the random matrices $T_{\omega^{(n)}}(E)$.  We define the \emph{Furstenberg  group} of $\{\DO \}_{\omega \in \Omega}$ at $E$ as the closed group generated by the support of $\mu_E$,  
$$G(E)=\overline{<\supp \mu_E>},$$
where   the closure is taken for the usual topology in $\mathcal{M}_{2N}(\C)$.
\end{defi}

We already remark that for all $E\in \C$, $G(E)$ is a subgroup of $\SpC$.

\subsection{Localization criteria for quasi-one-dimensional operators of Dirac type}\label{sec:localization_criterion}

The formalism of transfer matrices, Lyapunov exponents and the Furstenberg  group enables to state criteria of dynamical localization for quasi-one-dimensional operators of Dirac type. 

Before that, we introduce several definitions in order to fix the framework in which we are able to obtain such criteria of dynamical localization. 

We introduce two properties concerning the Furstenberg group. Let $p\in \{1,\ldots, N\}$.  The first property is called \emph{$p$-contractivity}.

\begin{defi}
A subset $T$ of $\GL_{\mathrm{2N}}(\C)$ is called \emph{p-contracting}  if there exists a sequence $(M_n)$ of elements of $T$ such that $\|\Lambda^p M_n\|^{-1}\Lambda^p M_n$ converges to a rank-one matrix.
\end{defi}

Let $L\geq 1$ an integer. For $l\in \{1,\ldots, L\}$ we denote indifferently by $b_l$ a bilinear form on $\C^{2N}$ or its matrix in the canonical basis of $\C^{2N}$. We also denote by $b_0$, or most simply by $J$,  the symplectic bilinear form on $\C^{2N}$ associated with the matrix $J=\left( \begin{smallmatrix} 0 & -\mathrm{I}_{\mathrm{N}}\\ \mathrm{I}_{\mathrm{N}} & 0  \end{smallmatrix} \right)$.

For any $p\in \{1,\ldots, N\}$, let $(J,b_1,\ldots,b_L)$-$L_p$ be the vector subspace of $\Lambda^p \C^{2N}$ whose elements are $p$-decomposable vectors $u_1\wedge \cdots \wedge u_p$ such that
$$\forall i,j\in  \{1,\ldots, p\},\ \forall l\in  \{0,\ldots, L\},\ b_l(u_i,u_j)=0$$
\emph{i.e}, the family $(u_1,\ldots, u_p)$ is orthogonal for all the bilinear forms $b_l$ (and in particular each $u_i$ is orthogonal to itself for all $b_l$).

The second property is called the $(J,b_1,\ldots,b_L)$-$L_p$-strong irreducibility. It generalizes the notion of $L_p$-strong irreducibility as defined in \cite{BL} in the setting of the real symplectic group.

\begin{defi}
We say that a subset $T$ of $\GL_{\mathrm{2N}}(\C)$  is \emph{$(J,b_1,\ldots,b_L)$-$L_{p}$-strongly irreducible} if there does not exist any $W$, finite union of proper vector subspaces of $(J,b_1,\ldots,b_L)$-$L_{p}$, such that $(\Lambda^{p}M)(W)=W$ for all $M$ in $T$. 
\end{defi}


 
We now state two localization criteria for $\{ \DO \}_{\omega \in \Omega}$ in some particular cases of the group $G$. The first one states for the group $G=\SpC$.

\begin{thm}\label{thm:loc_criterion1}
We fix a compact interval $I\subset \R$. We assume that there exists an open interval $\tilde{I}$ containing $I$ and such that for every $E\in \tilde{I}$:
\begin{enumerate}
    \item the Furstenberg group $G(E)$ is included in $\SpC$  ;
    \item for every $p\in \{1,\ldots, N\}$, $G(E)$ is $p$-contracting and $J$-$L_p$-strongly irreducible.
\end{enumerate}
Then $\{\DO \}_{\omega \in \Omega}$ exhibits dynamical localization in $\Sigma \cap I$.
\end{thm}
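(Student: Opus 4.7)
My plan is to reduce the localization statement to the Germinet--Klein bootstrap multiscale analysis of \cite{GKbootstrap}, whose two hypotheses on every compact subinterval of $\tilde{I}$ are a Wegner estimate and an initial length-scale estimate (ILSE). Both inputs are produced from the properties of the Lyapunov exponents that assumptions (1) and (2) are tailored to deliver.

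The first step is to establish, for every $E\in\tilde{I}$, that the Lyapunov exponents are simple and that the nonnegative half is strictly positive, i.e.\
\[
\gamma_1(E)>\gamma_2(E)>\dots>\gamma_N(E)>0.
\]
This is the Gol'dsheid--Margulis / Guivarc'h--Raugi statement proved in the real symplectic case in \cite{BL} (Chapters A.III--A.IV): under $p$-contractivity and $L_p$-strong irreducibility of $G(E)$ for every $p\in\{1,\dots,N\}$, one obtains $\gamma_p(E)>\gamma_{p+1}(E)$, and the symplectic symmetry $\gamma_{2N-i+1}=-\gamma_i$ then closes the strict interlacing. The subtlety is that here the ambient group is $\SpC$ rather than $\SpN$, and the relevant isotropy constraint is encoded in the $J$-$L_p$ subspace of $\wedge^p\C^{2N}$; the Cartan decomposition, the action on the generalized Lagrangian Grassmannian and the variational characterisation of the top exponent of $\wedge^p$ transfer once the bilinear structure is replaced accordingly.

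The second step is to run a Le Page--type argument (see \cite[Chapter V]{BL}) to obtain, from the very same contraction and irreducibility hypotheses, uniform exponential moments and a large deviation estimate for $\tfrac{1}{n}\log\|\wedge^p\Phi_E(n,\omega)v\|$, as well as Hölder continuity of $E\mapsto\sum_{i\leq p}\gamma_i(E)$ on $\tilde{I}$. Through a Thouless-type formula for quasi-one-dimensional Dirac operators this entails Hölder continuity of the integrated density of states; combined with the uniform boundedness of the potentials this yields a Wegner estimate of the form
\[
\mathbb{E}\bigl[\tr \mathbf{1}_{[E-\eta,E+\eta]}(\DONL)\bigr]\leq C\,L\,\eta^\alpha
\]
on $\tilde{I}$, which is sufficient for MSA. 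Simultaneously, the strict positivity of $\gamma_N$ together with the same large deviation estimate translates via the usual finite-volume arguments into exponential decay of finite-volume Green's functions at a sufficiently large initial scale with high probability, which gives the ILSE. This runs along the lines of the quasi-one-dimensional continuous Schrödinger strategy of \cite{Bou09}, corrected in Appendix~\ref{app:B}.

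The hard part will not be the bootstrap itself, which is invoked as a black box from \cite{GKbootstrap} once Wegner and ILSE are in place, but rather the systematic transfer of the Bougerol--Lacroix machinery from the real symplectic group $\SpN$ to the complex symplectic group $\SpC$, and the careful derivation of the ILSE in the first-order Dirac setting. In the latter, the resolvent identities and spectral-shift tricks routinely used for Schrödinger operators must be rewritten in transfer-matrix language that respects the spin-up / spin-down splitting of \eqref{eq_eigenfunctions} and the fact that the free operator is only first-order. Once these adaptations are carried out, the remainder follows the well-trodden quasi-one-dimensional path.
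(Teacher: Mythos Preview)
Your proposal is correct and follows essentially the same route as the paper: separation and positivity of the Lyapunov exponents from \cite{BL}, H\"older regularity of their sum, a Thouless formula giving H\"older IDS, a Wegner estimate, an ILSE from large deviations and $\gamma_N>0$, and finally the Germinet--Klein bootstrap. Two refinements worth noting: the paper sidesteps any genuine ``transfer of machinery'' to $\SpC$ by using the identification $A+iB\mapsto\left(\begin{smallmatrix}A&-B\\B&A\end{smallmatrix}\right)$ with $\mathrm{Sp}_{2\mathrm{N}}(\R)$, so the Bougerol--Lacroix results apply verbatim; and the Wegner input obtained is not the expectation bound you wrote (which is generally unavailable for Bernoulli randomness) but the weak CKM-type probabilistic estimate $\mathbb{P}\bigl(\dist(E,\sigma(\DONL))\leq e^{-\sigma L^\beta}\bigr)\leq e^{-\alpha L^\beta}$, derived from the H\"older IDS together with an exponential moment bound on $\|\Lambda^p\Phi_E(n,\omega)x\|^{-\delta}$.
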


In order to state a second theorem, we introduce the matrix
\begin{equation}\label{def:S}    
S:=\left( \begin{array}{cc}
   K  &  0\\
   0  & K
\end{array}\right)\in \mathcal{M}_{2N}(\R),
\end{equation}
where $K$ is the diagonal matrix with $(-1)^{i+1}$ in position $i$. We define the group
\begin{equation}\label{def:SpONN}
 \SpONN:=\left\{M\in\mathcal{M}_{2N}(\R),\ ^tMJM=J,\ ^t MSM=S\right\}.
\end{equation}

The second localization criterion states for the group $G=\SpONN$.

\begin{thm}\label{thm:loc_criterion2}
Assume that $N$ is even. We fix a compact interval $I\subset \R$. We assume that there exists an open interval $\tilde{I}$ containing $I$ and such that for every $E\in \tilde{I}$:
\begin{enumerate}
    \item the Furstenberg group $G(E)$ is included in $\SpONN$  ;
    \item for every $2p\in \{1,\ldots, N\}$, $G(E)$ is $2p$-contracting and $(J,S)$-$L_{2p}$-strongly irreducible.
\end{enumerate}
Then $\{\DO \}_{\omega \in \Omega}$ exhibits dynamical localization in $\Sigma \cap I$.
\end{thm}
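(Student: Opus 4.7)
The plan is to run the same multiscale analysis strategy as for Theorem~\ref{thm:loc_criterion1}, in the Germinet-Klein bootstrap form, which derives dynamical localization on $\Sigma\cap I$ from a Wegner estimate and an initial length-scale estimate on every box of side $L$ with $L$ large, both of which are ultimately consequences of positivity, separation and H\"older regularity of the Lyapunov exponents on $\tilde{I}$. Compared with the $\SpC$ case of Theorem~\ref{thm:loc_criterion1}, the only genuinely new ingredient is an adaptation of the Gol'dsheid-Margulis machinery to the subgroup $\SpONN$ which, besides the symplectic form $J$, preserves the additional bilinear form $S$.

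First I would extract the Lyapunov spectrum on $\tilde{I}$ from the two hypotheses. The invariance of $S$ forces the Lyapunov exponents to come in pairs, so one cannot hope for $L_p$-strong irreducibility for \emph{odd} $p$; instead, the natural objects to look at inside $\Lambda^{2p}\C^{2N}$ are the $2p$-decomposable vectors lying in $(J,S)$-$L_{2p}$. Together with $2p$-contractivity, $(J,S)$-$L_{2p}$-strong irreducibility for each even $2p\in\{2,4,\ldots,N\}$ should yield, via the standard Furstenberg-Gol'dsheid-Margulis argument transferred to this subspace,
\begin{equation*}
\gamma_1(E)=\gamma_2(E)>\gamma_3(E)=\gamma_4(E)>\cdots>\gamma_{N-1}(E)=\gamma_N(E)>0
\end{equation*}
for every $E\in\tilde{I}$. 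In particular $\gamma_N(E)>0$ uniformly on $\tilde{I}$, which is the deterministic exponential dichotomy required for the subsequent Kotani-type analysis.

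Next I would upgrade this qualitative statement to H\"older regularity of the partial sums $\gamma_1(E)+\cdots+\gamma_{2p}(E)$ and of the integrated density of states on $\tilde{I}$, by repeating the Le Page-type estimates for $G(E)$ acting on $(J,S)$-$L_{2p}$; the separation above provides the spectral gap needed for the associated Markov operators. The H\"older IDS then yields a Wegner estimate for the finite-volume restrictions $\DONL$, while the strict positivity of $\gamma_N$, combined with a large deviation estimate for products of i.i.d.\ random matrices in $G(E)$, yields the initial length-scale estimate essentially in the way spelled out in Appendix~\ref{app:B}. Both inputs are then fed into the multiscale analysis to produce dynamical localization on $\Sigma\cap I$.

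The main obstacle is the first step. The classical proofs of the Gol'dsheid-Margulis theorem inside $\Sp$ use $L_p$-strong irreducibility for \emph{all} $p\in\{1,\ldots,N\}$, and this hypothesis is genuinely false here because of the $S$-pairing. One must therefore show that $(J,S)$-$L_{2p}$-strong irreducibility for even $2p$ is the exactly correct replacement: no proper finite union of subspaces of $(J,S)$-$L_{2p}$ is $\Lambda^{2p}G(E)$-invariant, and the contraction argument then runs inside the correct $\SpONN$-irreducible component of $\Lambda^{2p}\C^{2N}$. Once this algebraic core is in place, the analytical parts of the proof (H\"older continuity, Wegner, initial length-scale estimate and multiscale analysis) proceed along the same lines as for Theorem~\ref{thm:loc_criterion1}, with the forms $(J,b_1,\ldots,b_L)$ specialized to $(J,S)$.
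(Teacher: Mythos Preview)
Your outline is correct and matches the paper's strategy: degeneracy $\gamma_{2p-1}=\gamma_{2p}$ from the $S$-invariance, separation and positivity via a Gol'dsheid--Margulis argument run on $(J,S)$-$L_{2p}$ (this is Proposition~\ref{prop:sepLyap}), then H\"older regularity of $\sum\gamma_p$ and of the IDS via Thouless, then Wegner and ILSE, then bootstrap MSA.

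One technical point you underestimate: in the ILSE, the large-deviation input (Proposition~\ref{prop_LDP_valeurssingulieres_SpONN}) only controls $s_p(T\pi_F^*)$ for $F$ a $(J,S)$-Lagrangian, but the blocks of the singular-value decomposition you need to bound are projections onto $F_+=\{(u,0)\}$ and $F_-=\{(0,v)\}$, which are $J$-Lagrangian but \emph{not} $(J,S)$-Lagrangian. The paper handles this by splitting $F_\pm=F_\pm^{(+)}\oplus F_\pm^{(-)}$ into two $(J,S)$-Lagrangian pieces (see~\eqref{def_Fpm+-}) and using the double degeneracy of the singular values to pass from $s_N$ of the full block to $s_d$ of a half-block, cf.~\eqref{eq_sv_U_SpoNN}. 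This is a short extra step, but it is not automatic from ``the same lines as Theorem~\ref{thm:loc_criterion1}'' and is precisely where the $\SpONN$ structure enters the analytical part of the argument.
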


Theorems \ref{thm:loc_criterion1} and \ref{thm:loc_criterion2} are comparable to \cite[Theorem 1]{Bou09} which is a criterion of localization for quasi-one-dimensional operators of Schr\"odinger type. This former result deals only with the case of $G=\SpN$.
\bigskip

The proofs of Theorems \ref{thm:loc_criterion1} and \ref{thm:loc_criterion2}  involve several steps as detailed in Section \ref{sec:loc-crit}:
\begin{enumerate}
  \item The assumptions of the two theorems lead to an integral formula for the Lyapunov exponents which implies their H\"older regularity.
\item We then deduce the same H\"older regularity for the integrated density of states, using a Thouless formula.
\item From this regularity of the  integrated density of states, we get a weak Wegner's estimate adapted to Bernoulli randomness.
\item Finally we apply a multiscale analysis scheme which involves the proof of an Initial Length Scale Estimate.
\end{enumerate}

Actually, we will see in Section \ref{sec:loc-crit} that most of these steps except the proof of the  Initial Length Scale Estimate in the last one are true with more general hypothesis. Let us state them now.

\begin{assumption}[$\mathbf{L^{(N)}}$]\label{assumptions}

 We fix a compact interval $I\subset \R$. We assume that there exists $L_N$ a vector subspace of  $\Lambda^N\C^{2N}$ and an open interval $\tilde{I}$ containing $I$ such that, for all $E\in \tilde{I}$:
\begin{itemize}
  
 \item[$\mathbf{(L_1^{(N)}})$] for all $g\in G(E)$, $(\Lambda^N g)(L_N)\subset L_N$;
   
\item[$\mathbf{(L_2^{(N)}})$] for all $x\neq0$ in $L_{N}$, 
$$ \lim_{n\to\infty}\frac1n\E(\log\|\Lambda^{N} \Phi_E(n,\cdot) x\|)=\sum_{i=1}^{N}\gamma_i(E);$$
\item[$\mathbf{(L_3^{(N)}})$] there exists a unique probability measure $\nu_{N,E}$ on $\mathsf{P}(L_{N})$ which is $\mu_E$-invariant, such that
$$\gamma_1(E)+\dots+\gamma_{N}(E)=\int_{G(E)\times \mathsf{P}(L_{N})}\log\frac{\|(\Lambda^{N}g)x\|}{\|x\|}\rd \mu_E(g)\rd \nu_{N,E}(\bar{x});$$
\item[$\mathbf{(L_4^{(N)}})$] $\gamma_1(E)+\dots+\gamma_{N}(E)>0$.

\end{itemize}   
\end{assumption}

The properties of $N$-contractivity and $J$ or $(J,S)$-$L_N$-strong irreducibility imply Assumption $\mathbf{(L^{(N)}})$. It is a consequence of \cite[Proposition A.IV.3.4]{BL} in the case of $\SpC$ since one has the identification between $\SpC$ and $\mathrm{Sp}_{2\mathrm{N}}(\R)$  using the following application which split the real and imaginery parts of the matrices in $\mathcal{M}_{\mathrm{2N}}(\C)$: 
$$\pi\ :\ \begin{array}{cll}
           \mathcal{M}_{\mathrm{2N}}(\C) & \to & \mathcal{M}_{\mathrm{4N}}(\R) \\[2mm]
	    A+iB & \mapsto & \left( \begin{smallmatrix}
				      A & -B \\
				      B & A
				      \end{smallmatrix} \right) .
				\end{array}$$
In the case of $\SpONN$, it is proven in Proposition \ref{prop:sepLyap}.

In the following Section, we will present some explicit cases of the model $\{ \DO \}_{\omega \in \Omega}$ for which we are able to verify the assumptions of Theorem \ref{thm:loc_criterion1} or Theorem \ref{thm:loc_criterion2}.

\subsection{Application of the localization criteria to a class of splitting potentials}
\label{sec:exemples}

In this Section, we introduce a particular case of quasi-one-dimensional operators of Dirac type whose potentials split in a sum of two tensorized Pauli matrices.
 Recall the usual notations for Pauli matrices :
 $$\sigma_0 :=  \mathrm{I}_2,\ \sigma_1: = \left( \begin{smallmatrix}
 0 & 1 \\
 1 & 0   \end{smallmatrix} \right),\ \sigma_2: =  \left( \begin{smallmatrix}
 0 & -\ii \\
 \ii & 0                                                                           \end{smallmatrix} \right),\ \sigma_3 := \left( \begin{smallmatrix}
 1 & 0 \\
 0 & -1  \end{smallmatrix} \right).$$


Consider the particular family $\{\DOsp \}_{\omega \in \Omega}$ where the potential split into a periodic part and a random part :
\begin{equation}\label{def_op_D}
 \DOsp: = \DON + V_{\mathrm{per}} + V_{\omega}.
\end{equation}

The potential $ V_{\mathrm{per}}$ is a $\ell$-periodic function, linear combination of tensorized Pauli matrices  of the form
\begin{equation}\label{decom:Vper}V_{\mathrm{per}}: = \left(\alpha_0 \sigma_0 +  \alpha_1 \sigma_1 +\alpha_2 \sigma_2   + \alpha_3 \sigma_3 \right)  \otimes \hat{V}_{\mathrm{per}},\end{equation}
\noindent where $\alpha_0,\ldots, \alpha_3$ are real numbers and $\hat{V}_{\mathrm{per}}$ is a $\ell$-periodic function with value in the space of the $N$-by-$N$ real symmetric matrices denoted by $\SN$. Note that : 
$$\sigma_0 \otimes V: = 
\left( \begin{smallmatrix} 
V & 0 \\
0 & V
\end{smallmatrix} \right)\in \mathcal{M}_{\mathrm{2N}}(\C)$$
and the same for the other tensor products.

We construct the random potential $ V_{\omega}$ in the following way.
Let $(\tilde{\Omega}_1, \tilde{\mathcal{A}}_1,\tilde{\mathbb{P}}_1),\ldots, (\tilde{\Omega}_N, \tilde{\mathcal{A}}_N,\tilde{\mathbb{P}}_N)$ be $N$ complete probability spaces. We take
$$(\Omega, \mathcal{A},\mathbb{P})=\left( \bigtimes_{n\in \Z} \tilde{\Omega}_1 \times \cdots \times \tilde{\Omega}_N , \bigotimes_{n\in \Z} \tilde{\mathcal{A}}_1 \otimes \cdots \otimes \tilde{\mathcal{A}}_N, \bigotimes_{n\in \Z} \tilde{\mathbb{P}}_1 \otimes \cdots \otimes \tilde{\mathbb{P}}_N \right).$$
For $i\in \{1,\ldots ,N\}$, the sequences $(\omega_i^{(n)})_{n\in \Z}$ are independent of each other and each one is a sequence of i.i.d. real-valued  random variables on $(\widetilde{\Omega}_i,\widetilde{\mathcal{A}}_i, \widetilde{\mathbb{P}}_i)$. Let $\nu_i$ be the common law of the $\omega_i^{(n)}$'s. We assume that $\{0,1\} \subset \supp \nu_i$ and $\supp \nu_i$ is bounded. In particular, the $\omega_i^{(n)}$'s can be Bernoulli random variables which is the most difficult case of randomness to deal with since it will imply the smallest possible Furstenberg group. We also set, for every $n\in \Z$, {\small $\omega^{(n)}=(\omega_1^{(n)},\ldots,\omega_N^{(n)})$}, which is a random variable on {\small $(\widetilde{\Omega}_1 \times \cdots \times \widetilde{\Omega}_N, \widetilde{\mathcal{A}}_1  \otimes \cdots \otimes \widetilde{\mathcal{A}}_N,\widetilde{\mathbb{P}}_1 \otimes \cdots \otimes \widetilde{\mathbb{P}}_N)$} of law $\nu = \nu_1\otimes \cdots \otimes \nu_N$. 
 For each $n$ in $\Z$, we introduce the random function
\begin{equation*}
     \hat{V}_{\omega^{(n)}}:=\left(
\begin{smallmatrix}
 \omega_{1}^{(n)} v_1(\cdot -\ell n) & & 0\\ 
 & \ddots &  \\
0 & &  \omega_{N}^{(n)} v_N(\cdot -\ell n)\\ 
\end{smallmatrix}\right),
\end{equation*}
where the $v_i$ are measurable functions from $[0,\ell)$ to $\R$.
We take
\begin{equation}\label{decomp:Vomega}V_{\omega} =\left( \beta_0 \sigma_0 + \beta_1 \sigma_1  + \beta_2 \sigma_2 + \beta_3 \sigma_3  \right) \otimes \sum_{n\in\Z}\hat{V}_{\omega^{(n)}}\end{equation}
where $\beta_0,\ldots, \beta_3$ are real numbers.


In order to compute the Furstenberg  group associated with $\{ \DO \}_{\omega \in \Omega}$, we will  express the transfer matrices as matrix exponentials (\emph{cf.} Section~\ref{Liealg}). This is possible only when the potentials are constant on each interval $(n\ell, (n+1)\ell)$. For this reason, we only consider the case where $\hat{V}_{\mathrm{per}}$ is a constant function and $v_1=\dots=v_N=\mathbf{1}_{[0,\ell]}$. By a small abuse of notation, we will denote by $\hat{V}_{\mathrm{per}}$ (resp. $\hat{V}_{\omega^{(n)}}$) the unique value of the function $\hat{V}_{\mathrm{per}}$ (resp. $\hat{V}_{\omega^{(n)}}$).

Moreover, we will restrict ourselves to particular combinations of non-vanishing $\alpha_i$'s and $\beta_j$'s. For simplicity we will only consider real-valued potentials $V_{\mathrm{per}}$ and $V_{\omega}$, which corresponds to the absence of magnetic field and to the following assumption.

\begin{assumption}\label{assumption1}
We assume that $\alpha_2=\beta_2 =0$.
\end{assumption}

Hence we only consider potentials which are on $\sigma_0$, $\sigma_1$ and $\sigma_3$ which implies in particular that the corresponding Furstenberg  groups will be included in $\SpN$ instead of $\SpC$. Next, we consider only splitting potentials with one deterministic term and one random term which allows to reduce the number of cases in which we should compute the Furstenberg  group from 43 to 9.

\begin{assumption}\label{assumption2}
We assume that one and only one among $\alpha_0$, $\alpha_1$ and $\alpha_3$ is different from zero and one and only one among $\beta_0$, $\beta_1$ and $\beta_3$ is different from zero.
\end{assumption}

Since we have to choose one random potential and one deterministic one, there are \emph{a priori} nine possibilities. Nevertheless, it is possible to reduce this number to five in the following way.
If one sets for $(V_0,V_1,V_3)\in (\mathcal{M}_{\mathrm{N}}(\R))^3$, 
$$D(V_0,V_1,V_3)=D_0^{(N)}+\sigma_0\otimes V_0+\sigma_1\otimes V_1 + \sigma_3 \otimes V_3$$
 acting on  $H^1(\R)\otimes \R^2$, then 
 $$\forall (V_0,V_1,V_3)\in (\mathcal{M}_{\mathrm{N}}(\R))^3,\ D(V_0,V_1,V_3)=P(-D(-V_0,-V_3,-V_1))P^*$$
 with $P$ the unitary matrix defined by 
\begin{equation}\label{eq_def_P_unitary}
 P=\frac{1}{\sqrt{2}} \left( \begin{matrix}
       I_{\mathrm{N}} & I_{\mathrm{N}}\\ I_{\mathrm{N}} & -I_{\mathrm{N}}
     \end{matrix} \right).
\end{equation}


Hence, for any  $(V_0,V_1,V_3)\in (\mathcal{M}_{\mathrm{N}}(\R))^3$, the operators $D(V_0,V_1,V_3)$ and  $-D(-V_0,-V_3,-V_1)$ have the same spectrum and also the same pure point, absolutely continuous and singular continuous spectra. These two operators have transfer matrices which are unitarily equivalent (through $P$ defined at \eqref{eq_def_P_unitary}), hence their Lyapunov exponents are equal. They also have Furstenberg  groups which are unitarily equivalent (again through $P$) and there is localization for $D(V_0,V_1,V_3)$ if and only if there is localization for $-D(-V_0,-V_3,-V_1)$.
 As a consequence, there are five cases, as explained in the following table.
 

\begin{table}[H]
    \centering
        \begin{tabular}{c|c|c|c|}
 $V_{\omega}$ $\backslash$ $V_{per}$&$\sigma_0$&$\sigma_1$&$\sigma_3$\\
 \hline
 $\sigma_0$&1&5&5\\\hline
 $\sigma_1$&3&2&4\\\hline
 $\sigma_3$&3&4&2\\\hline
     \end{tabular}
    \caption{The five possible cases}
    \label{tab:cas}
\end{table}


For each of these cases, we  prove either localization or delocalization.

 \begin{thm}\label{thm:localization:exemple}
Denote by $V$ the unique value of the function $\hat{V}_{\mathrm{per}}$. For almost-every real symmetric matrix $V\in \SN$, there exist a finite set $\mathcal{S}_V\subset \R$ and $\ell_C:=\ell_C(N,V) >0$ such that, for every $\ell\in (0,\ell_C)$, there exists a compact interval $I(N,V,\ell)\subset \R$ such that if $I\subset I(N,V,\ell)\setminus \mathcal{S}_V$ is an open interval with $\Sigma \cap I \neq \emptyset$, then :
\begin{itemize}
\item[(i)] in case $1$, $\Sigma \cap I$ is purely a.c.;
\item[(ii)] in cases $2$, $3$ and $4$,  $\{ \DOsp \}_{\omega \in \Omega}$ exhibits  Anderson and dynamical localization in $\Sigma \cap I$;
\item[(iii)] in case $5$,  $\{ \DOsp \}_{\omega \in \Omega}$ exhibits Anderson and dynamical localization in $\Sigma \cap I$ if $N$ is even and there is presence of a.c. spectrum of multiplicity $2$  if $N$ is odd.
\end{itemize}
\end{thm}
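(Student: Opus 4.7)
The plan is to verify, in each of the five cases and for all energies $E$ outside a discrete set $\mathcal{S}_V$, the hypotheses of either Theorem~\ref{thm:loc_criterion1} or Theorem~\ref{thm:loc_criterion2} on the Furstenberg group $G(E)$, and separately to exhibit the absolutely continuous spectrum announced in case~1 and in the odd-$N$ branch of case~5. Because the potential in \eqref{def_op_D} is piecewise constant on intervals of length $\ell$, the one-step transfer matrix $T_{\omega^{(n)}}(E)$ can be written explicitly as $\exp(\ell M_{\omega,E})$ for an element $M_{\omega,E}\in \spN$ depending affinely on the scalars $\omega_i^{(n)}$ and on $E$; this is the setup of Section~\ref{Liealg}. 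Since $\{0,1\}\subset \supp\nu_i$, the support of $\mu_E$ already contains the $2^N$ points $\exp(\ell M_{\omega,E})$ with $\omega\in\{0,1\}^N$, and for $\ell$ below a threshold $\ell_C(N,V)$ the Baker--Campbell--Hausdorff expansion identifies $G(E)$ with the image under $\exp$ of the Lie subalgebra $\mathfrak{g}(E)\subset \spN$ generated by the differences $M_{\omega,E}-M_{0,E}$.

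For cases~2,~3 and~4 (localization via Theorem~\ref{thm:loc_criterion1}) the goal is to prove $\mathfrak{g}(E)=\spN$ on an open interval $\tilde I$ containing $I$, where $I\subset I(N,V,\ell)\setminus \mathcal{S}_V$. The route is to compute enough nested commutators of the $2^N-1$ generating matrices to fill out $\spN$, using the symmetry and genericity of $V$, and then to take for $\mathcal{S}_V$ the real-analytic set in $E$ where the rank of the bracket system drops. Once $G(E)=\SpN$ is established, the Goldsheid--Margulis-type argument cited after Assumption~$\mathbf{L^{(N)}}$ (and originating from~\cite{BL}) delivers $p$-contractivity and $J$-$L_p$-strong irreducibility for every $p\in \{1,\dots,N\}$, so that Theorem~\ref{thm:loc_criterion1} applies and yields Anderson plus dynamical localization on $\Sigma\cap I$.

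In case~5 the Pauli structure combining $\sigma_0$ with $\sigma_1$ or $\sigma_3$ forces each $M_{\omega,E}$ to lie in $\spONN$, so $G(E)\subset \SpONN$ automatically. For even $N$ one repeats the Lie algebra density analysis inside $\spONN$ to obtain $G(E)=\SpONN$ and applies Theorem~\ref{thm:loc_criterion2}. For odd $N$, the structural study of $\SpONN$ carried out in Appendix~\ref{app:G5} shows that a nontrivial invariant subspace persists regardless of the random parameters; this forces a pair of Lyapunov exponents to vanish identically on $\tilde I$, and a Kotani-type theorem for quasi-one-dimensional Dirac operators then yields absolutely continuous spectrum of multiplicity~$2$. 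Case~1 is handled by direct reducibility: both potentials lying on $\sigma_0$ makes the operator block-diagonal in the spin up/spin down decomposition and, after diagonalising the constant matrix $\hat V_{\mathrm{per}}$, reduces the problem to $N$ independent scalar one-dimensional Dirac-type problems whose Furstenberg groups are trivial and whose almost-sure spectrum is therefore purely absolutely continuous.

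The main technical obstacle is the Lie algebra density step: one must show, by explicit iterated bracket computations organised by the Pauli type and by the index set $\{1,\dots,N\}$, that the affinely-parametrised family of generators spans, outside a real-analytic thin set in the $(V,E)$-plane, the whole of $\spN$ in cases~2--4 or of $\spONN$ in case~5 with $N$ even. The careful bookkeeping of these brackets, combined with an inductive dimension count that exploits the genericity of $V$, is what produces the interval $I(N,V,\ell)$, the threshold $\ell_C(N,V)$ and the finite exceptional set $\mathcal{S}_V$ appearing in the statement, and it forms the core of Section~\ref{sec:splitting:Pauli}.
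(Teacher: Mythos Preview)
Your overall architecture is right---compute the Lie algebra $\mathfrak{g}(E)$ generated by the $\log T_{\omega^{(0)}}(E)$, show it equals $\spN$ or $\spONN$ off a finite set $\mathcal{S}_V$, and feed this into Theorems~\ref{thm:loc_criterion1} or~\ref{thm:loc_criterion2}. But two steps are not correct as stated.

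First, your Case~1 argument is wrong. When both potentials lie on $\sigma_0$ the perturbation is $\sigma_0\otimes W = \left(\begin{smallmatrix}W&0\\0&W\end{smallmatrix}\right)$, but the free operator $\DON=J\,\tfrac{d}{dx}$ still couples the up and down components, so $\DOsp$ is \emph{not} block diagonal in the spin decomposition and does not reduce to independent scalar problems. The paper's mechanism is different: since $[\sigma_0\otimes W,J]=0$, differentiating $(T_x^y)^*T_x^y$ shows the transfer matrices are orthogonal, hence $G(E)\subset\SO_{2N}(\R)$ is compact, all Lyapunov exponents vanish, and Theorem~\ref{thm:deloc_crit} (Kotani theory for quasi-1D Dirac) gives purely a.c.\ spectrum. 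Similarly, in Case~5 with $N$ odd the vanishing of $\gamma_N$ comes not from an invariant subspace but from the fact that matrices in $\SpONN$ have all singular values of multiplicity at least~$2$, so the Lyapunov exponents are pairwise degenerate; combined with the symplectic symmetry $\gamma_{N+1}=-\gamma_N$ this forces $\gamma_N=0$.

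Second, the Baker--Campbell--Hausdorff expansion alone does \emph{not} let you pass from ``$\log g_1,\dots,\log g_m$ generate $\mathfrak{g}$'' to ``$g_1,\dots,g_m$ generate a dense subgroup of $G$''. BCH only controls products of two nearby exponentials; it says nothing about topological closure of the generated group. The paper invokes the nontrivial theorem of Breuillard and Gelander (Theorem~\ref{thm_breuillard}), valid for connected semisimple Lie groups, which supplies precisely this density statement for generators in a small enough neighborhood of the identity. This is what produces the threshold $\ell_C$ and the interval $I(N,V,\ell)$ via Lemma~\ref{lem:normX}, and it requires checking that $\SpONN$ is connected and semisimple (Appendix~\ref{app:G5}). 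A related structural point: the paper does not carry out the bracket computation for generic $V$. It fixes the \emph{specific} choice $\hat V_{\mathrm{per}}=\Delta$ (tridiagonal with $0$'s on the diagonal and $1$'s off), proves $\mathfrak{g}(E)=\spN$ or $\spONN$ for that $V$ by explicit commutator calculations (Proposition~\ref{prop:Lie_alg}), and only then uses an algebraic-geometry argument (the set of tuples in $\mathfrak{g}^{2^N}$ failing to generate is Zariski-closed, and $\Delta$ lies outside it) to conclude the same for almost every $V$, with $\mathcal{S}_V$ the finite zero set of a nonzero polynomial in $E$.
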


Remark that from its construction given at Section \ref{Liealg}, the interval $I(N,V,\ell)$ tends to the whole real line $\R$ when $\ell$ tends to $0$.

The localization results come from applications of our localization criteria in Theorems~\ref{thm:loc_criterion1} and~\ref{thm:loc_criterion2}.
The presence of a.c. spectrum in case $1$ and in case $5$ when $N$ is odd
is a consequence of the following theorem of Sadel and Schulz-Baldes.
\begin{thm}[\cite{SSB}, Theorem~4]\label{thm:deloc_crit}
    Let $\{\DO \}_{\omega \in \Omega}$ be as in~\eqref{def_op_D_general}. Then, for $k\in\{1,\dots,N\}$, the set
    \[S_k:=\{E\in \R, \text{ exactly } 2k\text{ Lyapunov exponents vanish at }E\}\]
    is an essential support of the almost-sure absolutely continuous spectrum of multiplicity $2k$.
\end{thm}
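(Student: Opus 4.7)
The plan is to reduce the theorem to applications of the localization criteria of Theorems~\ref{thm:loc_criterion1} and~\ref{thm:loc_criterion2} (for the localized cases) and of the delocalization result Theorem~\ref{thm:deloc_crit} (for the a.c.\ cases). The unifying task, case by case, is to identify the Furstenberg group $G(E)$ precisely enough to verify (or refute) $p$-contractivity and $L_p$-strong irreducibility on a large interval of energies.

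The first step is to write the one-period transfer matrix as an exponential. Since $\hat V_{\mathrm{per}}$ and the $\hat V_{\omega^{(n)}}$ are constant on each $(n\ell,(n+1)\ell)$, one has
\[
T_{\omega^{(n)}}(E)=\exp\bigl(\ell\, A(E,\omega^{(n)})\bigr),
\]
where $A(E,\omega)$ depends linearly on $E$, $V$ and $\omega^{(n)}$, and its shape is dictated by the Pauli matrix carrying the potential. Using that $\{0,1\}\subset\supp\nu_i$, the support of $\mu_E$ contains all exponentials $\exp(\ell A(E,\omega))$ for $\omega\in\{0,1\}^N$, hence for $\ell$ small enough (so that the BCH series for the matrix logarithm converges and the exponential is a local diffeomorphism) $G(E)$ contains the connected Lie subgroup whose Lie algebra is the one generated by the matrices $\{A(E,\omega)-A(E,0)\}_{\omega}$ together with $A(E,0)$. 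This gives the interval $I(N,V,\ell)$ appearing in the statement: it is the set of energies for which the Lie-algebraic computation, which depends analytically on $E$, yields the expected algebra. The limit $\ell\to0$ only makes the BCH-type estimates easier, so $I(N,V,\ell)$ exhausts $\R$.

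The second step is the case-by-case identification of this Lie algebra. In cases $2$, $3$ and $4$, by computing brackets of $\sigma_k\otimes V$ with $\sigma_j\otimes \mathrm{diag}(e_i)$ (where $\sigma_j,\sigma_k\in\{\sigma_1,\sigma_3\}$), one shows that for generic $V\in \SN$ the generated algebra is the full symplectic Lie algebra $\spN$. The ``generic $V$'' is what forces the almost-everywhere quantifier: the conditions ensuring that the brackets span $\spN$ are polynomial inequalities in the entries of $V$, whose failure locus has measure zero. In case~$5$, the extra symmetry $^tMSM=S$ is preserved by all generators, and one checks that the generated algebra is exactly $\spONN$ (when $N$ is even) or its odd-$N$ analogue. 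In case~$1$, the random potential lies on $\sigma_0\otimes \mathrm{diag}(\omega_i)$, which commutes with $J$ block-wise in a way that makes $G(E)$ isomorphic to a product of $N$ copies of $\mathrm{SL}_2(\R)$ acting on decoupled invariant subspaces, so that the natural $N$-dimensional isotropic space has trivial Lyapunov behaviour and all $2N$ Lyapunov exponents vanish.

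Third, one converts these algebraic facts into localization or delocalization. In cases $2$--$4$, having $G(E)=\SpN$ gives $p$-contractivity and $J$-$L_p$-strong irreducibility for every $p\in\{1,\dots,N\}$ via the classical criteria recalled after Assumption~$\mathbf{L^{(N)}}$, so Theorem~\ref{thm:loc_criterion1} applies and gives Anderson and dynamical localization. In case~$5$ with $N$ even, the required $2p$-contractivity and $(J,S)$-$L_{2p}$-strong irreducibility for $G(E)=\SpONN$ are established in Appendix~\ref{app:G5}, and Theorem~\ref{thm:loc_criterion2} yields localization. In case~$1$, the vanishing of all $2N$ Lyapunov exponents places $\Sigma$ in $S_N$ of Theorem~\ref{thm:deloc_crit}, proving that $\Sigma\cap I$ is purely a.c. of multiplicity $2N$; in case~$5$ with $N$ odd, the $(J,S)$-structure forces exactly one symmetric pair of Lyapunov exponents to vanish (the odd dimension prevents full pairing into nonzero blocks), so $\Sigma\cap I\subset S_1$ and Theorem~\ref{thm:deloc_crit} gives a.c.\ spectrum of multiplicity $2$. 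The finite exceptional set $\mathcal{S}_V$ collects the finitely many energies at which the polynomial conditions used to span the algebra degenerate, so that irreducibility could a priori fail.

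The main obstacle is step two: the explicit Lie-algebraic computation of the group generated by the $A(E,\omega)$. The combinatorics of iterated brackets, together with tracking which finite configurations of $\omega\in\{0,1\}^N$ suffice, is delicate, and showing that the span is \emph{all} of $\spN$ (respectively $\spONN$) for a full-measure set of $V$ demands a careful identification of the polynomial obstructions. A secondary but non-trivial difficulty is verifying $(J,S)$-$L_{2p}$-strong irreducibility for $\SpONN$ in the even case, which is precisely why the study of this group is deferred to Appendix~\ref{app:G5}.
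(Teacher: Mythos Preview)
Your proposal does not address the stated theorem. Theorem~\ref{thm:deloc_crit} is a result of Sadel and Schulz-Baldes quoted from~\cite{SSB}; the paper does not prove it but uses it as a black box. What you have written is instead a proof sketch for Theorem~\ref{thm:localization:exemple}, and in fact your argument \emph{invokes} Theorem~\ref{thm:deloc_crit} in the delocalized cases. As a proof of the stated theorem this is circular.

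If one reads your proposal as an attempt at Theorem~\ref{thm:localization:exemple}, the overall architecture matches the paper's, but there is a concrete error in case~1. You claim that $G(E)$ decouples into $N$ copies of $\mathrm{SL}_2(\R)$; this is not the mechanism. In case~1 the full potential is on $\sigma_0$, so $V_\omega+V_{\mathrm{per}}$ commutes with $J$ and the transfer matrices are orthogonal: $G(E)\subset\SO_{2N}(\R)$ (Proposition~\ref{prop:inclusFurst}). The vanishing of all Lyapunov exponents then follows from compactness of $\SO_{2N}(\R)$, not from any $\mathrm{SL}_2$ splitting. Also, your description of case~5 for odd $N$ (``the odd dimension prevents full pairing into nonzero blocks'') is vague; the actual argument is that the $\SpONN$ structure forces all Lyapunov exponents to be doubly degenerate (Proposition~\ref{degen_gamma}), and combined with the symplectic symmetry $\gamma_{N+1}=-\gamma_N$ this yields $\gamma_N=0$.
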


This result of Kotani's theory for quasi-one-dimensional operators of Dirac type has to be seen as  a \emph{delocalization} result for $\{ \DO \}_{\omega \in \Omega}$.

\bigskip
\bigskip

\section{Proof of the localization criteria}\label{sec:loc-crit}

The proofs of Theorem~\ref{thm:loc_criterion1} and Theorem~\ref{thm:loc_criterion2}  consist of several steps, following the strategy already used in~\cite{KLS, CL90, Bou08,Bou09}.

The first step is to prove that, under the hypotheses of the theorems, Assumption~$\mathbf{(L^{(N)})}$ or $\mathbf{(L^{(N-1)})}$ is satisfied depending on the parity of $N$. As we  mentioned in Section~\ref{sec:loc-crit}, this is already known for a group which is $J$-$L_p$-strongly irreducible. We provide a proof in the case of a $(J,S)$-$L_p$-strongly irreducible group.

Then, we prove  that, under Assumption~$\mathbf{(L^{(N)})}$ or $\mathbf{(L^{(N-1)})}$ , the sum of all nonnegative Lyapunov exponents is H\"older continuous.
\begin{thm}\label{thm:regu_Lyap}
    Let $I$ be a compact interval on which Assumption~$\mathbf{(L^{(N)})}$ holds or on which Assumption~$\mathbf{(L^{(N-1)})}$ holds and $\gamma_N$ is identically $0$.
    Then, the sum of the Lyapunov exponents associated with $\{ \DO \}_{\omega \in \Omega}$ is H\"older continuous on $I$, \emph{i.e.} there exist two real numbers $\alpha>0$ and $C>0$ such that
    \begin{equation}\label{HoldLyap}
      \forall E, E'\in I, \left|\sum_{p=1}^N\left(\gamma_p(E)-\gamma_p(E')\right)\right|\leq C|E-E'|^\alpha.
    \end{equation}
\end{thm}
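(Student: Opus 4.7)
The plan is to follow the standard Le~Page strategy for Hölder regularity of Lyapunov exponents of i.i.d.\ products of random matrices, as presented in Bougerol--Lacroix~\cite{BL} and adapted in~\cite{Bou09} to the symplectic setting, but carried out on the projective space of the invariant subspace $L_N$ (resp.\ $L_{N-1}$) supplied by $\mathbf{(L_1^{(N)})}$ (resp.\ $\mathbf{(L_1^{(N-1)})}$). When $N$ is odd and $\gamma_N\equiv 0$ on $I$ we use the second assumption and the identity $\sum_{p=1}^N\gamma_p=\sum_{p=1}^{N-1}\gamma_p$ on $I$, so in both cases it suffices to prove Hölder regularity of the integral formula provided by $\mathbf{(L_3^{(\cdot)})}$:
$$
\sum_{p=1}^{N}\gamma_p(E)\;=\;\int_{G(E)\times \mathsf{P}(L_{N})}\log\frac{\|(\Lambda^{N}g)x\|}{\|x\|}\,\rd\mu_E(g)\,\rd\nu_{N,E}(\bar x),
$$
and analogously for $N-1$. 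Hence everything reduces to proving that the $\mu_E$-invariant probability measure $\nu_{N,E}$ on $\mathsf{P}(L_N)$ depends Hölder-continuously on $E$, in the weak sense against Hölder test functions.

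First I would observe that $E\mapsto T_{\omega^{(0)}}(E)$ is real-analytic (the transfer matrix is solution of a linear ODE with a parameter entering analytically), so $E\mapsto \mu_E$ is Lipschitz in the total variation/Wasserstein sense, and the associated Markov operator
$$
P_E f(\bar x)\;=\;\int_{G(E)} f(\overline{(\Lambda^N g)x})\,\rd\mu_E(g),\qquad \bar x\in\mathsf{P}(L_N),
$$
depends Lipschitz-continuously in $E$ as a bounded operator on $C(\mathsf{P}(L_N))$ and on the Hölder space $C^\alpha(\mathsf{P}(L_N))$ for every small $\alpha>0$. Second, the heart of the argument is a Le~Page-type spectral gap for $P_E$ on $C^\alpha(\mathsf{P}(L_N))$: there exist $\alpha>0$, $C>0$ and $\rho\in(0,1)$ such that
$$
\bigl\|P_E^n f - \nu_{N,E}(f)\bigr\|_{C^\alpha}\;\leq\; C\rho^n\|f\|_{C^\alpha},\qquad n\geq 0,\ E\in\tilde I.
$$
This gap rests on two contractivity/moment estimates: a sub-exponential lower bound $\E(\|\Lambda^N T_{\omega^{(0)}}(E) x\|^{-s}\|x\|^s)\leq M<\infty$ for some $s>0$ uniformly in $\bar x\in\mathsf{P}(L_N)$ and $E\in\tilde I$, and a strict contraction of the projective distance in $s$-th mean, $\sup_{\bar x\ne\bar y}\E\bigl((\delta(\Lambda^N T_{\omega^{(0)}}(E)\bar x,\Lambda^N T_{\omega^{(0)}}(E)\bar y)/\delta(\bar x,\bar y))^s\bigr)<1$ iterated a large number of times. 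Both are consequences of $N$-contractivity together with the $(J,b_1,\ldots,b_L)$-$L_N$-strong irreducibility on the compact set $\tilde I$; in the $\SpC$-case they follow from Bougerol--Lacroix once one passes from $\SpC$ to $\mathrm{Sp}_{2N}(\R)$ via the embedding $\pi$, and in the $\SpONN$-case they follow from the analogous statements we prove separately in Appendix~\ref{app:G5} / Proposition~\ref{prop:sepLyap}.

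Combining the spectral gap of $P_E$ with the Lipschitz dependence $E\mapsto P_E$, a classical perturbation argument (resolvent identity for the operator $I-P_E$ acting on the codimension-one subspace of mean-zero Hölder functions, as in~\cite[Ch.~V]{BL}) yields a constant $C>0$ and $\beta>0$ such that
$$
\bigl|\nu_{N,E}(f)-\nu_{N,E'}(f)\bigr|\;\leq\; C\,\|f\|_{C^\alpha}\,|E-E'|^{\beta}\qquad \forall E,E'\in I,\ \forall f\in C^\alpha(\mathsf{P}(L_N)).
$$
Applying this to the Hölder-on-compact-sets test function $\bar x\mapsto \int_{G(E)}\log(\|(\Lambda^N g)x\|/\|x\|)\rd\mu_E(g)$ (whose $C^\alpha$-norm is controlled uniformly in $E$ by the moment estimates above, for $\alpha$ small enough) and absorbing the additional contribution from the $E$-dependence of $\mu_E$ in the test function, one obtains \eqref{HoldLyap} with exponent $\alpha$ equal to $\beta$.

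The main obstacle, and the only genuinely new point compared to the Schrödinger case treated in~\cite{Bou09}, is to establish the Le~Page contraction and moment estimates on the proper invariant subspace $\mathsf{P}(L_N)$ under the weaker notion of $(J,b_1,\ldots,b_L)$-$L_N$-strong irreducibility, rather than the classical $L_p$-strong irreducibility in $\Lambda^p\C^{2N}$. In the $\SpC$-situation of Theorem~\ref{thm:loc_criterion1} this reduction is essentially automatic via the embedding $\pi$ and~\cite[Proposition~A.IV.3.4]{BL}; in the $\SpONN$-situation of Theorem~\ref{thm:loc_criterion2} one has to redo the Furstenberg-type argument directly on $\mathsf{P}(L_{2p})$, which is the content of Proposition~\ref{prop:sepLyap}. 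Once that ingredient is in place, the rest of the argument is a routine transcription of Le~Page's theorem and does not require further new input.
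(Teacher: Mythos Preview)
Your sketch follows the same Le~Page strategy that the paper relies on: the paper simply packages the spectral-gap argument into a citation of Theorem~1 of~\cite{Bou08} (details in~\cite[Section~6.3]{theseHB}), and then separately verifies, via two Gr\"onwall lemmas for solutions of the Dirac ODE, the two quantitative inputs that drive the perturbation argument, namely the uniform bound $\|\Lambda^N T_{\omega^{(n)}}(E)\|^2\leq C_1$ and the Lipschitz estimate $\|\Lambda^N T_{\omega^{(n)}}(E)-\Lambda^N T_{\omega^{(n)}}(E')\|^2\leq C_2|E-E'|$, uniformly in $n$, $\omega$ and $E\in I$. You allude to this via ``$E\mapsto T_{\omega^{(0)}}(E)$ is real-analytic'', but the argument needs these explicit uniform bounds, not just analyticity in $E$ for each fixed $\omega$; this is a small gap in your write-up that is easily filled.

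The more substantive point is a mismatch between the hypotheses you use and those the theorem actually assumes. Your spectral-gap step explicitly invokes $N$-contractivity and $(J,b_1,\ldots,b_L)$-$L_N$-strong irreducibility of $G(E)$, which are the hypotheses of Theorems~\ref{thm:loc_criterion1}--\ref{thm:loc_criterion2}, not of Theorem~\ref{thm:regu_Lyap}: the latter assumes only the four abstract conditions $\mathbf{(L_1^{(N)})}$--$\mathbf{(L_4^{(N)})}$ (invariance of $L_N$, convergence to the top sum for every nonzero $x\in L_N$, uniqueness of the invariant measure with the integral formula, and positivity). The whole point of the paper's formulation is the observation that, upon careful inspection of the proof in~\cite{Bou08}, contractivity and strong irreducibility enter only through these four consequences, so $\mathbf{(L^{(N)})}$ alone (plus the Gr\"onwall estimates) suffices. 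As written, your argument proves the result under strictly stronger assumptions than stated; to match the theorem you must either reproduce that inspection of~\cite{Bou08} and show the Le~Page machinery runs on $\mathbf{(L^{(N)})}$ directly, or restate the theorem with the stronger hypotheses you actually use.
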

We prove this first result in Section~\ref{sec:reguLyap}. Note that~\eqref{HoldLyap} holds not only on intervals where the hypotheses of Theorem~\ref{thm:regu_Lyap} are satisfied but also trivially on intervals where $\sum_{p=1}^N\gamma_p$ is identically 0. A singularity can only occur at the boundary between such intervals.

As in \cite{Bou09,Z23}, the next step towards localization is to prove H\"older regularity of the integrated density of states. We get this regularity from the one of the Lyapunov exponents. We define the density of states in the same way as in \cite{Z23}.

Given $x\in\R$ and $L>0$, we define the operator $\DONxL$, called the \emph{restricted operator with Dirichlet boundary condition} to the interval $\Lambda_L(x):=(x-\ell L,x+\ell L)$, as the operator acting as $\DO $ on the domain
\begin{equation*}
\Dom\left(\DONxL \right):=\left\{\psi=\left(\begin{smallmatrix}
                          \psi^\uparrow\\\psi^\downarrow
                         \end{smallmatrix}\right)\in H^1\left(\Lambda_L(x),\R^{2N}\right)\text{ such that } \psi^\uparrow(x-\ell L)=\psi^\uparrow(x+\ell L)=0\right\}.\end{equation*}
                         We will use the notation $\DONL$ for $\DONOL$.

                         We define the density of states in the following way.
\begin{defi}
 For all compactly supported continuous functions $\phi$, the \emph{density of states} is 
 \[\nu(\phi)=\lim_{L\to\infty}\frac{1}{2\ell L}\ec(\tr(\phi(\DONL))).\]
\end{defi}
The well-definedness of this limit and its independence from $\omega$, consequence of the ergodic theorem, are proven in \cite[Appendix~B]{Z23} in the case $N=1$. The generalization to bigger $N$ is straightforward.

The function $\nu$ is a positive linear functional on the space of compactly supported continuous functions. By the Riesz-Markov theorem, it can be seen as an integral with respect to  some Borel measure on $\R$, which will be denoted by $\nu$ too.
The integrated density of states is defined on $\R$ by
\begin{equation}
 F(E):=\left\{\begin{array}{l}
        -\nu((E,0])\text{ if }E<0\\
        \nu((0,E])\text{ if } E\geq 0 
        \end{array}
 \right.
\end{equation}

We prove that, when the  sum of the Lyapunov exponents is H\"older continous, the integrated density of states is H\"older continuous as well:
\begin{thm}\label{thm:reguDOS}
Let $I$ be a compact interval and $\tilde{I}$ an open interval containing $I$. If the sum $\sum_{i=p}^N\gamma_p$ is H\"older continuous on $\tilde{I}$, then there exist $C'>0$ and $\alpha'>0$ such that
\begin{equation}\label{eq:DOS_Hold}
\forall E, E'\in I,\ |F(E)-F(E')|\leq C'|E-E'|^{\alpha'}.
\end{equation}
\end{thm}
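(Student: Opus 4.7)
The plan is to derive the H\"older regularity of the integrated density of states $F$ from that of the sum of Lyapunov exponents in two main steps: first, establish a Thouless-type formula relating the two quantities via a logarithmic potential, then invoke a Craig--Simon type harmonic-analytic argument to transfer the regularity from one to the other.

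For the Thouless formula, I would adapt the $N=1$ argument of~\cite{Z23} to the quasi-one-dimensional Dirac setting. The eigenvalues of the restricted operator $\DONL$ are precisely the energies $E$ at which an appropriate $N\times N$ block of the product $\Phi_E(2L,\omega) = T_{\omega^{(L-1)}}(E)\cdots T_{\omega^{(-L)}}(E)$ becomes singular, the relevant block being dictated by the Dirichlet condition $\psi^\uparrow(\pm\ell L)=0$. Taking the logarithm of the modulus of its determinant and dividing by $2\ell L$, the left-hand side converges $\pc$-a.s. to $\frac{1}{\ell}\sum_{p=1}^N \gamma_p(E)$ by Proposition~\ref{prop_lyapvps} together with the symplectic symmetry $\gamma_{2N-i+1}=-\gamma_i$, while the right-hand side converges to $\int_\R \log|E-E'|\,d\nu(E')$ via the Riesz--Markov representation of the density of states and the ergodic theorem. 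After an appropriate regularization to handle the growth of $\nu$ at infinity (e.g.\ by subtracting the free analogue built from $\DON$), this yields
\[
\sum_{p=1}^N \gamma_p(E) \;=\; \ell \int_\R \log|E-E'|\,dF(E') \,+\, C,
\]
for some constant $C$ and for $E$ in any compact subset of $\R$.

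For the second step, set $L(E):=\sum_{p=1}^N\gamma_p(E)-C$ and introduce, for $z$ in the upper half-plane, the holomorphic function $w(z)=\int_\R \log(z-E')\,dF(E')$ with the principal branch of the logarithm. Its boundary values on the real line satisfy $\Re w(E+i0)=L(E)/\ell$, while $\Im w(E+i0)$ recovers $-\pi F(E)$ up to an additive constant. Since $\tilde I$ is open and contains the compact $I$, pick $\delta>0$ with $I_\delta:=\{E:\dist(E,I)<\delta\}\subset \tilde I$ and split $dF = \mathbf{1}_{I_\delta}\,dF + \mathbf{1}_{I_\delta^c}\,dF$. The contribution of the far piece to $w$ extends smoothly across $I$, so the problem reduces to the local piece, for which Privalov's theorem for the Hilbert transform guarantees that H\"older regularity of $\Re w$ on $I_\delta$ implies H\"older regularity of $\Im w$ on $I$, with exponent $\alpha'>0$ possibly strictly smaller than $\alpha$. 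This is the Craig--Simon argument as carried out in~\cite{CL90} and~\cite{Bou09} in the Schr\"odinger setting.

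The main obstacle is the Thouless formula: one must correctly identify the Dirichlet determinant $\det(\DONL-E)$ with the determinant of the block of $\Phi_E(2L,\omega)$ encoding the boundary conditions on $\psi^\uparrow$, and justify the regularization needed to make the resulting logarithmic potential meaningful despite the unboundedness of $\nu$ at infinity. Once the Thouless formula is in place, the Craig--Simon step proceeds essentially as in the Schr\"odinger literature, the strict inclusion $I\subset \tilde I$ being exactly what allows the Hilbert-transform argument to be localized to a region where H\"older control is guaranteed.
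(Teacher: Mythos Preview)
Your two-step outline---a Thouless formula followed by a Hilbert-transform/Privalov argument---matches the paper's strategy, and your treatment of the second step is essentially what the paper does, citing~\cite{Bou08} and~\cite{Z23}.

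The divergence is in how the Thouless formula is obtained. The paper does \emph{not} go through the Dirichlet block determinant of $\Phi_E(2L,\omega)$; it works entirely in the upper half-plane via the Kotani $w$-function built from the Weyl--Titchmarsh matrix, following Sadel--Schulz-Baldes~\cite{SSB}. One shows $\sum_p\gamma_p(z)=-\Re w(z)$ and $\partial_z w(z)=\eb(\tr G(z))$ for $\Im z\neq 0$, relates $\eb(\tr G(z))$ to $\int(E-z)^{-1}\,d\nu(E)$, and then takes boundary values to identify $\Im w$ with $\pi(F-F_0)$ up to a constant. The regularisation by $F_0$ and by the factor $|t-\ii|^{-1}$ in $\log\bigl|\tfrac{E-t}{t-\ii}\bigr|$ falls out of this scheme automatically; note that the function you call $w$ is not the paper's $w$.

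Your finite-volume route has a genuine gap. You invoke Proposition~\ref{prop_lyapvps} to assert that $\tfrac{1}{2\ell L}\log\bigl|\det(\text{Dirichlet block of }\Phi_E(2L,\omega))\bigr|$ converges a.s.\ to $\tfrac{1}{\ell}\sum_p\gamma_p(E)$, but that proposition only controls the singular values of $\Phi_E$, equivalently $\|\Lambda^N\Phi_E\|$. The block determinant is a single matrix coefficient $\langle u,(\Lambda^N\Phi_E)v\rangle$ for fixed decomposable $u,v$, and there is no general reason one such coefficient grows at the same exponential rate as the norm. Closing this gap is precisely where an assumption of the type~$\mathbf{(L^{(N)})}$ (contractivity and $L_N$-strong irreducibility) would enter, and that is \emph{not} among the hypotheses of Theorem~\ref{thm:reguDOS}. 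The paper's $w$-function derivation sidesteps this, yielding Proposition~\ref{prop:thouless} for the bare ergodic model. A secondary issue: your displayed identity $\sum_p\gamma_p(E)=\ell\int\log|E-E'|\,dF(E')+C$ is divergent as written, since $F$ grows linearly at both $\pm\infty$ for Dirac; the paper integrates $\log\bigl|\tfrac{E-t}{t-\ii}\bigr|$ against $d(F-F_0)$, with $|F-F_0|$ uniformly bounded, and this is what makes the formula meaningful.
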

The proof of Theorem~\ref{thm:reguDOS} is a combination of arguments of~\cite{Z23}, \cite{Bou08} and~\cite{SSB}. We give the main steps of it in Section~\ref{sec:reguDOS}.

The previous results will enable us to prove Anderson and dynamical localization for random Dirac operators, under the hypotheses of Theorem~\ref{thm:loc_criterion1} or~\ref{thm:loc_criterion2}. We use a method called \emph{bootstrap multiscale analysis}, which has been developed by Germinet and Klein in \cite{GKbootstrap} for a large class of operators. We present it in detail in Section~\ref{subsec:MSA}. In Sections~\ref{subsec:wegner} and~\ref{subsec:ILSE}, we explain how we get the conditions needed for the multiscale analysis from what we have proved before.


\subsection{Lyapunov exponents associated with subgroups of $\SpONN$}
We want to prove here that if a group $G$ is $2p$-contracting and $(J,S)$-$L_{2p}$-strongly irreducible for every $2p$ in $\{1,\dots,N\}$, then for all sequence of i.i.d. random matrices in $G$, 
\begin{enumerate}
    \item if $N$ is even, then $\mathbf{(L^{(N})}$ holds and the $N$-th Lyapunov exponent associated with the sequence is positive;
    \item if $N$ is odd, then $\mathbf{(L^{(N-1)})}$ holds and the $N$-th Lyapunov exponent associated with the sequence is zero.
\end{enumerate}
 We will use the properties of $\SpONN$ proven in Appendix~\ref{app:G5}. First, we prove that when $G$ is included in $\SpONN$,  all Lyapunov exponents are twice degenerate.
\begin{prop}\label{degen_gamma}
Let $(A^\omega_n)_{n\in\Z}$ be an i.i.d. sequence of matrices in $\SpONN$ and $(\gamma_1,\dots,\gamma_{2N})$ the associated family of Lyapunov exponents. Then, for all $p\leq N$, $\gamma_{2p}=\gamma_{2p-1}$. 
\end{prop}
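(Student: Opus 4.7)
The plan is to exhibit a real matrix $R$ satisfying $R^{2}=-I_{2N}$ which commutes with every element of $\SpONN$, and then use this extra symmetry to force the singular values---hence the Lyapunov exponents---to come in pairs.

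The natural candidate is $R:=SJ$. I would first establish two algebraic facts: (i) $S$ and $J$ commute (a direct block computation yields $SJ=JS=\left(\begin{smallmatrix} 0 & -K \\ K & 0 \end{smallmatrix}\right)$), so that $R^{2}=S^{2}J^{2}=-I_{2N}$, using $S^{2}=I_{2N}$ (because $K^{2}=I_{N}$) and $J^{2}=-I_{2N}$; (ii) every $M\in\SpONN$ commutes with $R$. Point (ii) follows from rewriting the two defining relations $^{t}MJM=J$ and $^{t}MSM=S$ as $^{t}M=JM^{-1}J^{-1}$ and $^{t}M=SM^{-1}S^{-1}$, equating them, and rearranging to obtain $(SJ)M=M(SJ)$. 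These facts presumably also appear among the structural properties of $\SpONN$ gathered in Appendix~\ref{app:G5}.

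I would then propagate the commutation to the cocycle. Each product $\Phi_{E}(n,\omega)$ lies in $\SpONN$, hence commutes with $R$; the same holds for its transpose, because $^{t}S=S$ and $^{t}J=-J$ so that $^{t}R=-R$, which commutes with $^{t}M$ by transposing the previous identity. Consequently $P_{n}:={}^{t}\Phi_{E}(n,\omega)\,\Phi_{E}(n,\omega)$ commutes with $R$, and every eigenspace of this real symmetric operator is $R$-invariant. Since $R^{2}=-I_{2N}$, the matrix $R$ has no real eigenvalue and endows each $R$-stable real subspace with a complex structure (with $R$ acting as multiplication by $i$), forcing its real dimension to be even. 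Therefore every eigenvalue of $P_{n}$ has even multiplicity, which is precisely the statement that $s_{2p-1}(\Phi_{E}(n,\omega))=s_{2p}(\Phi_{E}(n,\omega))$ for every $p\in\{1,\ldots,N\}$ and every $n,\omega$.

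To conclude, I would invoke Proposition~\ref{prop_lyapvps}, which identifies each Lyapunov exponent $\gamma_{p}$ with the almost-sure limit of $\tfrac{1}{|n|}\log s_{p}(\Phi_{E}(n,\omega))$; the pairwise equality of singular values passes to the limit and yields $\gamma_{2p-1}=\gamma_{2p}$. The only delicate step is the algebraic check that $R$ commutes with every element of $\SpONN$: it is crucial to use \emph{both} defining relations simultaneously, and the compatibility $SJ=JS$ is what makes $R$ well-defined as a complex structure. Beyond that, the proof is the standard principle that a real operator commuting with an almost complex structure has even-dimensional real eigenspaces.
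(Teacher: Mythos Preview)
Your proof is correct. Both your argument and the paper's rest on the same underlying algebraic fact --- that $JS$ commutes with every element of $\SpONN$ and squares to $-I_{2N}$ --- but the paper invokes it indirectly, citing the full Cartan-type decomposition of Proposition~\ref{prop:decompG}(4), whose proof in Appendix~\ref{app:G5} builds an explicit orthonormal eigenbasis of ${}^{t}MM$ grouped into quadruples $(v,Jv,Sv,JSv)$ with paired eigenvalues. Your route is more economical: you extract only what is needed (even multiplicity of each eigenvalue of ${}^{t}MM$) directly from the complex-structure observation, without constructing the basis. The paper's approach has the advantage that the decomposition of Proposition~\ref{prop:decompG} is reused elsewhere (notably in Propositions~\ref{prop:sepLyap} and~\ref{prop:G5CSI}), so the extra structural work is not wasted; yours is self-contained and would stand on its own if Proposition~\ref{degen_gamma} were the only goal.
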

\begin{proof}
We see from Proposition~\ref{prop:decompG}~(4) that the singular values of a matrix in $\SpONN$ always have multiplicity at least 2. This together with~\cite[Proposition~A.III.5.6]{BL} implies the degeneracy of the Lyapunov exponents.
\end{proof}

We denote, for even $N$, $N=:2d$ and, for odd $N$, $N=:2d+1$. 
The space $(J,S)$-$L_{p}$, introduced in Section~\ref{sec:loc-crit}, can be described in the following way.
 For $p\leq d$, we define \begin{equation}\label{def:f1}f_1:=\frac{1}{2^p}(e_1+e_2)\wedge\dots\wedge(e_{2p-1}+e_{2p})\wedge(e_{N+1}-e_{N+2})\wedge\dots\wedge(e_{N+2p-1}-e_{N+2p})\end{equation}
 and
 \begin{equation}\label{def:Lp}
           (J,S)\mbox{-}L_{2p}=\s\{\Lambda^{2p}Mf_1\, |\,M\in \SpONN\}.
           \end{equation}
          
We see in particular that $(J,S)$-$L_{2d}$ is stable under the action of $\Lambda^{2d}g$: $\mathbf{(L_1^{(2d)}})$ is satisfied. We prove $\mathbf{(L_2^{(2d)}})$--$\mathbf{(L_4^{(2d)}})$ with the following result.

\begin{prop}\label{prop:sepLyap}
 Let $(A_n^\omega)$ be a sequence of i.i.d. random matrices in $\SpONN$, of common law  $\mu$, and $p$ an integer between 1 and $d$. We denote by $G_\mu$ the Furstenberg group associated with the sequence $(A_n^\omega)$. We assume that it is $2p$-contracting and $(J,S)$-$L_{2p}$-strongly irreducible and that $\E(\log\|A_0^\omega\|)<\infty$. Then, 
 \begin{enumerate}
 \item for all $x\neq0$ in $L_{2p}$, \begin{equation}\label{conv_as}
     \lim_{n\to\infty}\frac1n\log\|\Lambda^{2p}(A_{n-1}^\omega\dots A_0^\omega)x\|=\sum_{i=1}^{2p}\gamma_i\text{ almost surely};
 \end{equation} 
 \item there exists a unique probability measure $\nu_{2p}$ on $\mathbb{P}(L_{2p})$ which is $\mu$-invariant,  such that
 \begin{equation}\label{inv_meas}
     \gamma_1+\dots+\gamma_{2p}=\int_{\SpONN\times \mathbb{P}(L_{2p})}\log\frac{\|(\Lambda^{2p}g)x\|}{\|x\|}\rd \mu(g)\rd \nu_{2p}(\bar{x}); \end{equation}
      \item \begin{equation}\label{sep_Lyap}
  \gamma_{2p}>\gamma_{2p+1}.
  \end{equation}
 \end{enumerate}
\end{prop}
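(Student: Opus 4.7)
The plan is to transpose the standard Furstenberg-type argument used for the real symplectic group in \cite[Propositions~A.IV.3.4 and A.V.2.5]{BL} to the group $\SpONN$ acting via $\Lambda^{2p}$ on the subspace $(J,S)$-$L_{2p}$. The structural results about $\SpONN$ proved in Appendix~\ref{app:G5} will play the role of the corresponding facts about the symplectic group and Lagrangian subspaces in that reference, and we use \eqref{def:Lp} as the working definition of $(J,S)$-$L_{2p}$.

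First I would observe, directly from \eqref{def:Lp}, that $(J,S)$-$L_{2p}$ is stable under the action of $\Lambda^{2p}g$ for every $g\in G_\mu\subset\SpONN$, so that $X:=\mathbb{P}((J,S)\text{-}L_{2p})$ is a compact $G_\mu$-space. By Kakutani--Markov there exists at least one $\mu$-invariant probability measure $\nu_{2p}$ on $X$. To prove uniqueness, I would use $2p$-contractivity to choose a sequence $(M_n)\subset G_\mu$ with $\|\Lambda^{2p}M_n\|^{-1}\Lambda^{2p}M_n$ converging to a rank-one operator $\pi$ whose image is a line $\C v_0\subset (J,S)$-$L_{2p}$, and then run the standard argument: any $\mu$-invariant probability measure is proper (i.e.\ gives no mass to any finite union of proper projective subspaces of $X$) thanks to $(J,S)$-$L_{2p}$-strong irreducibility, and properness together with the push-forward by $\Lambda^{2p}M_n$ forces any two such measures to coincide with $\delta_{\bar v_0}$ after composition with $\pi$, hence to be equal. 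This gives the uniqueness claim in item~(2), and the integral representation \eqref{inv_meas} then follows from the Furstenberg formula \cite[Theorem~A.III.4.3]{BL} applied to the cocycle $(\omega,\bar x)\mapsto \Lambda^{2p}A_0^\omega\bar x$ above $(X,\nu_{2p})$.

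For item~(1), the upper bound $\limsup \frac1n\log\|\Lambda^{2p}S_n(\omega)x\|\leq \sum_{i=1}^{2p}\gamma_i$ for $S_n:=A_{n-1}^\omega\cdots A_0^\omega$ is the standard consequence of Oseledets / Kingman's subadditive theorem combined with Proposition~\ref{prop_lyapvps}. For the matching lower bound at an arbitrary $x\in (J,S)\text{-}L_{2p}\setminus\{0\}$, I would follow the strategy of \cite[Lemma~A.V.2.6]{BL}: the empirical measures $\frac1n\sum_{k=0}^{n-1}\delta_{\overline{\Lambda^{2p}S_k(\omega)x}}$ have all their weak limits $\mu$-invariant and supported in $X$, hence equal to the unique $\nu_{2p}$; integrating $\log\|\Lambda^{2p}g\,\cdot\|/\|\cdot\|$ against these measures and using \eqref{inv_meas} yields the desired lower bound, and thus the almost sure equality~\eqref{conv_as}.

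Finally, item~(3) would be obtained by applying the criterion \cite[Proposition~III.6.1 and Theorem~A.V.4.11]{BL} (once translated into the $\SpONN$-setting) which says that $2p$-contractivity plus $(J,S)$-$L_{2p}$-strong irreducibility implies the strict simplicity $\gamma_{2p}>\gamma_{2p+1}$: if equality held, the cocycle acting on $(J,S)$-$L_{2p}$ could not produce a rank-one limit as required by contractivity, since an additional independent direction of maximal growth would survive in the normalized $\Lambda^{2p}M_n$. The main obstacle I expect is entirely in this translation step: one must make sure that the orbit/stabilizer description of the action of $\SpONN$ on $(J,S)$-$L_{2p}$ provided by Appendix~\ref{app:G5} is rich enough to replace the classical Lagrangian geometry used in \cite{BL} at every invocation (existence of enough proximal elements, properness of limit measures, and the separation-of-simple-exponents criterion). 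Once that dictionary is set up, the three conclusions follow almost verbatim from the arguments in \cite[Appendices~A.III--A.V]{BL}.
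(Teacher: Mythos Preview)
Your overall strategy---restrict the $\Lambda^{2p}$-action to the invariant subspace $(J,S)\mbox{-}L_{2p}$ and invoke the standard contracting/strongly-irreducible theory of \cite{BL}---is precisely the paper's strategy. But the ``translation step'' you defer to the last paragraph is where essentially all the content lies, and your sketch does not carry it out.

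Concretely: the standard theory applied to the restricted cocycle $\widehat{A}_n^\omega:=\Lambda^{2p}A_n^\omega\big|_{(J,S)\mbox{-}L_{2p}}$ yields (a) $\hat\gamma_1>\hat\gamma_2$ for its top two exponents, and (b) the almost-sure convergence and Furstenberg integral formula with $\hat\gamma_1$ on the left-hand side. To obtain items (1) and (2) as stated you still need $\hat\gamma_1=\sum_{i=1}^{2p}\gamma_i$, i.e.\ that restriction to $(J,S)\mbox{-}L_{2p}$ does not lower $\|\Lambda^{2p}M\|$ for $M\in\SpONN$. This is not automatic---a priori the top singular direction of $\Lambda^{2p}M$ could lie outside the subspace. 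The paper proves it via the $KRU$-type decomposition of $\SpONN$ (Proposition~\ref{prop:decompG}(5)) and the particular choice of $f_1$ in~\eqref{def:f1}, checking that $\|\Lambda^{2p}Rf_1\|=\e^{2t_1}\cdots\e^{2t_p}=\|\Lambda^{2p}M\|$.

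For item (3) your heuristic is the wrong direction. The inequality $\hat\gamma_1>\hat\gamma_2$ says nothing about $\gamma_{2p}$ versus $\gamma_{2p+1}$ until one bounds $\hat\gamma_2$ \emph{from below} in terms of the $\gamma_i$. The paper does this by exhibiting an explicit unit vector $f_2\in(J,S)\mbox{-}L_{2p}$ orthogonal to $f_1$ (produced by writing down a concrete $M_0\in\SpONN$ with $\Lambda^{2p}M_0f_1=f_2$) such that $\|\Lambda^2\widehat R(f_1\wedge f_2)\|=\|\widehat M\|\,\e^{2t_1}\cdots\e^{2t_{p-1}}\e^{2t_{p+1}}$ when $p<d$, and an analogous construction when $p=d$. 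This gives $\hat\gamma_2\geq\sum_{i\leq 2p-2}\gamma_i+2\gamma_{2p+1}$ (respectively $\hat\gamma_2\geq\sum_{i\leq 2d-2}\gamma_i$), and only then does $\hat\gamma_1>\hat\gamma_2$, combined with the degeneracy $\gamma_{2p-1}=\gamma_{2p}$ from Proposition~\ref{degen_gamma}, yield $\gamma_{2p}>\gamma_{2p+1}$. Your sentence ``an additional independent direction of maximal growth would survive'' is a statement about $\hat\gamma_1$ versus $\hat\gamma_2$, not about $\gamma_{2p}$ versus $\gamma_{2p+1}$, so it does not close the gap.
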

Points (1) and (2) of Proposition~\ref{prop:sepLyap} for $p=d$ directly give $\mathbf{(L_2^{(2d)}})$ and $\mathbf{(L_3^{(2d)}})$. Moreover, Point (3) implies that $\gamma_{2d}>\gamma_{2d+1}$. If $N=2d$ is even, we have $\gamma_{N+1}=-\gamma_N$ and if $N=2d+1$ is odd, $\gamma_N=0$ by Proposition \ref{degen_gamma}. In both cases this implies that $\gamma_{2d}>0$ and \emph{a fortiori} $\sum_{p=1}^{2d}\gamma_p>0$ which is $\mathbf{(L_4^{(2d)}})$.

\begin{proof}[Proof of Proposition~\ref{prop:sepLyap}]
 Let $k$ be the dimension of $(J,S)$-$L_{2p}$. Given  an orthonormal basis $(f_1,\dots,f_k)$ of $(J,S)$-$L_{2p}$ ($f_1$ being defined by~\eqref{def:f1}), for each $M\in \SpONN$, we denote by $\widehat{M}$ the matrix of $\GL_k(\R)$ such that
 \begin{equation}
  \widehat{M}_{ij}=\langle f_i, \Lambda^{2p}Mf_j\rangle.
 \end{equation}

Let us now denote by $\widehat{G}_\mu$ the subgroup of $\GL_k(\R)$ which is generated by all $(\widehat{A}_0^\omega)$. We want to prove that $\widehat{G}_\mu$ satisfies the hypotheses of theorems given in Chapter~A.III of~\cite{BL}, which will give us information on the two top Lyapunov exponents associated with the sequence $(\widehat{A}_n^\omega)$. In a second time, we will establish a relation between the Lyapunov exponents associated with the sequence $(\widehat{A}_n^\omega)$  of matrices of $\GL_k(\R)$ and partial sums of the Lyapunov exponents associated with the sequence $(A_n^\omega)$ of matrices of $\SpONN$. 

Since $G_\mu$ is $(J,S)$-$L_{2p}$-strongly irreducible, $\widehat{G}_\mu$ is strongly irreducible, as a subset of $\GL_k(\R)$. Similarly, as $G_\mu$ is $2p$-contracting, $\widehat{G}_\mu$ is contracting. We can thus apply~\cite[Theorem~A.III.6.1]{BL} and find, denoting by $\hat{\gamma}_1$ and $\hat{\gamma}_2$ the two highest Lyapunov exponents associated with the sequence $(\widehat{A}_n^\omega)$, that \begin{equation}\label{sep_hat_gamma}\hat{\gamma}_1>\hat{\gamma}_2.\end{equation}
Moreover, Corollary~A.III.3.4 of~\cite{BL} gives that, for all $x\neq0$ in $\R^k$, \begin{equation}\label{hat_conv_as}
     \lim_{n\to\infty}\frac1n\log\|(\widehat{A}_{n-1}^\omega\dots \widehat{A}_0^\omega)x\|=\hat{\gamma}_1 \text{ almost surely}
 \end{equation} 
and, together with Theorem~A.III.4.3 of the same~\cite{BL}, that  there exists a unique probability measure $\hat{\nu}$ on $\mathbb{P}(\R^k)$ which is $\mu$-invariant,  such that
 \begin{equation}\label{hat_inv_meas}
     \hat{\gamma_1}=\int_{\SpONN\times \mathbb{P}(\R^k)}\log\frac{\|\hat{g}x\|}{\|x\|}\rd \mu(g)\rd \hat{\nu}(\bar{x}).
 \end{equation}

We now link the two Lyapunov exponents $\hat{\gamma}_1$ and $\hat{\gamma}_2$ with the Lyapunov exponents associated with the sequence $(A_n^\omega)$. In view of Definition~\eqref{eq_def_lyap_exp}, we have to estimate $\|\widehat{M}\|$ and $\|\Lambda^2\widehat{M}\|$ for matrices $\widehat{M}\in \widehat{G}_\mu$.

For such a matrix, we consider $t_1\geq\dots\geq t_p$  the values given in the decomposition of Proposition~\ref{prop:decompG}~(v). Let us prove that $\|\widehat{M}\|=\|\Lambda^{2p}M\|=\e^{2t_1}\dots \e^{2t_p}$: this will imply that $\hat{\gamma_1}=\sum_{i=1}^{2p}\gamma_i$. Since $(J,S)$-$L_{2p}$ is a subspace of $\Lambda^{2p}\R^{2N}$, we have that $\|\widehat{M}\|\leq\|\Lambda^{2p}M\|$.
 To prove the other inequality, recall that $f_1$, defined in~\eqref{def:f1}, is in $(J,S)$-$L_{2p}$.  As a consequence, we have, using the notation of Proposition~\ref{prop:decompG}~(5),
 \[\|\Lambda^{2p}M\|=\e^{2t_1}\dots \e^{2t_p}=\|\Lambda^{2p}Rf_1\|=\|\widehat{R}f_1\|\leq\|\widehat{R}\|\leq\|\Lambda^{2p}R\|=\|\Lambda^{2p}M\|.\] We used~\cite[Proposition~A.III.5.3]{BL} for the first equality and the fact that $(J,S)$-$L_{2p}$ is a subspace of $\Lambda^{2p}\R^{2N}$ for the last inequality. This implies that $\|\Lambda^{2p}M\|=\|\widehat{R}\|$. But, since the matrices $K$ and $U$ are orthogonal and in $\SpONN$, 
 $\|\widehat{R}\|=\|\widehat{K}\widehat{R}\widehat{U}\|=\|\widehat{M}\|$. As a consequence, \begin{equation}\label{est_gamma_1}\hat{\gamma}_1=\sum_{i=1}^{2p}  \gamma_i.\end{equation} 
 This last equality together with~\eqref{hat_conv_as} implies~\eqref{conv_as}. Together with~\eqref{hat_inv_meas}, \eqref{est_gamma_1} implies~\eqref{inv_meas}.
 
 In order to prove~\eqref{sep_Lyap}, we now estimate $\|\Lambda^2\widehat{M}\|$. We first notice that, since  $(J,S)$-$L_{2p}$ is a vector subspace of $\Lambda^{2p}\R^{2N}$, $\|\Lambda^2[\Lambda^{2p}M]\|\geq\|\Lambda^2\widehat{M}\|$.
 
 We begin with the case $p<d$. We choose
 \begin{align}
  f_2:=\frac{1}{2^p}&(e_1+e_2)\wedge\dots\wedge(e_{2p-3}+e_{2p-2})\wedge(e_{2p+1}+e_{2p+2})\\&\wedge(e_{N+1}-e_{N+2})\wedge\dots\wedge(e_{N+2p-3}-e_{N+2p-2})\wedge(e_{N+2p+1}-e_{N+2p+2}).\nonumber
 \end{align} 

 Let
\begin{equation}
M_0:= \left( \begin{matrix}
\begin{matrix}
I_{\mathrm{2p-2}}  & &  \\
 & \left( \begin{matrix}
 0 & I_2   \\
I_2 & 0  \end{matrix} \right) & \\
&  & I_{\mathrm{N-2p-2}}
\end{matrix} &\vline &  0 \\ \hline
0 &\vline &\begin{matrix}
I_{\mathrm{2p-2}}  & &  \\
 & \left( \begin{matrix}
 0 & I_2   \\
I_2 & 0  \end{matrix} \right) & \\
&  & I_{\mathrm{N-2p-2}}
\end{matrix}      
\end{matrix} \right). 
\end{equation}

Then $M_0\in \SpONN$ and $\wedge^{2p} M_0 f_1 = f_2$, hence $f_2\in (J,S)$-$L_{2p}$ and it is easy to see that $f_1$ and $f_2$ are orthogonal.
 Consequently, we can complete $(f_1,f_2)$ into an orthonormal basis $(f_1,\dots,f_k)$ of $\GL_k(\R)$ and use it in the definition of $\widehat{M}$.

We find that
\begin{equation}
 \|\Lambda^2\widehat{M}\|=\|\Lambda^2\widehat{R}\|\geq \|\Lambda^2\widehat{R}(f_1\wedge f_2)\|=\|\widehat{M}\|\e^{2t_1}\dots \e^{2t_{p-1}}\e^{2t_{p+1}}.
\end{equation}
In terms of Lyapunov exponents, this implies that $\hat{\gamma_1}+\hat{\gamma}_2\geq\hat{\gamma_1}+\sum_{i=1}^{2(p-1)}\gamma_i+2\gamma_{2p+1}$. This together with~\eqref{sep_hat_gamma} and \eqref{est_gamma_1} implies that $\gamma_{2p}>\gamma_{2p+1}$.

When $p=d$, we prove that $\gamma_{2d}>0$. This implies~\eqref{sep_Lyap} since, when $N$ is even, $\gamma_{2d+1}=\gamma_{N+1}\leq0$ and, when $N$ is odd, $\gamma_{2d+1}=\gamma_N=0$ as we have already seen.  We define 
\begin{align}
  f_2:=\frac{1}{2^d}&(e_1+e_2)\wedge\dots\wedge(e_{2d-3}+e_{2d-2})\wedge(e_{2d-1}-e_{2d})\\&\wedge(e_{N+1}-e_{N+2})\wedge\dots\wedge(e_{N+2d-3}-e_{N+2d-2})\wedge(e_{N+2d-1}+e_{N+2d}).\nonumber
 \end{align}
If we introduce 
 \begin{equation}
M_0':= \left( \begin{matrix}
\begin{matrix}
I_{\mathrm{N-1}}  &  \\
  & -1
\end{matrix} &\vline &  0 \\ \hline
0 &\vline &\begin{matrix}
I_{\mathrm{N-1}}  &  \\
  & -1
\end{matrix}      
\end{matrix} \right)\text{ (even $N$) or } 
M_0':= \left( \begin{matrix}
\begin{matrix}
I_{\mathrm{N-2}}  & &  \\
  & -1& \\ & &1
\end{matrix} &\vline &  0 \\ \hline
0 &\vline &\begin{matrix}
I_{\mathrm{N-2}}  & &  \\
  & -1& \\ & &1
\end{matrix}      
\end{matrix} \right)\text{ (odd $N$)}
\end{equation}
then $M_0'\in \SpONN$ and $f_2=\Lambda^{2d}f_1$ so $f_2\in (J,S)$-$L_{2p}$ with $f_2$ orthogonal to $f_1$.
As a consequence,
\begin{equation}
 \|\Lambda^2\widehat{M}\|=\|\Lambda^2\widehat{R}\|\geq \|\Lambda^2\widehat{R}(f_1\wedge f_2)\|=\|\widehat{M}\|e^{2t_1}\dots e^{2t_{d-1}}.
\end{equation}
In terms of Lyapunov exponents, this means that $\hat{\gamma}_2\geq \sum_{i=1}^{2(d-1)}\gamma_i$, which, together with \eqref{sep_hat_gamma} and \eqref{est_gamma_1}, implies that $\gamma_{2d}>0$.
\end{proof}

\subsection{Regularity of the Lyapunov exponents}\label{sec:reguLyap}
The goal of this section is to prove Theorem~\ref{thm:regu_Lyap}. It is a consequence of the following result. First, we note that the definitions of Lyapunov exponents and Furstenberg  group can be extended to the case of a sequence of i.i.d. random matrices in $\mathrm{GL}_{\mathrm{2N}}(\C)$. Hence one can state hypothesis $\mathbf{(L^{(N)})}$ not only for transfer matrices associated with an ergodic family of operators.

\begin{thm}\label{thm:reguLyap}
We fix a compact interval $I\subset \R$. Let $(A_n^\omega(E))_{n\in \Z}$ be a sequence of i.i.d. random  matrices in  $\SpC$ depending on a parameter $E$ in $I$. For each $E$, let $\mu_E$ be the common distribution of the $A_n^\omega(E)$.  We assume that, for all $E\in I$, $\ec(\log\|A_0^\omega(E)\|)<\infty$. We also assume that  Assumption~$\mathbf{(L^{(N)})}$ holds  on $I$, or that  Assumption~$\mathbf{(L^{(N-1)})}$ holds and $\gamma_N$ is identically $0$  on $I$. Moreover, we assume that,
\begin{enumerate}
 \item there exists  $C_1>0$ independent of $E$, $\omega$ and $n$ such that, for every $E\in I$,
 \begin{equation}\label{Gronwall1}
 \|\Lambda^NA_n^\omega(E)\|^2\leq C_1;
 \end{equation}
 \item there exists $C_2>0$ independent of $E$, $\omega$ and $n$ such that for every $E,E'\in I$
 \begin{equation}\label{Gronwall2}
 \|\Lambda^NA_n^\omega(E)-\Lambda^NA_n^\omega(E')\|^2\leq C_2|E-E'|.
 \end{equation}
 Then, there exist $C>0$ and $\alpha>0$ such that
 \begin{equation}
   \forall E, E'\in I,\quad|\gamma_1(E)+\dots+\gamma_N(E)-\gamma_1(E')-\dots-\gamma_N(E')|\leq C|E-E'|^\alpha.
 \end{equation}
\end{enumerate}

\end{thm}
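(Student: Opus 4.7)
The plan is to apply a Le~Page-type approach, following the scheme of Bougerol--Lacroix~\cite{BL} and its adaptations in~\cite{CL90, Bou08, Bou09}. The integral formula provided by Assumption $\mathbf{(L_3^{(N)})}$ expresses the quantity of interest $\Gamma(E) := \sum_{i=1}^N \gamma_i(E)$ as
$$\Gamma(E) = \int_{G(E) \times \mathbb{P}(L_N)} \log \frac{\|(\Lambda^N g)\, x\|}{\|x\|}\, d\mu_E(g)\, d\nu_{N,E}(\bar{x}),$$
where $\nu_{N,E}$ is the unique $\mu_E$-invariant probability measure on $\mathbb{P}(L_N)$. This converts a question about limits of growth rates into a question about the joint regularity in $E$ of $\mu_E$, $\nu_{N,E}$, and the integrand.

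First I would introduce the Markov operator $P_E$ acting on functions on $\mathbb{P}(L_N)$ by $P_E f(\bar{x}) = \int f(\overline{(\Lambda^N g)\, x})\, d\mu_E(g)$. The heart of the argument is to show that, on a Banach space $\mathcal{H}_\beta$ of $\beta$-Hölder functions on $\mathbb{P}(L_N)$ for sufficiently small $\beta > 0$, the operator $P_E$ has a spectral gap uniform in $E \in I$: there exist $C > 0$ and $\rho \in (0,1)$ such that $\|P_E^n f - \int f\, d\nu_{N,E}\|_{\mathcal{H}_\beta} \leq C\rho^n \|f\|_{\mathcal{H}_\beta}$ for every $f \in \mathcal{H}_\beta$ and every $E \in I$. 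This is a Le~Page-type statement, and Assumption $\mathbf{(L^{(N)})}$, encoding $N$-contractivity and the appropriate irreducibility on $L_N$, is exactly what is needed, much as in~\cite[Chapter~V]{BL}. In the $\SpC$ setting one transports everything to $\mathcal{M}_{4N}(\R)$ via the real-imaginary embedding $\pi$ recalled in the introduction; in the $\SpONN$ setting one uses Proposition~\ref{prop:sepLyap} to produce the required irreducibility on the invariant subspace.

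Next, hypotheses (1) and (2) translate into quantitative continuity of the law $\mu_E$ in $E$: since $\Lambda^N A_n^\omega(E)$ is uniformly bounded and Lipschitz in $E$, the operator-valued map $E \mapsto P_E$ is Lipschitz from $I$ into $\mathcal{L}(\mathcal{H}_\beta, C(\mathbb{P}(L_N)))$, and by interpolation with the uniform bound it is Hölder from $I$ into $\mathcal{L}(\mathcal{H}_\beta)$ with some exponent $\alpha_1 > 0$. The usual perturbation argument for operators with a spectral gap (in the spirit of Keller--Liverani, or directly as in~\cite[Chapter~V]{BL}) then shows that $E \mapsto \int f\, d\nu_{N,E}$ is Hölder for every $f \in \mathcal{H}_\beta$. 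Applied to the integrand $\bar{x} \mapsto \int \log\bigl(\|(\Lambda^N g) x\|/\|x\|\bigr)\, d\mu_E(g)$, which lies in $\mathcal{H}_\beta$ uniformly in $E$ by (1), and combined with the Lipschitz dependence on $E$ inside the integral by (2), this yields the desired Hölder bound on $\Gamma(E)$. The case where Assumption $\mathbf{(L^{(N-1)})}$ holds and $\gamma_N \equiv 0$ is identical after replacing $N$ by $N-1$.

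The main obstacle is the uniform spectral gap for $P_E$ on $\mathcal{H}_\beta$: one must verify that the classical Le~Page machinery, developed for the action on $\mathbb{P}(V)$ of a strongly irreducible and contracting subgroup of $\GL(V)$, carries over to the action of $G(E)$ on the invariant subspace $\mathbb{P}(L_N)$. Assumption $\mathbf{(L^{(N)})}$ is designed precisely to supply the ingredients needed for this (invariance, unique invariant measure, positivity of the top Lyapunov exponent of $\Lambda^N$). A secondary technical point is controlling the constants so that the $\alpha_1$-Hölder bound on $E \mapsto P_E$ can be combined with the spectral gap on the \emph{same} Hölder space — this is where the exponent $\alpha$ in the conclusion, necessarily smaller than $\beta$, is fixed.
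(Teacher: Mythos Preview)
Your proposal is correct and follows essentially the same approach as the paper. The paper does not give a self-contained proof of this theorem: it simply observes that the statement is a minor variant of \cite[Theorem~1]{Bou08} (with detailed proof in \cite[Section~6.3]{theseHB}), where the original hypothesis of $p$-contractivity and $J$-$L_p$-strong irreducibility is used only to derive Assumption~$\mathbf{(L^{(N)})}$, so the latter may be taken as the hypothesis directly. Your outline --- the integral formula from $\mathbf{(L_3^{(N)})}$, the uniform spectral gap for the Markov operator $P_E$ on a H\"older space via Le~Page, the H\"older continuity of $E\mapsto P_E$ from (1)--(2), and the perturbation of the invariant measure --- is precisely the argument underlying \cite{Bou08, theseHB}, and you have correctly identified the uniform spectral gap as the crux and $\mathbf{(L^{(N)})}$ as the package of dynamical inputs that makes it go through on $\mathbb{P}(L_N)$.
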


This theorem is basically a  restatement of Theorem~1 of \cite{Bou08}.  A detailed proof can be found in~\cite[Section~6.3]{theseHB}. One difference is that we only prove H\"older regularity for the sum $\gamma_1+\dots+\gamma_N$, while the result of~\cite{Bou08} provides regularity for each of the $\gamma_p$'s. For this reason, it needs hypotheses for all $p$ while we need them only for $p=N$.
Then, our hypotheses (1) and (2) correspond to hypotheses (ii) and (iii) in~\cite{Bou08}. His hypothesis (i) is the fact that the Furstenberg group is $p$-contracting and $J$-$L_p$-strongly irreducible for all $p$ in $\{1,\dots,N\}$. If we carefully look at the proof of the theorem of~\cite{Bou08}, we see that  the only reason for which this is needed is because it implies  Assumption~$\mathbf{(L^{(N)})}$ on $I$. As a consequence, we can use this hypothesis instead of the one of $p$-contractivity and $J$-$L_p$-strong irreducibilty for all $p$, which makes it possible to apply the theorem in a more general setting. 

We want to apply Theorem \ref{thm:reguLyap} to the sequence $(A_n^\omega(E))_{n\in \Z}=(T_{\omega^{(n)}}(E))_{n\in \Z}$, the transfer matrices as defined in~\eqref{eq_def_transfer_mat}, to get Theorem \ref{thm:regu_Lyap}. Therefore, we have to prove that this sequence satisfies the hypotheses of Theorem~\ref{thm:reguLyap}.
We already assumed that  Assumption~$\mathbf{(L^{(N)})}$ holds  on $I$, or that  Assumption~$\mathbf{(L^{(N-1)})}$ holds and $\gamma_N$ is identically $0$  on $I$ in the hypotheses of Theorem~\ref{thm:regu_Lyap}. The boundedness of the potential implies that $\ec(\log\|T_{\omega^{(0)}}(E)\|)<\infty$ for all $E$.
 Finally, we are left with proving that the estimates \eqref{Gronwall1} and \eqref{Gronwall2} are always satisfied for the sequence $(T_{\omega^{(n)}}(E))_{n\in \Z}$. We begin with two lemmas which provides estimates on solutions of Dirac equations. Both have been proven in the case $N=1$ in \cite{Z23} and can easily be generalized to $N\geq 1$. As a consequence, we give only sketches of the proofs.
\begin{lemma}\label{lem:Gronwall1}
 Let $\psi=\begin{pmatrix}\psi^\uparrow\\\psi^\downarrow\end{pmatrix}$ be a solution to $\DON \psi+V\psi=0$, where $V$ is a $L_{loc}^1$ function with values in $2N$-by-$2N$ matrices. Then, for all $x$, $y\in\R$,
 \begin{equation} \label{majo1}
  |\psi^\uparrow(x)|^2+|\psi^\downarrow(x)|^2\leq  (|\psi^\uparrow(y)|^2+|\psi^\downarrow(y)|^2)\exp\left(2\int_{\min(x,y)}^{\max(x,y)}|V(t)|\dd t\right).
 \end{equation}
\end{lemma}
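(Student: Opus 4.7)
The plan is to apply a standard Gronwall-type argument to the squared Euclidean norm
$$f(t) := |\psi^\uparrow(t)|^2 + |\psi^\downarrow(t)|^2 = \|\psi(t)\|_{\C^{2N}}^2.$$
First I would rewrite the Dirac equation $\DON \psi + V\psi = 0$ in first-order form. Since $J^2 = -I_{2N}$, one has $J^{-1} = -J = J^*$, so the equation becomes $\psi'(t) = J V(t)\psi(t)$ for almost every $t$. The $L^1_{\mathrm{loc}}$ regularity of $V$ ensures that $\psi$ is absolutely continuous, which is enough to justify the computations that follow.

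Next I would differentiate $f$. Using that $\psi$ is AC,
$$f'(t) = 2\,\Re \langle \psi'(t), \psi(t)\rangle = 2\,\Re \langle J V(t)\psi(t),\, \psi(t)\rangle$$
for almost every $t$. Since $J$ is orthogonal ($J^*J = I_{2N}$), the Cauchy--Schwarz inequality yields
$$|f'(t)| \leq 2\, \|J V(t)\psi(t)\|\,\|\psi(t)\| = 2\, \|V(t)\psi(t)\|\,\|\psi(t)\| \leq 2\, |V(t)|\, f(t),$$
where $|V(t)|$ denotes the operator norm of $V(t)$.

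The last step is a direct application of Gronwall's lemma. Assume without loss of generality that $y \leq x$. On any open subinterval of $[y,x]$ where $f > 0$, the inequality above gives $|(\log f)'(t)| \leq 2\, |V(t)|$, and integration from $y$ to $x$ yields
$$\bigl|\log f(x) - \log f(y)\bigr| \leq 2\int_{y}^{x} |V(t)|\, dt,$$
from which \eqref{majo1} follows. The case where $f$ vanishes at some point is trivial: uniqueness for the linear ODE $\psi' = JV\psi$ (in its $L^1_{\mathrm{loc}}$ formulation) forces $\psi \equiv 0$, so both sides of \eqref{majo1} are zero. There is no substantive obstacle; the only mild subtlety is tracking the $L^1_{\mathrm{loc}}$ regularity of $V$, which is handled by working with absolutely continuous $\psi$ and the a.e.\ formulation of the equation.
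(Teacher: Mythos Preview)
Your proposal is correct and follows exactly the approach indicated in the paper: differentiate the squared norm $f(t)=|\psi^\uparrow(t)|^2+|\psi^\downarrow(t)|^2$, use $\psi'=JV\psi$ together with the orthogonality of $J$ to obtain $|f'|\leq 2|V|f$, and conclude by Gronwall's lemma. This is precisely the ``key argument'' the paper sketches (citing the $N=1$ case in \cite{Z23}), so there is nothing to add.
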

\begin{proof}
This is  a straightforward adaptation of~\cite[Lemma~4.3]{Z23}. The key argument is Gr\"onwall's lemma applied  to the function $|\psi^\uparrow(x)|^2+|\psi^\downarrow(x)|^2$.\end{proof}
 

\begin{lemma}\label{lem:Gronwall2}
Let $\psi_1=\begin{pmatrix}\psi_1^\uparrow\\\psi_1^\downarrow\end{pmatrix}$ and $\psi_2=\begin{pmatrix}\psi_2^\uparrow\\\psi_2^\downarrow\end{pmatrix}$ be solutions to $\DON \psi_i+V_i\psi_i=0$, where, for $i=1,2$, $V_i$ is a $L_{loc}^1$ function with values in $2N$-by-$2N$ matrices, and such that, for some $y$ in~$\R$, $\psi_1(y)=\psi_2(y)$. Then, we have, for all $x$ in $\R$,
\begin{equation}
 |\psi_1(x)-\psi_2(x)|\leq|\psi_2(y)| \exp\left(\int_{\min(y,x)}^{\max(y,x)}|V_1(t)|+|V_2(t)|\d t\right)\times\int_{\min(y,x)}^{\max(y,x)}|V_1(s)-V_2(s)|\d s.
\end{equation}

\end{lemma}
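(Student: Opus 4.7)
The plan is to derive a first-order ODE for $\delta:=\psi_1-\psi_2$, convert it to an integral equation, apply a scalar Grönwall inequality, and finally invoke Lemma~\ref{lem:Gronwall1} applied to $\psi_2$ to dispose of the $|\psi_2|$ factor. This is a standard Duhamel/perturbation argument; nothing deep is needed, so the proof should be short.

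First, since $J^2=-I_{2N}$, the equation $\DON\psi_i+V_i\psi_i=0$ is equivalent to $\psi_i'=JV_i\psi_i$. Subtracting and adding/subtracting $JV_1\psi_2$ gives
\begin{equation*}
\delta'(x)=JV_1(x)\delta(x)+J(V_1(x)-V_2(x))\psi_2(x).
\end{equation*}
Assume first $x\geq y$. Since $\delta(y)=0$ by hypothesis, integrating from $y$ to $x$, using $\|J\|=1$, and taking absolute values yields
\begin{equation*}
|\delta(x)|\leq\int_y^x|V_1(t)|\,|\delta(t)|\,\d t+\int_y^x|V_1(t)-V_2(t)|\,|\psi_2(t)|\,\d t.
\end{equation*}

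Next I would apply the classical Grönwall inequality to $x\mapsto|\delta(x)|$, obtaining
\begin{equation*}
|\delta(x)|\leq\int_y^x|V_1(s)-V_2(s)|\,|\psi_2(s)|\exp\!\left(\int_s^x|V_1(t)|\,\d t\right)\d s.
\end{equation*}
Lemma~\ref{lem:Gronwall1} applied to $\psi_2$ with potential $V_2$ gives $|\psi_2(s)|\leq|\psi_2(y)|\exp\!\left(\int_y^s|V_2(t)|\,\d t\right)$. Substituting this and enlarging both exponential factors to the full interval $[y,x]$ yields
\begin{equation*}
|\delta(x)|\leq|\psi_2(y)|\exp\!\left(\int_y^x(|V_1(t)|+|V_2(t)|)\,\d t\right)\int_y^x|V_1(s)-V_2(s)|\,\d s,
\end{equation*}
which is the claimed bound. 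The case $x<y$ is handled by the same argument after reversing orientation; this is exactly why the statement uses $\min(y,x)$ and $\max(y,x)$.

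There is no real obstacle. The only points needing mild care are: (i) keeping the exponent of $|V_1-V_2|$ equal to $1$ inside the integral (so that when this lemma is invoked to verify hypothesis~\eqref{Gronwall2} of Theorem~\ref{thm:reguLyap}, the Hölder exponent is preserved); and (ii) replacing the two exponential kernels $\exp\int_s^x|V_1|$ and $\exp\int_y^s|V_2|$ by a single exponential over the whole interval, which is immediate from monotonicity of the integral and avoids having to keep track of the splitting point $s$.
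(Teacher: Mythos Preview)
Your proof is correct and follows exactly the approach indicated in the paper's sketch: apply Gr\"onwall's lemma to the difference $\psi_1-\psi_2$, then control the $|\psi_2|$ factor via Lemma~\ref{lem:Gronwall1}. The paper merely refers to \cite[Lemma~4.4]{Z23} for details, and your write-up supplies precisely those details.
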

\begin{proof}
The proof is similar to the one of Lemma~4.4 in~\cite{Z23}. It relies once again on Gr\"onwall's lemma, applied here to the function $\psi_1-\psi_2$.
\end{proof}

We get~\eqref{Gronwall1} and~\eqref{Gronwall2} applying the arguments  of~\cite[Section~6.2.2.1]{theseHB}. We prove that $\|A_n^\omega(E)\|^2\leq C_1$ and that $\|A_n^\omega(E)-A_n^\omega(E')\|^2\leq C_2|E-E'|$ by applying respectively Lemmas~\ref{lem:Gronwall1} and~\ref{lem:Gronwall2} to each column of the transfer matrix, which is a solution of $\DON \psi+(V_\omega-E)\psi=0$ (\emph{cf.} Lemmas~6.2.3 and~6.2.5 of~\cite{theseHB}). We get the estimates with the external power as in Lemma~6.2.6 of~\cite{theseHB}: \eqref{Gronwall1} directly comes from the fact that, for all invertible matrix $M$, $\|\Lambda^NM\|\leq\|M\|^N$, while the proof of~\eqref{Gronwall2} uses that, for all invertible matrices $M_1$ and $M_2$,
\[\|\Lambda^NM_1-\Lambda^NM_2\|\leq\|M_1-M_2\|\left(\|M_1\|^{N-1}+\|M_1\|^{N-2}\|M_2\|+\dots+\|M_2\|^{N-1}\right).\]
This concludes the proof of \eqref{Gronwall1} and \eqref{Gronwall2} and thus of Theorem~\ref{thm:regu_Lyap}.

\subsection{H\"older continuity of the integrated density of states}\label{sec:reguDOS}
We prove in this section Theorem~\ref{thm:reguDOS}. The link between the Lyapunov exponents and the integrated density of states is given by the following Thouless formula, similar to~\cite[Proposition~5.2]{Z23} and \cite[Theorem~3]{Bou08}.
\begin{prop}\label{prop:thouless}
 Let $F$ and $F_0$ be the integrated densities of states, respectively of  $\{\DO \}_{\omega \in \Omega}$ and $\DON$, and $\gamma_1,\dots,\gamma_N$ be the Lyapunov exponents of $\{\DO \}_{\omega \in \Omega}$. Then, there exists $a\in\R$ such that, for every almost-every $E\in\R$,
 \begin{equation}\label{formule:thouless}
 \sum_{i=1}^N\gamma_i(E)=-a+\int_\R\log\left|\frac{E-t}{t-\mathrm{i}}\right|\d(F-F_0)(t).
 \end{equation}
Moreover, if $I\subset \R$ is an interval on which $E\mapsto (\gamma_{1}+\ldots +\gamma_{N})(E)$ is continuous then \eqref{formule:thouless} holds for every $E\in I$.
\end{prop}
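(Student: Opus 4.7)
The plan is to follow the finite-volume strategy used for the Thouless formula in~\cite{Bou08} and adapted to the Dirac setting in~\cite{Z23}, in which $\gamma_1(E)+\cdots+\gamma_N(E)$ is identified with the asymptotic growth rate of an entire function of $E$ whose zeros are exactly the Dirichlet eigenvalues of $\DONL$.

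First I would write the transfer matrix $T_{-\ell L}^{\ell L}(E)$ in $N{\times}N$ block form $\left(\begin{smallmatrix}a(E)&b(E)\\ c(E)&d(E)\end{smallmatrix}\right)$. Because the Dirichlet condition $\psi^{\uparrow}(\pm\ell L)=0$ amounts to $b(E)\psi^{\downarrow}(-\ell L)=0$, the eigenvalues of $\DONL$ are exactly the zeros of the entire function $P_L(E):=\det b(E)$. Since $T_{-\ell L}^{\ell L}(E)\in\SpC$ and $|\det b(E)|$ is one of the maximal minors of $\Lambda^{N} T_{-\ell L}^{\ell L}(E)$, Proposition~\ref{prop_lyapvps} combined with the standard control of such maximal minors by the top $N$ singular values gives, for almost every $E\in\R$,
\begin{equation*}
\frac{1}{2\ell L}\log|P_L(E)|\xrightarrow[L\to\infty]{}\sum_{i=1}^{N}\gamma_i(E).
\end{equation*}

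Next I would carry out the same construction for the free operator $\DON$, obtaining an entire function $P_L^0(E)$ whose zeros are the Dirichlet eigenvalues of the restricted free operator and whose growth rate is zero since the Lyapunov exponents of $\DON$ all vanish. Factoring $P_L$ and $P_L^0$ as Hadamard products normalized at $E=\mathrm{i}$ (which is precisely what produces the convergence factor $|t-\mathrm{i}|$ in the denominator of~\eqref{formule:thouless}), taking $\tfrac{1}{2\ell L}\log|\cdot|$ of both expressions, and passing to the weak limit of the eigenvalue-counting measures $\tfrac{1}{2\ell L}\sum_k\delta_{E_k^L}\rightharpoonup \mathrm{d}F$ and $\tfrac{1}{2\ell L}\sum_k\delta_{E_k^{L,0}}\rightharpoonup \mathrm{d}F_0$, the difference of the two resulting identities yields~\eqref{formule:thouless} for almost every $E\in\R$, the constant $-a$ absorbing the difference of the two leading coefficients.

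The main obstacle is the justification of this passage to the limit. The upper bound $\tfrac{1}{2\ell L}\log|P_L(E)|\leq \sum_{i=1}^{N}\gamma_i(E)+o(1)$ can be obtained uniformly in $E$ on compact intervals from the Grönwall estimate of Lemma~\ref{lem:Gronwall1}, but the matching lower bound fails in a neighbourhood of each $E_k^L$ since $P_L$ vanishes there, which is why the identity is first only almost everywhere in $E$. To obtain the pointwise statement on an interval $I$ of continuity of $\sum_{i=1}^N\gamma_i$, I would then use that the right-hand side of~\eqref{formule:thouless} is the difference of two logarithmic potentials and hence upper semi-continuous in $E$, that it coincides almost everywhere on $I$ with the continuous left-hand side, and that a standard semi-continuity and density argument then forces equality at every point of $I$.
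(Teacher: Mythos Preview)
Your route is genuinely different from the paper's. The paper does \emph{not} use a finite-volume/Hadamard factorization argument: it follows~\cite{SSB,Bou08,Z23} and works directly in the upper half-plane via the Kotani $w$-function built from the Weyl--Titchmarsh matrix. Theorem~\ref{thm:SSB} gives $\sum_{i}\gamma_i(z)=-\Re w(z)$ and $\partial_z w(z)=\eb(\tr G(z))$; Lemma~\ref{lem:G-IDS} identifies $\eb(\tr G(z))$ with the Stieltjes transform of $\nu$, and then the boundary values of $\Im(w-w_0)$ and $\Re(w-w_0)$ are matched with $\pi(F-F_0)$ and $-\sum_i\gamma_i$ respectively. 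The formula~\eqref{formule:thouless} then drops out from the Herglotz representation of $w-w_0$, with no eigenvalue-counting limit to justify. Your attribution is therefore off: \cite{Bou08} and~\cite{Z23} already use the $w$-function route, not the finite-volume one you sketch.

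Your argument has a real gap at the step where you invoke Proposition~\ref{prop_lyapvps}. That proposition controls $s_p(\Phi_E(n,\omega))$, hence $\|\Lambda^N T_{-\ell L}^{\ell L}(E)\|$; it says nothing about the \emph{single} matrix entry $\det b(E)=\langle e_1\wedge\cdots\wedge e_N,\Lambda^N T\,(e_{N+1}\wedge\cdots\wedge e_{2N})\rangle$. An arbitrary $N\times N$ minor is bounded above by $s_1\cdots s_N$, which gives your $\limsup$ inequality, but there is no ``standard control'' forcing a fixed minor to achieve the top growth rate; with both the initial and final directions pinned, Oseledets alone does not suffice. The way to repair this within your framework is the Craig--Simon subharmonicity argument: $z\mapsto\tfrac{1}{2\ell L}\log|P_L(z)|$ and $z\mapsto\sum_i\gamma_i(z)$ are both subharmonic, one first proves the identity for $\Im z\neq 0$ (where the $b$-block is automatically invertible with the right exponential rate), and then uses upper semi-continuity plus equality of Riesz measures to descend to a.e.\ real $E$. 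That replaces your appeal to Proposition~\ref{prop_lyapvps} entirely; without it the passage to the limit through the Hadamard product is not justified, since the logarithmic kernel is unbounded and weak convergence of the counting measures is not enough. Your final paragraph, extending the a.e.\ identity to every $E$ in an interval of continuity via semi-continuity of logarithmic potentials, is correct and is exactly what the paper (and~\cite{Bou08,Z23}) does at that stage.
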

Note that, since the spectrum of $\DON$ is purely absolutely continuous  with multiplicity $2N$ on the whole real line, all its Lyapunov exponents are identically zero for real energies (\textit{cf.}~\cite[Corollary~VII.3.4]{CL90}). That is why they do not appear in the formulas.
 The integrated density of states $F_0$ of the free Dirac operator $\DON$ can be  explicitly computed. Indeeed,  the eigenvalues of $\DON$ restricted to $[-\ell L,\ell L]$ with Dirichlet boundary conditions are the $(k\ell L)_{k\in\Z}$, each of them with multiplicity $N$. As a consequence, $F_0(E)=NE/\pi$.

The following lemma, which has been proven in  the case $N=1$ in~\cite{Z23}, holds in fact for all $N$. It makes it possible to control the difference $F(E)-F_0(E)$.
\begin{lemma}[\cite{Z23}, Lemma~5.3]
 There exists a constant $C$, depending only on $\|V_\omega\|_\infty$, such that for all $E\in\R$, we have
 \begin{equation}
  |F(E)-F_0(E)|\leq C.
 \end{equation}
 \end{lemma}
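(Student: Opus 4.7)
The strategy is to adapt the argument used for $N=1$ in~\cite{Z23}: unravel the definitions of $F$ and $F_0$ in terms of normalized eigenvalue counting functions on the finite box $[-\ell L,\ell L]$, and bound the difference of these finite-volume counting functions uniformly in $\omega$ and $L$ via the min--max principle. Since $V_\omega$ is bounded, the eigenvalues of the restricted random Dirac operator $\DONL$ differ from those of the corresponding restricted free operator by at most $M:=\sup_{\omega,n}\|V_\omega^{(n)}\|_\infty$, so the difference of counting functions is controlled by the number of free eigenvalues in two windows of width $2M$, which in turn is explicit.

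More concretely, let $D_L^{(0)}$ denote $\DON$ restricted to $[-\ell L,\ell L]$ with the same Dirichlet condition $\psi^\uparrow(\pm\ell L)=0$ as $\DONL$, and write $N(A;I)$ for the number of eigenvalues (with multiplicity) of a self-adjoint operator $A$ in an interval $I$. As multiplication operators on $\Dom(\DONL)=\Dom(D_L^{(0)})$, we have $-M\,\idty\leq V_\omega\leq M\,\idty$, so Weyl's min--max principle (applied to the eigenvalues enumerated by index about a fixed reference point, since the spectrum is discrete but unbounded on both sides) gives $|\lambda_k(\DONL)-\lambda_k(D_L^{(0)})|\leq M$ for every $k$. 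This translates into
\[
\bigl|N(\DONL;(0,E])-N(D_L^{(0)};(0,E])\bigr|\;\leq\;N(D_L^{(0)};(-M,M])+N(D_L^{(0)};(E-M,E+M])
\]
for every $E>0$, and analogously for $E<0$ with the interval $(E,0]$.

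Now one uses the explicit description of the free spectrum on the box recalled just below the proposition: $D_L^{(0)}$ decouples into $N$ scalar copies after diagonalising in the transverse direction, yielding an arithmetic progression of eigenvalues with spacing of order $\pi/(\ell L)$ and multiplicity $N$. Hence any window of length $2M$ contains at most $\tfrac{2MN\ell L}{\pi}+N$ eigenvalues. Dividing the previous inequality by $2\ell L$, taking the expectation, and passing to the limit $L\to\infty$ in the definitions of $\nu$ and $\nu_0$, the two boundary terms collapse to a constant of order $\tfrac{2MN}{\pi}$ while the $O(1/L)$ corrections vanish, producing $|F(E)-F_0(E)|\leq C$ with $C$ depending only on $M=\|V_\omega\|_\infty$ (and on the fixed structural data $N,\ell$).

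There is no real obstacle here: the min--max comparison is an abstract operator inequality insensitive to $N$, and the only $N$-dependent ingredient is the explicit counting of free eigenvalues, which is immediate from the decoupling into $N$ scalar Dirac operators. The only mild point of care is to handle the fact that the free Dirac spectrum on the line is unbounded in both directions, so the min--max principle must be used in its version for eigenvalues indexed by $\Z$ rather than by $\N$; this is standard.
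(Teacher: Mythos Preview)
Your proposal is correct and follows essentially the same approach as the paper, which simply cites \cite[Lemma~5.3]{Z23} for $N=1$ and asserts that the generalization to arbitrary $N$ is straightforward. Your sketch fills in precisely the expected details: a min--max/eigenvalue-pairing comparison between $\DONL$ and the free box operator, together with the explicit arithmetic-progression structure of the free eigenvalues (multiplicity $N$, spacing $\pi/(2\ell L)$), which bounds the number of free eigenvalues in each window of width $2M$ by a quantity linear in $L$ and yields the uniform bound after normalization.
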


The next step towards the Thouless formula is to introduce a tool named Kotani $w$-function. We use it in the way it has been introduced by Sadel and Schulz-Baldes for Dirac operators in \cite[Section~5]{SSB}. The first step is to introduced the so-called Weyl-Titchmarsh matrix. Existence and uniqueness of this matrix are proven in~\cite[Theorem~2]{SSB}.
\begin{defi}
Let $z\in\C\backslash\R$. Then,
there exists a unique $M_\omega(z)\in\mathcal{M}_N(\C)$, called the \emph{Weyl-Titchmarsh matrix} such that
the space of solutions of $\DO \psi=z\psi$ which are square-integrable on $(0,+\infty)$  is spanned by the solutions whose value in 0 is one of the column of $\left(\begin{smallmatrix}
I_N\\M_\omega(z)\end{smallmatrix}\right)$.
\end{defi}

In order to be able to use the results of Sadel and Schulz-Baldes, we need to have an $\R$-ergodic operator, which is \emph{a priori} stronger than being $\Z$-ergodic. Nevertheless, in \cite{Kirsch}, Kirsch proves that, given a $\Z$-ergodic operator $H_\omega$, we can construct an $\R$-ergodic operator $\overline{H}_{\overline{\omega}}$, on a wider probability space, with, for each $\overline{\omega}$,  $\overline{H}_{\overline{\omega}}$ is unitarily equivalent to $H_\omega$ for some $\omega$.
We apply in our setting Kirsch's suspension procedure. Therefore, we introduce a new expectation $\eb$ which  must here be understood as both expectation on $\Omega$ and average value on $[0,\ell]$ for the potentials. 

As in~\cite{SSB}, we introduce a block decomposition for the potential: 
\[V_\omega=:\begin{pmatrix}P_\omega&R_\omega\\R_\omega^*&Q_\omega
\end{pmatrix}.\]
The $w$ function is defined on $\C\setminus\R$ (with values in $\C$) as follows:
\begin{equation}
   w(z):=-\eb\left[\tr(R_\omega+M_\omega(z)(Q_\omega-z))\right].
\end{equation}
This function will provide us a link between the Lyapunov exponent and the integrated density of states through the Green's function. We recall that the Green's function $G(z,\cdot,\cdot)$ is the integral kernel of the resolvent $(\DO-z) ^{-1}$.  We start by  the following theorem of Sadel and Schulz-Baldes, which links the Green's function and the Lyapunov exponent. Here $G(z):=G(z,x,x)$, and we drop the dependence in $x$ since we take the expectation $\eb$.
\begin{thm}[\cite{SSB}, Theorem~5]\label{thm:SSB}
 Let $\Im(z)\neq0$. Then,
 \begin{enumerate}
  \item $\sum_{i=1}^N\gamma_i(z)=-\Re(w(z))$.
  \item $\partial_zw(z)=\eb(\tr~ G(z))$.
 \end{enumerate}
\end{thm}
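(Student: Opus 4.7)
My approach is to recast both identities as statements about the matrix-valued Weyl--Titchmarsh function via a matrix Riccati equation, and then identify the two sides using the Oseledets multiplicative ergodic theorem for part~(1) and a Wronskian representation of the Green's function for part~(2). Writing $\DO\psi = z\psi$ in block form with $\psi = \left(\begin{smallmatrix}\psi^\uparrow\\\psi^\downarrow\end{smallmatrix}\right)$ and $V_\omega = \left(\begin{smallmatrix}P_\omega & R_\omega\\ R_\omega^* & Q_\omega\end{smallmatrix}\right)$, the eigenvalue equation becomes the first-order system
\begin{equation*}
\partial_x\psi^\uparrow = -R_\omega^*\psi^\uparrow + (z-Q_\omega)\psi^\downarrow, \qquad \partial_x\psi^\downarrow = -(z-P_\omega)\psi^\uparrow + R_\omega\psi^\downarrow.
\end{equation*}
Let $\Psi_+(z,x)$ denote the $2N\times N$ matrix whose columns form a basis of $L^2([x,+\infty))$-solutions, normalized so that $\Psi_+(z,0) = \left(\begin{smallmatrix}I_N\\M_\omega(z)\end{smallmatrix}\right)$, and set $\widetilde{M}_\omega(z,x) := \Psi_+^\downarrow(z,x)(\Psi_+^\uparrow(z,x))^{-1}$, which is the Weyl--Titchmarsh matrix of the shifted operator $D_{\tau_x\omega}$. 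A direct computation yields the matrix Riccati equation $\partial_x\widetilde{M}_\omega = -(z-P_\omega) + R_\omega\widetilde{M}_\omega + \widetilde{M}_\omega R_\omega^* - \widetilde{M}_\omega(z-Q_\omega)\widetilde{M}_\omega$, together with the key auxiliary identity $(\Psi_+^\uparrow)'(\Psi_+^\uparrow)^{-1} = -R_\omega^* + (z-Q_\omega)\widetilde{M}_\omega(z,x)$.

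For part~(1), I would combine the Abel-type identity $\partial_x\log\det\Psi_+^\uparrow = \tr\bigl((\Psi_+^\uparrow)'(\Psi_+^\uparrow)^{-1}\bigr)$ with the fact that $\Psi_+(z,\cdot)$ spans the most contracting $N$-dimensional invariant subspace of the cocycle. Oseledets' theorem then gives $\lim_{x\to\infty}\frac{1}{x}\log|\det\Psi_+^\uparrow(z,x)| = -\sum_{i=1}^N\gamma_i(z)$, while Birkhoff's theorem applied to Kirsch's $\R$-ergodic suspension (to which $w$ is insensitive) yields
\begin{equation*}
\lim_{x\to\infty}\frac{1}{x}\int_0^x\Re\tr\bigl(-R_\omega^*(t) + (z-Q_\omega(t))\widetilde{M}_\omega(z,t)\bigr)\,\dd t = \eb\bigl[\Re\tr\bigl(-R_\omega^* + (z-Q_\omega)M_\omega(z)\bigr)\bigr].
\end{equation*}
Using $\Re\tr(R_\omega^*) = \Re\tr(R_\omega)$ and the cyclicity of the trace, the right-hand side rearranges according to the definition $w(z) = -\eb[\tr(R_\omega + M_\omega(z)(Q_\omega - z))]$ into $\Re(w(z))$, and equating the two limits gives $\sum_{i=1}^N\gamma_i(z) = -\Re(w(z))$.

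For part~(2), my plan is to differentiate $w$ in $z$ and identify the result with $\tr G_\omega(z,0,0)$. Letting $\Psi_-(z,x)$ be the analogous basis of $L^2((-\infty,x])$-solutions with Weyl--Titchmarsh matrix $M_-(z)$, self-adjointness of $\DO$ produces an explicit representation of the integral kernel of $(\DO - z)^{-1}$ via the symplectic Wronskian $\mathcal{W}(z) := \Psi_-(z,0)^* J\Psi_+(z,0)$, so that $\tr G_\omega(z,0,0)$ becomes a rational expression in $M_\pm(z)$ and $\mathcal{W}(z)^{-1}$. Differentiating the Riccati equation in $z$ produces a non-homogeneous linear ODE for $\partial_z\widetilde{M}_\omega(z,x)$ whose solution, by variation of parameters using the propagator of the homogeneous part, can be represented as an integral of $\Psi_+^*\Psi_+$ with a symplectic correction; matching this integral with the explicit Wronskian formula for $G_\omega$ and averaging via Birkhoff's theorem delivers $\partial_z w(z) = \eb[\tr G(z)]$.

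The main obstacle I expect is the algebraic matching in part~(2). In the scalar Schr\"odinger setting the analogous identity $\partial_z m(z) \propto G(z,0,0)$ is a one-line Wronskian computation, but in the matrix-valued symplectic Dirac setting the Wronskian $\mathcal{W}(z)$ must be kept in its proper non-commutative place, and $M_+ - M_-$ no longer simply divides into the Green's function as a scalar. I would therefore isolate the matrix identity relating $\partial_z M_+(z,0)$ to the appropriate block of $G_\omega(z,0,0)$ as the technical heart of the argument, verify it by a careful variation-of-parameters calculation built on the symplectic structure, and only then close the loop by taking the expectation $\eb$ to recover the stated formula.
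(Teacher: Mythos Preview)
The paper does not prove this statement: it is quoted verbatim as Theorem~5 of Sadel and Schulz-Baldes~\cite{SSB} and used as a black box in the derivation of the Thouless formula. There is therefore no ``paper's own proof'' to compare your proposal against.

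That said, your sketch is essentially the argument of~\cite{SSB}. The Riccati equation for $\widetilde{M}_\omega(z,x)=\Psi_+^\downarrow(\Psi_+^\uparrow)^{-1}$ and the identity $(\Psi_+^\uparrow)'(\Psi_+^\uparrow)^{-1}=-R_\omega^*+(z-Q_\omega)\widetilde{M}_\omega$ are exactly what drives part~(1) there, and the combination of Oseledets for the exponential rate of $|\det\Psi_+^\uparrow|$ with Birkhoff for the ergodic average is the standard Kotani-theory mechanism. One point you should make explicit is why $\frac{1}{x}\log|\det\Psi_+^\uparrow(z,x)|$ converges to $-\sum_{i=1}^N\gamma_i(z)$ rather than to some other partial sum: $\Psi_+^\uparrow$ is only the top $N\times N$ block of the $2N\times N$ decaying solution, and the identification of its determinant growth with the full exterior-power rate uses that the contracting Oseledets subspace is $J$-Lagrangian, hence transversal to $\{0\}\times\C^N$ for $\Im z\neq 0$. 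For part~(2) your plan is correct in outline but, as you anticipate, the non-commutative bookkeeping with the symplectic Wronskian $\mathcal{W}(z)$ is the entire content; in~\cite{SSB} this is handled by writing $G_\omega(z,0,0)$ directly in terms of $M_\pm(z)$ and differentiating the defining relation for $w$, rather than via variation of parameters on the Riccati equation.
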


Once we have this, proving Proposition~\ref{prop:thouless} can be done as in~\cite[Section~5]{Z23} (which considers the case $N=1$) and~\cite[Section~4.2]{Bou08} (which considers Schr\"odinger operators for general $N$). As a consequence, we give here only the main steps of the proof and let the interested reader refer to the cited papers for details. We first establish a relation between the density of states measure $\nu$ and the Green's function.
\begin{lemma}[\cite{Z23}, Lemma~5.5,\cite{Bou08}, Proposition~9]\label{lem:G-IDS}
 Let $z\in\C\backslash\R$. Then,
 \[\eb(\tr(G(z)))=\int_\R\frac{\d\nu(E)}{E-z}.\]
\end{lemma}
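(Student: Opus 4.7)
\medskip

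\noindent\textbf{Proof plan for Lemma \ref{lem:G-IDS}.}
The plan is to identify both sides of the claimed equality with the trace per unit volume of the resolvent, which is essentially the defining relation of the density of states applied to the bounded continuous function $\phi_z(E):=(E-z)^{-1}$. First, I would rewrite the Green's-function side in ergodic form: after applying Kirsch's suspension procedure, so that we are working with an $\R$-ergodic family of operators on the enlarged probability space, the ergodic theorem gives
\[
\eb(\tr G(z))\;=\;\lim_{L\to\infty}\frac{1}{2\ell L}\int_{-\ell L}^{\ell L}\tr G_\omega(z,x,x)\,\dd x\;=\;\lim_{L\to\infty}\frac{1}{2\ell L}\,\eb\!\left[\tr\bigl(\mathbf{1}_{[-\ell L,\ell L]}(\DO-z)^{-1}\mathbf{1}_{[-\ell L,\ell L]}\bigr)\right].
\]
On the other hand, for each $L$ the restricted operator $\DONL$ has purely discrete spectrum $\{E_k^{(L,\omega)}\}$, hence
\[
\frac{1}{2\ell L}\,\tr\bigl((\DONL-z)^{-1}\bigr)\;=\;\int_\R\frac{\dd N_L^\omega(E)}{E-z},
\]
where $N_L^\omega:=\frac{1}{2\ell L}\sum_k\delta_{E_k^{(L,\omega)}}$ is the normalized eigenvalue-counting measure of $\DONL$. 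By the very definition of $\nu$ adopted in Section~\ref{sec:reguDOS}, $\E(N_L^\omega)$ converges vaguely to $\nu$ as $L\to\infty$.

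Second, I would pass to the limit in the integral $\int\phi_z\,\dd\E(N_L^\omega)$. Since $\phi_z$ is continuous but not compactly supported, the vague convergence is not immediately enough. To handle this, I would exploit the bound $|F-F_0|\leq C$ (and its finite-volume analogue), together with the explicit form $F_0(E)=NE/\pi$, to split $\phi_z\,\dd\nu$ into a part involving $\dd(F-F_0)$, which after one integration by parts is absolutely integrable because of the $|E-z|^{-2}$ decay, and a contribution from $\dd F_0$ which is computed explicitly (its principal-value nature at infinity is harmless because the symmetric cutoff $[-\ell L,\ell L]$ is used throughout, and the imaginary part is absolutely convergent thanks to $\Im\phi_z(E)=\Im z/|E-z|^2$). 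The same argument applied to $\dd\E(N_L^\omega)$ gives the desired convergence
\[
\lim_{L\to\infty}\frac{1}{2\ell L}\,\E\bigl(\tr((\DONL-z)^{-1})\bigr)\;=\;\int_\R\frac{\dd\nu(E)}{E-z}.
\]

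The last and hardest step, where we essentially follow the calculations of \cite[Lemma~5.5]{Z23} and \cite[Proposition~9]{Bou08}, is to show that the difference between $\tr((\DONL-z)^{-1})$ and $\tr(\mathbf{1}_{[-\ell L,\ell L]}(\DO-z)^{-1}\mathbf{1}_{[-\ell L,\ell L]})$ is $O(1)$ uniformly in $L$, so that after division by $2\ell L$ it disappears in the limit. This will be carried out by a resolvent identity: the difference of the two resolvents is given in terms of a boundary operator supported near $\{\pm\ell L\}$, which can be estimated using the explicit behaviour of the free Dirac Green's function at large distance. I expect this boundary-to-bulk reduction to be the main obstacle, since the first-order structure of the Dirac operator forces the on-diagonal value $G(z,x,x)$ to be interpreted as the symmetric average of the one-sided limits $G(z,x,x^\pm)$, so the trace-class estimates have to be performed component-wise and combined carefully with the Dirichlet boundary condition imposed only on the spin-up components.
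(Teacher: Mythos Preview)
The paper does not actually prove this lemma: it merely records the statement with citations to \cite[Lemma~5.5]{Z23} and \cite[Proposition~9]{Bou08} and refers the reader there for details. Your proof plan follows exactly the standard route taken in those references --- ergodic averaging of the diagonal Green's kernel, spectral representation of the finite-volume resolvent trace, vague convergence of the eigenvalue-counting measures to $\nu$, and a resolvent-identity comparison showing the boundary contribution is $O(1)$ --- so there is nothing to contrast. Your identification of the two delicate points (extending vague convergence to the non-compactly-supported test function $\phi_z$ via the bound $|F-F_0|\le C$, and the careful handling of $G(z,x,x)$ as a symmetrized limit due to the first-order nature of the Dirac operator) is accurate and matches what \cite{Z23} does in the scalar case.
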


This together with the second point of Theorem~\ref{thm:SSB} give a relation between the imaginary part of the $w$ function and the integrated density of states.

\begin{lemma}[\cite{Z23}, Lemma~5.6, \cite{Bou08}, Proposition~10]
 There exists $c\in\R$ such that, for all $E\in\R$, we have
 \begin{equation}
  \lim_{a\to0^+} \Im\left(w(E+\mathrm{i} a)-w_0(E+\mathrm{i} a)\right)=\pi(F(E)-F_0(E))+c.
 \end{equation}
\end{lemma}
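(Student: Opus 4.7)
The approach is to integrate the Stieltjes transform identity $\partial_z(w-w_0)(z) = \int_\R \frac{\dd(\nu-\nu_0)(E')}{E'-z}$, which follows from combining Theorem~\ref{thm:SSB}~(2) with Lemma~\ref{lem:G-IDS} applied to both $\DO$ and $\DON$. Since $w-w_0$ is holomorphic on $\C\setminus\R$, integrating its derivative along the horizontal segment from $\mathrm{i}a$ to $E+\mathrm{i}a$ and taking imaginary parts yields
$$\Im(w-w_0)(E+\mathrm{i}a)-\Im(w-w_0)(\mathrm{i}a)=\int_0^E \int_\R \frac{a\,\dd(\nu-\nu_0)(E')}{(E'-t)^2+a^2}\,\dd t.$$

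Next, one swaps the order of integration, which is legitimate thanks to the uniform bound $|F-F_0|\leq C$ from the preceding lemma. The inner integral in $t$ then produces an arctangent expression $\arctan((E-E')/a)-\arctan(-E'/a)$, which converges pointwise and boundedly to the indicator-type function $\pi\mathbf{1}_{(0,E)}(E')$ as $a\to 0^+$ (for $E>0$). Dominated convergence, again justified by the uniform bound on $F-F_0$, then implies that the right-hand side tends to $\pi(\nu-\nu_0)((0,E]) = \pi(F(E)-F_0(E))$, the last equality using the normalization $F(0)=F_0(0)=0$ built into the definition of $F$. Setting $c:=\lim_{a\to 0^+}\Im(w-w_0)(\mathrm{i}a)$, whose existence and finiteness are established by an analogous computation on the imaginary axis, yields the claimed identity for $E>0$; the case $E<0$ is handled symmetrically by integrating from $\mathrm{i}a$ to $E+\mathrm{i}a$ in the opposite direction.

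The main obstacle is justifying the interchange of limits required for Fubini and dominated convergence, together with establishing the existence of the boundary value $c$ at $z=0$. Both rest on the uniform bound $|F-F_0|\leq C$, which provides integrability of the signed measure $\nu-\nu_0$ against the Poisson-type kernel $a/((E'-t)^2+a^2)$. Once these analytic steps are carried out, the final identification of the limit in terms of $F-F_0$ is a bookkeeping matter reflecting the Stieltjes inversion formula applied to the signed measure $\nu-\nu_0$.
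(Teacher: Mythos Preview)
The paper does not give its own proof of this lemma but defers entirely to the cited references \cite{Z23} and \cite{Bou08}, and your outline follows precisely the standard argument found there: combine Theorem~\ref{thm:SSB}(2) with Lemma~\ref{lem:G-IDS}, integrate the resulting Stieltjes transform, and pass to the boundary using the uniform bound $|F-F_0|\leq C$. One small point worth sharpening: the dominated-convergence step is not literally justified by a constant dominating function against $|\nu-\nu_0|$ (which has infinite mass); the clean route is to integrate by parts first so that the integrand becomes $(F-F_0)(E')$ times a difference of Poisson kernels, after which boundedness of $F-F_0$ and the $L^1$ normalization of the Poisson kernel give the limit directly.
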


We have an analogous result for the real part of $w$.
\begin{lemma}\label{rew}
For Lebesgue-almost every $E$ in $\R$, we have : 
\begin{equation}\label{reweq}
\lim_{a \to 0^{+}} \mathrm{Re}\ (w(E+\mathrm{i}a)-w_0(E+\mathrm{i} a))=-(\gamma_{1}+\ldots +\gamma_{N})(E).
\end{equation}
Moreover, if $I\subset \R$ is an interval on which $E\mapsto -(\gamma_{1}+\ldots +\gamma_{N})(E)$ is continuous then (\ref{reweq}) holds for every $E\in I$.
\end{lemma}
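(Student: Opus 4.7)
The plan is to deduce \eqref{reweq} directly from Theorem~\ref{thm:SSB}~(1), which identifies $\Re(w(z))$ with $-(\gamma_1+\ldots+\gamma_N)(z)$ for every non-real $z$, and likewise identifies $\Re(w_0(z))$ with the negative sum of the Lyapunov exponents $\gamma_i^{(0)}$ of the free operator $\DON$ at the complex energy $z$. Subtracting these two identities, the lemma reduces to proving, as $a\to 0^+$, that $\sum_{i=1}^N\gamma_i(E+\mathrm{i}a)\to\sum_{i=1}^N\gamma_i(E)$ for Lebesgue almost-every $E\in\R$ (and pointwise on any interval of continuity), and that $\sum_{i=1}^N\gamma_i^{(0)}(E+\mathrm{i}a)\to 0$ for every $E\in\R$. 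The latter is elementary: the free transfer matrix $\exp(-\ell z J)$ is normal and its singular values are $\mathrm{e}^{\pm\ell|\Im z|}$ (each with multiplicity $N$), so $\sum_{i=1}^N\gamma_i^{(0)}(E+\mathrm{i}a)=N\ell a\to 0$.

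For the perturbed operator, I would follow the scheme of~\cite[Section~5]{Z23} (the $N=1$ Dirac case) and~\cite[Section~4.2]{Bou08} (the quasi-one-dimensional Schrödinger case). The function $z\mapsto \sum_{i=1}^N\gamma_i(z)$ coincides on $\C\setminus\R$ with the harmonic function $-\Re(w(z))$, and is bounded on any strip $\{0<\Im z\leq a_0\}$ above a neighbourhood of $I$ thanks to the uniform transfer-matrix estimate provided by Lemma~\ref{lem:Gronwall1}. A Fatou-type theorem for bounded harmonic functions in the upper half-plane then guarantees that the non-tangential boundary limit exists for Lebesgue almost-every $E$. To identify this boundary value with the real-line sum of Lyapunov exponents, one passes to the limit $a\to 0^+$ in the defining formula of Definition~\ref{def_lyap_exp}: dominated convergence applies thanks to Lemma~\ref{lem:Gronwall1}, while Lemma~\ref{lem:Gronwall2} ensures that $E\mapsto T_{\omega^{(n)}}(E)$ depends continuously (indeed Lipschitz-continuously) on the spectral parameter, uniformly in $\omega$ and $n$.

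For the second assertion, on an interval $I$ on which $E\mapsto\sum_{i=1}^N\gamma_i(E)$ is continuous, I would invoke the classical fact that a bounded harmonic function on the upper half-plane whose non-tangential boundary values agree Lebesgue-almost-everywhere with a function continuous on an open subinterval of the boundary extends continuously to that subinterval via its Poisson integral representation. Applied separately to $\Re(w)$ and to $\Re(w_0)$, this upgrades the almost-everywhere convergence to pointwise convergence at every $E\in I$, which is exactly~\eqref{reweq}. The main obstacle of the whole argument is the almost-everywhere identification of the boundary limit of $-\Re(w(z))$ with the real-line sum $\sum_{i=1}^N\gamma_i(E)$; once this is established as in~\cite{Z23, Bou08}, what remains is a routine application of classical harmonic function theory in the half-plane.
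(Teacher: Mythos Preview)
Your outline matches the paper's approach: the paper does not give an independent proof of this lemma either, but defers to the arguments of \cite{Bou08} and \cite{Z23}, exactly the references you invoke. Your treatment of the free contribution and of the second assertion (continuity upgrading a.e.\ convergence to pointwise convergence via the Poisson integral) is fine.

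There is, however, a genuine gap in your middle paragraph. You write that the identification of the Fatou boundary value of $-\Re w$ with the real-energy sum $\sum_{i=1}^N\gamma_i(E)$ follows by ``passing to the limit $a\to 0^+$ in the defining formula of Definition~\ref{def_lyap_exp}'' via dominated convergence. This does not work: Definition~\ref{def_lyap_exp} is an $n\to\infty$ limit, and what you need is to interchange $\lim_{n\to\infty}$ and $\lim_{a\to 0^+}$. Continuity of $E\mapsto T_{\omega^{(n)}}(E)$ and the uniform bound from Lemma~\ref{lem:Gronwall1} only give, for each fixed $n$, continuity of $\frac{1}{n}\E\log\|\Lambda^N\Phi_{E+ia}(n,\omega)\|$ in $a$; they say nothing about uniformity in $n$ of the convergence to $\sum\gamma_i$. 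The actual mechanism used in \cite{Bou08,Z23} (and in the classical Kotani--Simon and Carmona--Lacroix arguments) is \emph{subharmonicity}: $z\mapsto\sum_{i=1}^N\gamma_i(z)$ is subharmonic on all of $\C$ (as a decreasing limit, via subadditivity, of averages of $\log\|\Lambda^N\Phi_z(n,\omega)\|$, each of which is subharmonic because $z\mapsto\Phi_z(n,\omega)$ is holomorphic), harmonic on $\C\setminus\R$ by Theorem~\ref{thm:SSB}(1), and this combination forces the non-tangential limit to coincide with the real-axis value Lebesgue-a.e. You correctly flag this identification as ``the main obstacle'' in your last paragraph and cite the right references; but the sentence proposing dominated convergence as the tool should be replaced by the subharmonicity argument.
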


We can thus conclude the proof of Proposition~\ref{prop:thouless} with the arguments of~\cite{Bou08} and~\cite{Z23}. We then use the Hilbert transformation as in~\cite{Bou08} and~\cite{Z23} to get, through the Thouless formula proven in Proposition~\ref{prop:thouless}, the H\"older regularity of the integrated density of states from the one of the sum of the $N$ positive Lyapunov exponents. This achieve the proof of Theorem \ref{thm:reguDOS}.

\subsection{Multiscale analysis and localization}\label{sec:MSA}
In this last subsection, we explain how we can apply the multiscale analysis to get Anderson and dynamical localization in our setting.
\subsubsection{Conditions for the multiscale analysis}\label{subsec:MSA} 
We state here the hypotheses and the theorems on the multiscale analysis.
 We follow the presentation of the lecture notes~\cite{Kl08}. Then, we explain how we can apply them to our case. In the following paragraphs, $\{H_\omega \}_{\omega \in \Omega}$ will be any self-adjoint random ergodic operator on $L^2(\R, \C^{2N})$.

The first property gives the possibility to use generalized eigenfunctions.

Let $\Hilb:=L^2(\R,\d x;\C^{2N})$. Given $\nu>1/4$, we put, for $x\in\R$, $\langle x\rangle:=\sqrt{1+x^2}$ and we define the weighted spaces $\Hilb_\pm$ as
 $$
 \Hilb_\pm:=L^2(\R,\langle x\rangle ^{\pm4\nu}\d x;\C^{2N}).
 $$

 We define a duality map between $\Hilb_+$ and $\Hilb_-$ by the following sesquilinear form, where $\phi_1\in\Hilb_+$ and $\phi_2\in\Hilb_-$:
 $$
 \langle\phi_1,\phi_2\rangle_{\Hilb_+,\Hilb_-}:=\int\bar{\phi_1}(x)\cdot\phi_2(x)\d x.
 $$

We set $T$ to be the self-adjoint operator on $\Hilb$ given by multiplication by the function $\langle x\rangle ^{2\nu}$; note that $T^{-1}$ is bounded.

\begin{property}[SGEE]
 We say that an ergodic random operator $\{H_\omega \}_{\omega \in \Omega}$ satisfies the strong property of  generalized eigenfunction expansion (SGEE) in some open interval $I$ if, for some $\nu>1/4$, 
 \begin{enumerate}
  \item The set
 $$
 \Dom^\omega_+:=\{\phi\in\Dom(H_\omega)\cap \mathcal{H}_+; 
 H_\omega\phi\in \Hilb_+\}
 $$
 is dense in $\Hilb_+$ and is an operator core for $H_\omega$ with probability one.
 
 \item There exists a bounded, continuous function $f$ on $\R$, strictly positive on the spectrum of $H_\omega$,
 such that
 \begin{equation*}
  \ec\left\{[ \tr(T^{-1}f(H_\omega) \Pi_\omega(I)T^{-1})]^2\right\}<\infty,
 \end{equation*}
$\Pi_\omega$ being the spectral projection associated with $H_\omega$.
 \end{enumerate}
\end{property}
%
%
%
\begin{defi}
A measurable function $\psi:\R\to\C^{2N}$ is said to be a \emph{generalized eigenfunction} of $H_\omega$ with \emph{generalized eigenvalue} $E$ if $\psi\in\Hilb_-\backslash\{0\}$ 
and 
 $$
 \langle H_\omega\phi,\psi \rangle_{\Hilb_+,\Hilb_-}=E\langle \phi,
 \psi \rangle_{\Hilb_+,\Hilb_-},\ \mbox{ for all } \phi\in\Dom_+^\omega.
 $$
\end{defi}
%
%
%
As explained in \cite{Kl08}, when (SGEE) holds, a generalized eigenfunction which is in $\Hilb$ is a \emph{bona fide} eigenfunction. 
Moreover, if $\mu_\omega$ is the spectral measure for the restriction of $H_\omega$ to the Hilbert space $E_\omega(I)\Hilb$, then $\mu_\omega$-almost every $E$ is a generalized eigenvalue of $H_\omega$.

The following properties are about finite volume operators, restricted to
  intervals of the form $\Lambda_L(x):=[x-\ell L,x+\ell L]$.

\begin{defi}
We say that  an ergodic random family of operators $\{H_\omega\}_{\omega\in \Omega}$ is \emph{standard}  if for each $x\in\Z$, $L\in\N$ there is a measurable map $H_{\cdot,x,L}$ from $\Omega$ to self-adjoint operators
 on $L^2(\Lambda_L(x),\C^{2N})$ such that 
 $$
   U(y)H_{\omega,x,L}U(-y)=H_{\tau_y(\omega),x+y,L}
 $$ 
 where $\tau$ and $U$ define the ergodicity:
 $$
 U(y)H_\omega U(y)^*=H_{\tau_y(\omega)}.
 $$
\end{defi}
We can then define $R_{\omega,x,L}(z):=(H_{\omega,x,L}-z)^{-1}$ as the resolvent of 
$H_{\omega,x,L}$ and $\Pi_{\omega,x,L}(\cdot)$ as its spectral projection.

We now enumerate the properties which are needed for multiscale analysis to be performed, yielding thus various localization properties.

\begin{defi}
An event is said to be \emph{based in a box} $\Lambda_L(x)$ if it is determined by 
conditions on the finite volume operators 
$(H_{\omega,x,L})_{\omega\in\Omega}$.
\end{defi}
\begin{property}[IAD]
Events based in disjoint boxes are independent.
\end{property}

The following properties are to hold in a fixed open interval $I$. We will denote by $\chi_{x,L}$ the characteristic function of $\Lambda_L(x)$ and 
$\chi_x := \chi_{x,1}$. We also denote  by $\Gamma_{x,L}$ the characteristic function of the union of two regions near the boundary of $\Lambda_L(x)$: $[x-\ell(L-1),x-\ell(L-3)]\cup[x+\ell(L-3),x+\ell(L-1)]$. 
\begin{property}[SLI]\label{SLI}
For any compact interval $J\subset I$ there exists a finite constant $\kappa_J$ such that, given $L$, $l'$, $l''\in 2\N$, $x$, $y$, $y'\in\Z$ with 
 $\Lambda_{l''}(y)\subset\Lambda_{l'-3}(y')\subset\Lambda_{L-6}(x)$, then, for $\mathbb{P}$-almost every $\omega$, if $E\in J$ with $E\notin\sigma(H_{\omega,x,L })\cup\sigma(H_{\omega,y',l'})$
 we have
 \begin{equation}\label{eq:SLI-cond}
  \|\Gamma_{x,L}R_{\omega,x,L}(E)\chi_{y,l''}\|\leqslant\kappa_J\|\Gamma_{y',l'}R_{\omega,y',l'}(E)\chi_{y,l''}\|\|\Gamma_{x,L}R_{\omega,x,L}(E)\Gamma_{y',l'}\|.
 \end{equation}

\end{property}

\begin{property}[EDI]
 For any compact interval $J\subset I$ there exists a finite constant $\tilde{\kappa}_J$ such that for $\mathbb{P}$-almost every $\omega$, given a generalized eigenfunction $\psi$ of $H_\omega$
 with generalized eigenvalue $E\in J$, we have, for any $x\in\Z$ and $L\in2\N$ with $E\notin\sigma(H_{\omega,x,L})$, that
 \begin{equation*}
  \|\chi_x\psi\|\leqslant\tilde{\kappa}_J\|\Gamma_{x,L}
  R_{\omega,x,L}(E)\chi_x\|\|\Gamma_{x,L}\psi\|.
 \end{equation*}
\end{property}

\begin{property}[NE]
 For any compact interval $J\subset I$ there exists a finite constant $C_J$ such that, for all $x\in\Z$ and $L\in2\N$,
 \begin{equation*}
  \ec \left(\tr \left(\Pi_{\omega,x,L}(J)\right)\right)\leqslant C_J L.
 \end{equation*}

\end{property}

%
\begin{property}[W]\label{propWeg}
 For every $\beta\in(0,1)$ and every $\sigma>0$, there exists $L_0\in\N$ and $\alpha>0$ such that
 \begin{equation}\label{eq_propWeg}
  \pc(\dist(E,\sigma(H_{\omega,x,L}))\leq \e^{-\sigma L^\beta})\leq\e^{-\alpha L^\beta}
 \end{equation}
for all $E\in I$, $x\in\Z$ and $L\geq L_0$.
\end{property}
This Wegner estimate is not exactly similar to the one stated in \cite{Kl08}, but it is possible to use this version for multiscale analysis, as explained in~\cite{CKM} and~\cite[Remark~4.6]{Kl08}.

The last property depends on several parameters: $\theta>1$, $E_0\in\R$ and $L_0\in6\N$.
\begin{property}[ILSE($\theta$, $E_0$, $L_0$)]
 \begin{equation}\label{eq:ILSE}
  \pc\left\{\left\|\Gamma_{0,L_0}R_{\omega,0,L_0}(E_0)
  \chi_{0,L_0/3}\right\|
  \leqslant\frac{1}{L_0^\theta}\right\}>1-\frac{1}{841}.
 \end{equation}
\end{property}

 These properties are the hypotheses of the bootstrap multiscale analysis.
\begin{defi}
Given $E\in\R$, $x\in\Z$ and $L\in6\N$ with $E\notin \sigma(H_{\omega,x,L})$, we say that the box $\Lambda_L(x)$ is $(\omega,m,E)$-regular for a given $m>0$ if
\begin{equation}\label{eq:regular}
  \left\|\Gamma_{x,L}R_{\omega,x,L}(E)\chi_{x,L/3} \right\|
  \leqslant \e^{-mL}.
\end{equation}
\end{defi}

%
%
%
In the following, we denote
 $$
 [L]_{3\N}=\sup\{n\in3\N | n\leqslant L\}.
 $$
\begin{defi}
For $x$, $y\in\Z$, $L\in3\N$, $m>0$ and $I\subset\R$ an interval, we denote
\begin{equation*}
 R(m,L,I,x,y) 
  =\left\{\omega;\text{for every }E'\in I\text{ either }\Lambda_L(x)\text{ or }\Lambda_L(y)\text{ is }(\omega,m,E')\text{-regular.}\right\}.
\end{equation*}
The multiscale analysis region $\Sigma_{MSA}$ for $\{H_\omega \}_{\omega \in \Omega}$ is the set of $E\in\Sigma$  for which there exists some open interval $I \ni E$
such that, given any $\zeta$, $0<\zeta<1$ and $\alpha$, $1<\alpha<\zeta^{-1}$, there is a length scale $L_0\in3\N$ and a mass $m>0$ so if we set $L_{k+1}=[L_k^\alpha]_{3\N}$, $k=0,1, \ldots $,
we have
\begin{equation*}
 \mathbb{P}\left\{R(m,L_k,I,x,y)\right\}\geqslant 1-e^{-L_k^\zeta}
\end{equation*}
for all $k\in\N$, $x$, $y\in\Z^d$ with $|x-y|>L_k$.
\end{defi}
%
%
%

\begin{thm}[Multiscale analysis - Theorem 5.4 p136 of \cite{Kl08}]\label{pi}
 Let $\{H_\omega \}_{\omega \in \Omega}$ be a standard ergodic random operator with (IAD) and properties (SLI), (NE) and (W) fulfilled in an open interval $I$ and  let $\Sigma$ be the almost sure spectrum of $\{H_\omega \}_{\omega \in \Omega}$.  Given $\theta>1$, for each $E\in I$ there exists a finite scale
 $\mathcal{L}_\theta(E)=\mathcal{L}_\theta(E)>0$ bounded on compact subintervals of $I$ such that, if for a  given $E_0\in\Sigma\cap I$ we have (ILSE)($\theta$, $E_0$, $L_0$) at some scale $L_0\in3\N$ with $L_0>\mathcal{L}_\theta(E_0)$,
 then $E_0\in\Sigma_{MSA}$.
\end{thm}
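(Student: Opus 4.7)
The plan is to carry out the four-stage bootstrap multiscale analysis of Germinet and Klein, using the single-scale input (ILSE)$(\theta, E_0, L_0)$ to generate progressively stronger decay and probability estimates along a geometric sequence of length scales $L_{k+1}=[L_k^\alpha]_{3\N}$. The philosophy is standard for multiscale analysis: at each scale $L_{k+1}$, cover the box $\Lambda_{L_{k+1}}(x)$ by sub-boxes at scale $L_k$; use (SLI) to glue together resolvent estimates on ``good'' (i.e.\ regular) sub-boxes into an estimate on the larger box; and control the probability of ``bad'' sub-boxes via the inductive hypothesis, combined with (IAD) to make disjoint bad events independent, (NE) to bound the number of energies requiring control, and (W) to exclude resonances between $E_0$ and the spectrum of the sub-box operator.

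First I would execute the deterministic step: fix $E \in I$ outside the spectra of the relevant finite-volume operators, and show that if a definite fraction of the sub-boxes at scale $L_k$ are $(\omega,m,E)$-regular and no two nonregular sub-boxes are too close, then (SLI) forces $\Lambda_{L_{k+1}}(x)$ to also be $(\omega,m',E)$-regular for a slightly smaller mass $m'$. Iterating $\alpha$ with care, one shows $m'\to m$, so no mass is lost in the limit. Then I would carry out the probabilistic step in four substages: (i) starting from the modest input $1-1/841$ at scale $L_0$ from (ILSE), bootstrap to an inverse-polynomial probability estimate at large scales; (ii) upgrade the probability to sub-exponentially small; (iii) improve the decay of the resolvent from inverse-polynomial to exponential, $\e^{-mL_k}$; (iv) finally combine exponential decay with sub-exponential probability $1-\e^{-L_k^\zeta}$, which is exactly the definition of $\Sigma_{\mathrm{MSA}}$. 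At each substage the previous output feeds (SLI) at the next scale, while (W), in the weak form~\eqref{eq_propWeg}, controls the probability that $E_0$ is resonant with the spectrum of a sub-box.

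The main obstacle is the initial probability amplification. Starting only from $\mathbb{P}>1-1/841$ at one scale, one must show that at scale $L_1$ the probability that two disjoint bad sub-boxes occur is already much smaller: by (IAD) this probability factorizes into the product of individual bad probabilities, and provided $L_0>\mathcal{L}_\theta(E_0)$, the combinatorial cost of choosing two sub-boxes inside $\Lambda_{L_1}(0)$ does not overwhelm the gain. The threshold $\mathcal{L}_\theta(E)$ is precisely what is needed for the combinatorial constants stemming from (NE) and the norm $\kappa_J$ in (SLI) to be absorbed by the decay $L_0^{-\theta}$. Because $\kappa_J$ and $C_J$ are uniform on compact subintervals, $\mathcal{L}_\theta(\cdot)$ is bounded on such subintervals. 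A subsidiary difficulty is that the Wegner estimate here is weak (sub-exponential rather than polynomial); this is the modification introduced in~\cite{CKM}, which requires checking in substage (i) that an inverse-polynomial bad-set probability is already enough to close the bootstrap, since the Wegner input loses only a sub-exponentially small quantity, which is ultimately harmless in the geometric iteration.

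Once stage (iv) is reached, the conclusion $E_0 \in \Sigma_{\mathrm{MSA}}$ follows by unpacking the definition: the event that, for every $E'$ in some fixed neighborhood of $E_0$, at least one of $\Lambda_{L_k}(x)$ and $\Lambda_{L_k}(y)$ is $(\omega,m,E')$-regular, holds with probability at least $1-\e^{-L_k^\zeta}$ for all sufficiently separated $x,y$. The uniformity in $E'$ is obtained by a standard covering of the neighborhood using (W) to exclude an exceptional set of energies and (NE) to keep the covering size polynomial in $L_k$.
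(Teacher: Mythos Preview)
The paper does not prove this theorem at all: it is quoted verbatim from Klein's lecture notes \cite{Kl08} (Theorem~5.4) and used as a black box, so there is no ``paper's own proof'' to compare against. Your outline is a faithful sketch of the Germinet--Klein bootstrap multiscale analysis as presented in \cite{GKbootstrap,Kl08} (with the Carmona--Klein--Martinelli modification for the weak Wegner estimate), and is the correct content of the cited result.
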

%
%
%

\begin{thm}[Localization - Theorem~6.1 p139 of \cite{Kl08}]\label{thm:main-section4}
 Let $\{H_\omega \}_{\omega \in \Omega}$ be a standard ergodic operator with (IAD) and properties (SGEE) and (EDI) in an open interval $I$. Then, $\{H_\omega \}_{\omega \in \Omega}$ exhibits both dynamical and Anderson localization on $
 \Sigma_{MSA}\cap I$.
\end{thm}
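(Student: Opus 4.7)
The plan is to deduce both Anderson and dynamical localization on $\Sigma_{MSA} \cap I$ from the three analytic ingredients (IAD), (SGEE) and (EDI), using the multiscale output as the driving probabilistic estimate. Fix $E_0 \in \Sigma_{MSA} \cap I$; by the very definition of $\Sigma_{MSA}$ there exist an open interval $J \ni E_0$, a mass $m > 0$, parameters $0 < \zeta < 1$ and $\alpha \in (1, \zeta^{-1})$, and a scale sequence $L_{k+1} = [L_k^\alpha]_{3\N}$ along which
\begin{equation*}
\pc\bigl(R(m, L_k, J, x, y)\bigr) \geq 1 - \e^{-L_k^\zeta}
\end{equation*}
for all $x, y \in \Z$ with $|x - y| > L_k$. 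By (SGEE), Anderson localization on $\Sigma \cap J$ reduces to showing that every generalized eigenfunction of $H_\omega$ with generalized eigenvalue $E \in J$ decays exponentially, since the spectral measure on $J$ is almost surely supported on generalized eigenvalues and generalized eigenfunctions lying in $\Hilb$ are bona fide eigenfunctions.

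For Anderson localization, I would iterate (EDI) along a chain of regular boxes. A Borel-Cantelli argument applied to the MSA probability estimates along the sequence $(L_k)$, combined with a union bound over a polynomial-size grid of centers (made possible by the summability of $\sum_k \e^{-L_k^\zeta}$), ensures that almost surely, for all $k$ sufficiently large and all lattice points $y$ with $|y| > L_k$, either $\Lambda_{L_k}(0)$ or $\Lambda_{L_k}(y)$ is $(\omega, m, E)$-regular. Since any generalized eigenfunction $\psi$ is nontrivial somewhere, the singular boxes must concentrate near a single localization center; chaining (EDI) through concentric annuli around this center compounds factors $\e^{-m L_k}$, yielding after control of polynomial losses the bound $\|\chi_x \psi\| \leq C \e^{-m' |x|}$ for some $m' > 0$.

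For dynamical localization, the key estimate to establish is the off-diagonal kernel decay
\begin{equation*}
\ec\left( \sup_{t \in \R} \| \chi_x \, \e^{-\ii t H_\omega} \mathbf{1}_J(H_\omega) \chi_y \|^2 \right) \leq C \e^{-m'' |x - y|},
\end{equation*}
from which \eqref{eq_def_loc_dynamique_continu} follows by summing a polynomial weight against an exponential tail. One proceeds by chaining the MSA resolvent estimates through (SLI) across many boxes between $x$ and $y$, translating resolvent control into spectral-projector control via a Helffer-Sjöstrand representation of $\mathbf{1}_J(H_\omega)$, and finally invoking (IAD) to bootstrap the subexponential probability bound $1 - \e^{-L^\zeta}$ into a polynomial bound $1 - |x-y|^{-q}$ for arbitrarily large $q$. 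The main obstacle lies precisely in this bootstrap: the $\sup$ in $t$ requires probability estimates strong enough to survive a union bound over all pairs $(x, y) \in \Z^2$, which is the technical heart of the Germinet-Klein scheme and the reason (IAD) (independence of events based in disjoint boxes) cannot be dispensed with. Once the kernel decay is in hand, both the exponential decay of eigenfunctions (Anderson) and the moment bound (DL) drop out of the same estimate, establishing the theorem on $\Sigma_{MSA} \cap I$.
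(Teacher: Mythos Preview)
The paper does not prove this theorem: it is quoted verbatim as Theorem~6.1 of \cite{Kl08} and used as a black box, so there is no ``paper's own proof'' to compare against. Your sketch is therefore an attempt to reconstruct the argument of the cited reference rather than of the present paper.

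That said, your outline for dynamical localization has a genuine gap. You invoke (SLI) and a Helffer--Sj\"ostrand representation to propagate resolvent bounds between $x$ and $y$, but (SLI) is \emph{not} a hypothesis of this theorem; it is used upstream, in Theorem~\ref{pi}, to establish membership in $\Sigma_{MSA}$, and is not available here. The actual route in \cite{Kl08} and in Germinet--Klein is different: one uses the generalized eigenfunction expansion furnished by (SGEE) to write $\chi_x \e^{-\ii t H_\omega}\mathbf{1}_J(H_\omega)\chi_y$ as an integral over generalized eigenfunctions, and then controls each such eigenfunction pointwise via (EDI) and the MSA regularity output. The ``either $\Lambda_L(x)$ or $\Lambda_L(y)$ is regular'' dichotomy from the definition of $\Sigma_{MSA}$, valid simultaneously for every $E'\in J$, is precisely what lets this work uniformly in $t$ without a Helffer--Sj\"ostrand detour. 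Your Anderson-localization sketch via Borel--Cantelli plus iterated (EDI) is essentially correct in spirit and matches the standard argument.
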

%
%

We can summarize in the following figure, the ingredients of a proof of localization using multiscale analysis.

\begin{equation}\label{fig_localization_plan}
\underbrace{\mathrm{(IAD)}+\mathrm{(SLI)}+\mathrm{(NE)}+\mathrm{(W)}+\mathrm{(ILSE)}}_{\Downarrow}
\end{equation} 
$$\qquad \qquad \qquad \qquad \ \underbrace{\mathrm{(MSA)}+\mathrm{(SGEE)}+\mathrm{(EDI)}}_{\Downarrow} $$
$$\qquad \qquad \qquad \qquad \ \overbrace{\mathrm{(AL)}+\mathrm{(DL)}}$$
\bigskip

These theorems make it possible to prove Theorem~\ref{thm:loc_criterion1} and Theorem~\ref{thm:loc_criterion2}, by applying the multiscale analysis to the random family of operators $\{\DO \}_{\omega \in \Omega}$ defined by~\eqref{def_op_D_general}.


\subsubsection{The Wegner estimate}\label{subsec:wegner}
We begin with the Wegner estimate: we will get it from the regularity of the integrated density of states. The proof is a straightforward adaptation of what we have already done in~\cite{Bou09} and~\cite{Z23}. As a consequence, we only give a sketch of it. 

We prove the following proposition.
\begin{prop}\label{prop:wegner}
    Let $I$ be a compact interval included in an open interval $\tilde I$ on which Assumption~$\mathbf{(L^{(N)})}$ holds or on which Assumption~$\mathbf{(L^{(N-1)})}$ holds and $\gamma_N$ is identically $0$. Then (W) is satisfied on $I$.
\end{prop}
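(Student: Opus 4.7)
The plan is to derive the Wegner estimate as a consequence of the Hölder continuity of the integrated density of states, combined with a Chebyshev-type argument. Specifically, by Theorem~\ref{thm:regu_Lyap}, the sum $\gamma_1+\dots+\gamma_N$ is Hölder continuous on (an open neighborhood of) $I$, and then Theorem~\ref{thm:reguDOS} lifts this to Hölder continuity of $F$: there exist $C'>0$ and $\alpha'>0$ such that, for all $E,E'$ in $I$, $|F(E)-F(E')|\leq C'|E-E'|^{\alpha'}$.

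The key intermediate step is a finite-volume comparison: one must show there exist constants $\tilde C>0$ and $L_1\in\N$ such that, for every $E\in I$, every $x\in\Z$, every $L\geq L_1$, and every $\eta\in(0,1]$ with $[E-\eta,E+\eta]\subset\tilde I$,
\begin{equation*}
\ec\bigl(\tr\bigl(\mathbf{1}_{[E-\eta,E+\eta]}(\DONxL)\bigr)\bigr) \leq \tilde C\, L\, \eta^{\alpha'}.
\end{equation*}
This bound is obtained exactly as in~\cite[Section~3]{Bou09} and~\cite[Section~6]{Z23}: one approximates $\mathbf{1}_{[E-\eta,E+\eta]}$ from above by a smooth compactly supported function $\phi_\eta$ with $\phi_\eta\geq \mathbf{1}_{[E-\eta,E+\eta]}$ and $\phi_\eta$ supported in $[E-2\eta,E+2\eta]$, invokes the convergence $\frac{1}{2\ell L}\ec(\tr(\phi(\DONL)))\to \nu(\phi)$ of the density of states (uniform in $x$ by ergodicity), and uses the Hölder estimate~\eqref{eq:DOS_Hold} on $F(E+2\eta)-F(E-2\eta)$. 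The boundary contribution from the Dirichlet cut at $x\pm \ell L$ is controlled by a spectral shift argument, producing an $O(1)$ correction that is absorbed by the factor $L$.

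Given the trace bound, the Wegner estimate follows from Chebyshev's inequality:
\begin{equation*}
\pc\bigl(\dist(E,\sigma(\DONxL))\leq \eta\bigr) \leq \pc\bigl(\tr(\mathbf{1}_{[E-\eta,E+\eta]}(\DONxL))\geq 1\bigr) \leq \tilde C\, L\, \eta^{\alpha'}.
\end{equation*}
Fix $\beta\in(0,1)$ and $\sigma>0$. Setting $\eta:=e^{-\sigma L^\beta}$ gives
\begin{equation*}
\pc\bigl(\dist(E,\sigma(\DONxL))\leq e^{-\sigma L^\beta}\bigr) \leq \tilde C\, L\, e^{-\sigma\alpha' L^\beta},
\end{equation*}
and for $L_0$ large enough and any $\alpha\in(0,\sigma\alpha')$, the right-hand side is bounded by $e^{-\alpha L^\beta}$ for all $L\geq L_0$, uniformly in $E\in I$ and $x\in\Z$. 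This is precisely~\eqref{eq_propWeg}.

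The main obstacle is the finite-volume trace bound rather than the concluding Chebyshev step: one has to convert a statement about the infinite-volume limiting measure $\nu$ into a quantitative bound valid at fixed (albeit large) $L$, uniformly in the center $x$ of the box. As mentioned, this is handled exactly as in~\cite{Bou09,Z23} via ergodic translation-invariance in distribution, approximation of $\mathbf{1}_{[E-\eta,E+\eta]}$ by smooth bumps, and a spectral shift bound for the two Dirichlet boundary conditions; no new ideas are needed in the Dirac setting since the only operator-dependent input is the Hölder continuity of $F$, which we have already established on $\tilde I\supset I$.
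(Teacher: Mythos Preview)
Your approach via a finite-volume expected trace bound and Chebyshev's inequality is \emph{not} the route taken in the paper, and it has a genuine gap that cannot be repaired in the singular-potential setting.

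The problem lies in your claimed bound
\[
\ec\bigl(\tr\bigl(\mathbf{1}_{[E-\eta,E+\eta]}(\DONxL)\bigr)\bigr)\leq \tilde C\,L\,\eta^{\alpha'}.
\]
Your justification invokes the convergence $\tfrac{1}{2\ell L}\ec(\tr(\phi(\DONL)))\to\nu(\phi)$ together with a spectral-shift argument for the two Dirichlet cuts. But the spectral-shift correction from a finite-rank boundary perturbation is an \emph{additive} constant of order~$N$, not something ``absorbed by the factor~$L$''. What one actually obtains is
\[
\ec\bigl(\tr\bigl(\mathbf{1}_{[E-\eta,E+\eta]}(\DONxL)\bigr)\bigr)\leq 2\ell L\bigl(F(E+\eta)-F(E-\eta)\bigr)+C_0
\]
with $C_0$ independent of $L$ and~$\eta$. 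When you set $\eta=\e^{-\sigma L^\beta}$, the first term goes to zero while $C_0$ stays fixed, so Chebyshev gives only $\pc(\dist(E,\sigma(\DONxL))\leq\eta)\leq C_0+o(1)$, which is useless. This is precisely why the Chebyshev/averaged-trace method fails for Bernoulli-type randomness: one cannot hope for $\ec(\#\text{ eigenvalues in a window})$ to be small when the window width is subpolynomial in the volume.

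The paper instead follows the Carmona--Klein--Martinelli strategy, which bypasses the expected trace entirely. It uses two ingredients: (i) Lemma~\ref{lemWeg}, which, via the H\"older regularity of $F$, bounds the probability that there is an eigenvalue near~$E$ whose eigenfunction has \emph{small} boundary data; and (ii) the negative-moment estimate $\ec\bigl(\|\Lambda^p\Phi_E(n,\omega)x\|^{-\delta}\bigr)\leq\e^{-\xi_1 n}$ from~\cite[Lemma~3]{Bou09}, which controls the complementary event (eigenfunction with boundary data not small forces atypical contraction of the cocycle). Combining the two and following~\cite[Section~5]{Bou09} or~\cite[Section~6.2]{Z23} yields~\eqref{eq_propWeg}. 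The transfer-matrix input in~(ii) is essential and has no counterpart in your argument.
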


If $I$ is such an interval, we know from the previous sections  that \eqref{eq:DOS_Hold} is satisfied on $I$. This implies the following lemma. It is proven in~\cite{Z23} in the case $N=1$, but its generalization to higher $N$ is straightforward.
\begin{lemma}[\cite{Z23}, Lemma~6.16]\label{lemWeg}
 There exists $\rho>0$ and $C<\infty$ such that, for every $E\in I$ and every $\epsilon>0$, we have for $L\geq L_0$
 \begin{equation}\label{proba1}\begin{split}
 & \pc\{\text{There exists }E'\in(E-\epsilon,E+\epsilon)\text{ and }\phi\in\Dom\left(\DONL \right),\|\phi\|=1,\text{ such that }\\& (\DONL-E')\phi=0, |\phi^\downarrow(-\ell L)|^2+|\phi^\downarrow(\ell L)|^2\leq\epsilon^2\}\leq CL\epsilon^\rho.\end{split}
 \end{equation}
\end{lemma}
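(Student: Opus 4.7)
The plan is to drop the restrictive boundary-value condition and reduce the estimate to a classical eigenvalue-counting Wegner-type bound, which will then follow from the H\"older continuity of the integrated density of states established in Theorem~\ref{thm:reguDOS}. The event inside the probability is contained in the larger event
\[
B_\epsilon := \left\{\omega : \sigma(\DONL) \cap (E-\epsilon, E+\epsilon) \neq \emptyset\right\},
\]
so it suffices to prove $\pc(B_\epsilon) \leq C L \epsilon^\rho$ for some $\rho > 0$ and a constant uniform in $E \in I$ and $L \geq L_0$.

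I would then apply Markov's inequality to the eigenvalue counting function $N_\omega(J) := \tr\,\mathbf{1}_J(\DONL)$, which gives $\pc(B_\epsilon) \leq \ec[N_\omega((E-\epsilon, E+\epsilon))]$. Choosing a smooth non-negative bump $\chi_\epsilon$ supported in $(E-2\epsilon, E+2\epsilon)$ with $\chi_\epsilon \geq \mathbf{1}_{(E-\epsilon, E+\epsilon)}$, we obtain $\ec[N_\omega] \leq \ec[\tr\,\chi_\epsilon(\DONL)]$. The central step is then the comparison of this finite-volume trace with the thermodynamic density of states: by the ergodic construction of $\nu$ developed in Appendix~B of \cite{Z23}, which transfers from $N=1$ to $N \geq 1$ component-wise in the matrix-valued potential, one obtains for $L \geq L_0$
\[
\ec\!\left[\tr\,\chi_\epsilon(\DONL)\right] \;\leq\; 2\ell L \int \chi_\epsilon\,\rd\nu \;+\; C_0,
\]
with $C_0$ a constant independent of $L$ and $\epsilon$.

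The H\"older regularity of $F$ provided by Theorem~\ref{thm:reguDOS} then gives $\int \chi_\epsilon\,\rd\nu \leq F(E+2\epsilon) - F(E-2\epsilon) \leq C'\epsilon^{\alpha'}$, and combining the three estimates yields $\pc(B_\epsilon) \leq C'' L \epsilon^{\alpha'}$, so setting $\rho := \alpha'$ proves the lemma.

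The only nontrivial point is the uniform-in-$\epsilon$ bound for the finite-volume correction $C_0$; this is handled exactly as in the scalar case treated in \cite{Z23}, since the boundary corrections reduce to finite-rank differences of resolvents whose bounds do not depend on the width of the support of $\chi_\epsilon$. This is what makes the generalization from $N=1$ to arbitrary $N$ indeed straightforward, as asserted in the excerpt.
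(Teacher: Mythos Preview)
There is a genuine gap. Your chain of inequalities produces
\[
\pc(B_\epsilon)\;\le\;\ec\big[\tr\,\chi_\epsilon(\DONL)\big]\;\le\;2\ell L\int\chi_\epsilon\,\rd\nu+C_0\;\le\;C'L\epsilon^{\alpha'}+C_0,
\]
and the additive boundary term $C_0$ does \emph{not} vanish as $\epsilon\to0$: for a finite box the Dirichlet eigenvalue count differs from $2\ell L\,\nu$ by an $O(1)$ boundary correction that depends on $\|\chi_\epsilon\|_\infty$, not on the width of its support. Your final line ``combining the three estimates yields $\pc(B_\epsilon)\le C''L\epsilon^{\alpha'}$'' simply drops $C_0$. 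Since the lemma must hold for all $\epsilon>0$ with $L\ge L_0$ fixed independently of $\epsilon$ (and is later applied with $\epsilon=\e^{-\sigma L^\beta}$ in~\eqref{eq_propWeg}), this constant destroys the estimate. The appeal to \cite{Z23} in your last paragraph does not help: the finite-rank boundary correction there is $O(1)$, not $O(\epsilon^\rho)$.

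The condition $|\phi^\downarrow(-\ell L)|^2+|\phi^\downarrow(\ell L)|^2\le\epsilon^2$ that you discarded is exactly what removes $C_0$. Since $\phi^\uparrow(\pm\ell L)=0$ by the Dirichlet domain and $|\phi^\downarrow(\pm\ell L)|\le\epsilon$ by hypothesis, the extension of $\phi$ by zero to a larger box $\Lambda_{kL}$ satisfies $\|(\DO^{(kL)}-E')\tilde\phi\|\le C\epsilon$. Tiling $\Lambda_{kL}$ by $k$ independent translates of $\Lambda_L$, each bad sub-box produces such a quasi-mode, and the quasi-modes have disjoint supports, hence are orthogonal. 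This forces at least (expected number of bad sub-boxes) $=kp$ eigenvalues of $\DO^{(kL)}$ in $(E-C\epsilon,E+C\epsilon)$, so
\[
kp\;\le\;\ec\big[N_{kL}\big((E-C\epsilon,E+C\epsilon)\big)\big].
\]
Dividing by $2k\ell L$ and sending $k\to\infty$ (which is where $C_0/(2k\ell L)\to0$) yields $p\le 2\ell L\,\nu\big((E-C\epsilon,E+C\epsilon)\big)\le CL\epsilon^{\alpha'}$ via Theorem~\ref{thm:reguDOS}. This is the argument of \cite[Lemma~6.16]{Z23}, and its extension to $N\ge1$ is indeed mechanical once the small-boundary condition is used rather than thrown away.
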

Moreover, hypotheses of Proposition \ref{prop:wegner} also implies Lemma~3 of~\cite{Bou09}, which is \emph{a priori} only applicable random \emph{Schr\"odinger} operators, but can also be applied to our setting since its proof only consists of algebraic manipulations of transfer matrices.
\begin{lemma}[\cite{Bou09}, Lemma~3]
    For all $p=1,\dots,N$, there exist $\xi_1>0$, $\delta>0$ and $n_1\in\N$ such that, for every $E\in I$, $n\geq n_1$ and $x\in \Lambda^p\C^{2N}$ with $\|x\|=1$, we have 
    \begin{equation}
        \ec\left(\|\Lambda^p\Phi_E(n,\omega)x\|^{-\delta}\right)\leq\e^{-\xi_1n}.
    \end{equation}
\end{lemma}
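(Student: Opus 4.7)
The statement is a Le Page type large-deviation estimate for the exterior-power cocycle $\Lambda^p \Phi_E(n,\omega)$, uniform in $E \in I$. My plan is to apply the classical large deviation theorems for products of i.i.d.\ random matrices from~\cite[Chap.~V]{BL} to this exterior cocycle, and then upgrade the pointwise estimates to uniform ones on the compact interval $I$.

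For each fixed $p \in \{1,\dots,N\}$, the hypotheses of Theorem~\ref{thm:loc_criterion1} or Theorem~\ref{thm:loc_criterion2} provide $p$-contractivity and $L_p$-strong irreducibility of $G(E)$ (of the appropriate $J$ or $(J,S)$ type). These properties transfer to the representation $\Lambda^p G(E)$ on the invariant subspace $L_p$, making its action strongly irreducible and contracting in the sense of~\cite{BL}. The top Lyapunov exponent of this exterior cocycle equals $\gamma_1(E)+\dots+\gamma_p(E)$, which is strictly positive under our assumptions (the analogue of $\mathbf{L_4^{(p)}}$ for each $p$). A Le Page type theorem then yields the real-analyticity of $s \mapsto \lim_n \frac{1}{n}\log \E(\|\Lambda^p\Phi_E(n,\omega)x\|^s)$ in a neighborhood of $0$, with derivative $\gamma_1(E)+\dots+\gamma_p(E)>0$ at $s=0$. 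Evaluating at $s = -\delta$ for $\delta > 0$ sufficiently small gives a strictly negative value, hence the exponential decay with rate $\xi_1 > 0$, pointwise in $E$.

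The main technical obstacle is the uniformity in $E \in I$. For this, I would exploit the analytic (in fact entire) dependence of the transfer matrices $T_{\omega^{(n)}}(E)$ on $E$, the uniform boundedness of the potentials (which yields a uniform bound on $\|\Lambda^p T_{\omega^{(n)}}(E)\|$), the compactness of $I$, and the Hölder continuity of $E \mapsto \gamma_1(E)+\dots+\gamma_p(E)$ established in Theorem~\ref{thm:regu_Lyap}. Together with the continuous dependence of the Furstenberg measure on $E$ in the weak-$*$ topology (itself a consequence of the uniform contractivity/irreducibility hypotheses), these ingredients yield uniform constants $\delta$, $\xi_1$, $n_1$ on $I$.

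Since the entire argument rests only on the multiplicative structure of the cocycle and the group-theoretic action of the Furstenberg group on exterior powers, and not on any analytic feature specific to Schrödinger operators, the proof of~\cite[Lemma~3]{Bou09} applies verbatim to our Dirac setting $\{\DO\}_{\omega\in\Omega}$, provided one substitutes the Dirac transfer matrices for the Schrödinger ones.
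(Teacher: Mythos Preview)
Your proposal is correct and takes essentially the same approach as the paper: the paper does not give an independent proof of this lemma but simply remarks that the proof of \cite[Lemma~3]{Bou09} ``only consists of algebraic manipulations of transfer matrices'' and therefore carries over unchanged to the Dirac setting. Your sketch fills in what that argument actually contains (Le Page-type large deviations on the exterior cocycle, uniformity via compactness of $I$ and continuity of the invariant measure), which is more detail than the paper provides but not a different route.

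One minor caution: you invoke the H\"older continuity of $E\mapsto \gamma_1(E)+\dots+\gamma_p(E)$ from Theorem~\ref{thm:regu_Lyap} as part of the uniformity argument, but that theorem is stated only for $p=N$ (or $p=N-1$), and in any case the lemma is used \emph{en route} to the Wegner estimate, which sits logically alongside the regularity results rather than downstream of them. In the original \cite{Bou09} argument the uniformity in $E$ is obtained directly from the continuity of $E\mapsto \mu_E$ and compactness, without appealing to regularity of the Lyapunov exponents; you should rely on that mechanism instead to avoid circularity.
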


From these two lemmas, it is possible to prove  Proposition~\ref{prop:wegner}  following line by line the arguments of~\cite[Section~5]{Bou09} or~\cite[Section~6.2]{Z23}.

\subsubsection{The Initial Length Scale Estimate}\label{subsec:ILSE}

The last ingredient in order to apply a multiscale analysis scheme to our models is an ILSE which allows to start the induction proof. We denote by $\Lambda_L$ the interval $\Lambda_L(0)$.
We will prove an estimate stronger that~\eqref{eq:ILSE}, namely that there exists $L_0\in 3\N$ and $C,c, \delta>0$ such that, for all $L\geq L_0$ and all $E\in I$,
\begin{equation}\label{eq:ILSEstrong}
     \pc\left\{\left\|\Gamma_{0,L}R_{\omega,0,L}(E_0)
  \chi_{0,L/3}\right\|
  \leqslant \e^{-\delta L}\right\}\geq 1-C\e^{-cL}.
\end{equation}

As in most previous works, we will use the fact that the resolvent of the restricted operator has an integral kernel. 
We prove as in Equation~(6.11) of~\cite{Z23} that the resolvent $R_{\omega,0,L}(E)$ has an integral kernel, \emph{i.e.} there exists a function $G_{\Lambda_L}^\omega$ such that for all $\psi$ in $L^2(\Lambda_L)$ and almost every $y\in \Lambda_L$
\begin{equation}\label{def:int_ker}(\DONL-E)\int_{-\ell L}^{\ell L}G_{\Lambda_{L}}^\omega(E,x,y)\psi(x)\d x=\psi(y).\end{equation}
Note that~\cite{Z23} provides an explicit formula for this integral kernel, but we will not use it here.
We know from the Schur test that
\begin{equation}\label{eq:schur}
   \left\|\Gamma_{0,L}R_{\omega,0,L}(E)
  \chi_{0,L/3}\right\|\leq L \sup_{x,y\in\Lambda_L}|\Gamma_{0,L}(y)G_{\Lambda_{L}}^\omega(E,x,y)
  \chi_{0,L/3}(x)|.
\end{equation}
Our goal is therefore to bound $|G_{\Lambda_{L}}^\omega(E,x,y)|$ for $x$ and $y$ respectively in the support of $\chi_{0,L/3}$ and $\Gamma_{0,L}$, with probability exponentially close to $1$.

To this purpose, we give an explicit formula for $G_{\Lambda_{L}}^\omega$ at these points. For $E\in I$, we consider the two functions $\Phi_\pm=\left(\begin{smallmatrix}
    \Phi_\pm^\uparrow\\\Phi_\pm^\downarrow
\end{smallmatrix}\right)$ with values in $\mathcal{M}_{2\mathrm{N},\mathrm{N}}(\C)$, satisfying $\DO \Phi_\pm=E\Phi_\pm$ with 
\begin{equation}\label{bc_Phi_ILSE}
 \Phi_-(-\ell L)=\left(\begin{smallmatrix}
  0\\ I_{\mathrm{N}} 
\end{smallmatrix}\right)   \ \mbox{ and }\ \Phi_+(\ell L)=\left(\begin{smallmatrix}
    0\\ I_{\mathrm{N}} 
\end{smallmatrix}\right).  
\end{equation}


\begin{lemma}\label{lem:Green_kerD}
Let $\omega \in \Omega$ and let $x,y \in \Lambda_L$. Assume that $\Phi_+^\uparrow(x)$, $\Phi_+^\downarrow(x)$,  $\Phi_-^\uparrow(x)$, and $\Phi_-^\downarrow(x)$  are invertible as well as $\Phi_+^\downarrow(x)\Phi_+^\uparrow(x)^{-1}-\Phi_-^\downarrow(x)\Phi_-^\uparrow(x)^{-1}$ and $\Phi_+^\uparrow(x)\Phi_+^\downarrow(x)^{-1}-\Phi_-^\uparrow(x)\Phi_-^\downarrow(x)^{-1}$. The Green kernel of $\DONL$ is given by
 \begin{equation}\label{defGD}
  G_{\Lambda_{L}}^\omega(E,x,y)=\left\{\begin{array}{r l }\Phi_+(y)\alpha_+(x)\ &\text{ for }x\leq y\\
                            \Phi_-(y)\alpha_-(x)\ &\text{ for }x>y
                            \end{array}\right.
 \end{equation}
 where
 \begin{equation}\label{def:alpha+}
     \alpha_+(x):=\left(\begin{smallmatrix}
    \Phi_+^\uparrow(x)^{-1}\left(\Phi_+^\downarrow(x)\Phi_+^\uparrow(x)^{-1}-\Phi_-^\downarrow(x)\Phi_-^\uparrow(x)^{-1}\right)^{-1}, \Phi_+^\downarrow(x)^{-1}\left(\Phi_-^\uparrow(x)\Phi_-^\downarrow(x)^{-1}-\Phi_+^\uparrow(x)\Phi_+^\downarrow(x)^{-1}\right)^{-1} \end{smallmatrix}\right)
 \end{equation}
 and
  \begin{equation}\label{def:alpha-}
     \alpha_-(x):=\left(\begin{smallmatrix}\Phi_-^\uparrow(x)^{-1}\left(\Phi_+^\downarrow(x)\Phi_+^\uparrow(x)^{-1}-\Phi_-^\downarrow(x)\Phi_-^\uparrow(x)^{-1}\right)^{-1}, \Phi_-^\downarrow(x)^{-1}\left(\Phi_-^\uparrow(x)\Phi_-^\downarrow(x)^{-1}-\Phi_+^\uparrow(x)\Phi_+^\downarrow(x)^{-1}\right)^{-1}\end{smallmatrix}\right)
 \end{equation}
 \end{lemma}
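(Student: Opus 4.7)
The plan is to construct $G_{\Lambda_L}^\omega$ by the standard variation-of-parameters recipe for first-order linear systems, adapted to the matrix-valued Dirac setting, and then verify that the candidate kernel actually solves \eqref{def:int_ker}. Given $\psi\in L^2(\Lambda_L)$, consider
\[
u(y):=\int_{-\ell L}^{\ell L}G_{\Lambda_L}^\omega(E,x,y)\psi(x)\dd x
=\Phi_+(y)A_+(y)+\Phi_-(y)A_-(y),
\]
where $A_+(y):=\int_{-\ell L}^{y}\alpha_+(x)\psi(x)\dd x$ and $A_-(y):=\int_{y}^{\ell L}\alpha_-(x)\psi(x)\dd x$. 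First I would check that $u$ lies in the domain of $\DONL$: the boundary values $\Phi_-(-\ell L)=\Phi_+(\ell L)=\bigl(\begin{smallmatrix}0\\ I_N\end{smallmatrix}\bigr)$, together with $A_+(-\ell L)=0$ and $A_-(\ell L)=0$, yield $u^\uparrow(\pm \ell L)=0$ irrespective of $\alpha_\pm$, so the Dirichlet condition is built in.

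Next I would differentiate, using that each column of $\Phi_\pm$ satisfies the Dirac equation, i.e.\ $J\Phi_\pm'=(E-V_\omega)\Phi_\pm$. Applying Leibniz to $A_\pm$ and regrouping, the bulk contributions cancel in $Ju'+(V_\omega-E)u$ because $\Phi_\pm'A_\pm$-terms are killed by $J$, and one is left with the boundary contribution
\[
(\DONL-E)u(y)=J\bigl[\Phi_+(y)\alpha_+(y)-\Phi_-(y)\alpha_-(y)\bigr]\psi(y).
\]
The lemma therefore reduces to the purely algebraic identity
\[
\Phi_+(x)\alpha_+(x)-\Phi_-(x)\alpha_-(x)=J^{-1},
\]
which is the usual jump condition expressing the delta-function source in the Green's equation $(\DONL-E)_y G=\delta(x-y)I_{2N}$.

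To produce the explicit formulas \eqref{def:alpha+}--\eqref{def:alpha-}, I would decompose $\alpha_\pm(x)=(\alpha_\pm^{(1)}(x),\alpha_\pm^{(2)}(x))$ into two $N\times N$ blocks and split the identity above into four $N\times N$ equations matching the up/down and first/second-column blocks of $-J$. Two of these equations couple $\alpha_+^{(1)}$ to $\alpha_-^{(1)}$ (via the $\uparrow$ and $\downarrow$ rows), and the other two couple $\alpha_+^{(2)}$ to $\alpha_-^{(2)}$. Using invertibility of $\Phi_\pm^\uparrow(x)$ to eliminate $\alpha_-^{(1)}$ from one pair gives the single equation
\[
\bigl[\Phi_+^\downarrow(x)\Phi_+^\uparrow(x)^{-1}-\Phi_-^\downarrow(x)\Phi_-^\uparrow(x)^{-1}\bigr]\Phi_+^\uparrow(x)\alpha_+^{(1)}(x)=\pm I_N,
\]
and the hypothesis that the Schur complement in brackets is invertible is exactly what allows one to invert and recover \eqref{def:alpha+}. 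The same procedure with the $\downarrow$-blocks yields $\alpha_+^{(2)}(x)$, and $\alpha_-^{(i)}(x)$ are read off from the coupling relations; symmetry between $\Phi_+$ and $\Phi_-$ makes the formulas \eqref{def:alpha-} a mirror image.

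The derivation is essentially routine once the setup is right; the main obstacle is bookkeeping. One must keep track of the block convention $\alpha_\pm(x)\in\mathcal{M}_{N,2N}(\C)$ versus $\Phi_\pm(x)\in\mathcal{M}_{2N,N}(\C)$, make sure each inversion step is licensed by exactly the invertibility hypotheses stated in the lemma (the four individual blocks \emph{and} the two Schur complements), and carry the signs of $J^{-1}=-J$ correctly through the four block equations so that the final expressions align with \eqref{def:alpha+} and \eqref{def:alpha-}. Once the identity $\Phi_+(x)\alpha_+(x)-\Phi_-(x)\alpha_-(x)=J^{-1}$ is verified for these explicit $\alpha_\pm$, the computation of the previous paragraph shows $(\DONL-E)u=\psi$ on $\Lambda_L$ with $u\in\Dom(\DONL)$, which is \eqref{def:int_ker}.
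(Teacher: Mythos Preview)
Your approach is essentially the paper's: postulate the form \eqref{defGD}, use $(\DONL-E)\Phi_\pm=0$ to reduce the resolvent identity to the algebraic jump condition $\Phi_+(x)\alpha_+(x)-\Phi_-(x)\alpha_-(x)=J^{-1}$, and then solve the resulting $2\times 2$ block system under the stated invertibility hypotheses. Your explicit verification of the Dirichlet boundary condition via $A_+(-\ell L)=A_-(\ell L)=0$ and $\Phi_\mp(\mp\ell L)=\bigl(\begin{smallmatrix}0\\ I_N\end{smallmatrix}\bigr)$ is a detail the paper omits, and your caution about carrying the sign of $J^{-1}=-J$ through is well placed (the paper displays the jump matrix as $J$ rather than $J^{-1}$, a sign slip that is immaterial for the norm bounds used downstream).
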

 \begin{proof}
 By definition, \eqref{def:int_ker} has to be satisfied for all $\psi\in L^2(\Lambda_L)$ and almost all $x\in\Lambda_L$.
We compute this expression explicitly, assuming that $ G_{\Lambda_{L}}^\omega$ has the form given by \eqref{defGD} with $\alpha_\pm=(\alpha_\pm^\leftarrow,\alpha_\pm^\rightarrow)$. Using the fact that $(\DONL-E)\Phi_\pm=0$, we find that for all $x\in\Lambda_L$
\begin{equation}
  \begin{pmatrix}
      \phi_+^\uparrow(x)\alpha_+^\leftarrow(x)-\phi_-^\uparrow(x)\alpha_-^\leftarrow(x)& \phi_+^\uparrow(x)\alpha_+^\rightarrow(x)-\phi_-^\uparrow(x)\alpha_-^\rightarrow(x)\\ \phi_+^\downarrow(x)\alpha_+^\leftarrow(x)-\phi_-^\downarrow(x)\alpha_-^\leftarrow(x)& \phi_+^\downarrow(x)\alpha_+^\rightarrow(x)-\phi_-^\downarrow(x)\alpha_-^\rightarrow(x)
  \end{pmatrix}=\begin{pmatrix}
      0&-I_N\\I_N&0
  \end{pmatrix}
  \end{equation}
Assuming the invertibility conditions given in the statement of the lemma, we can solve this linear system explicitly, which gives the expressions~\eqref{def:alpha+} and~\eqref{def:alpha-}.
 \end{proof}


We now introduce two events on which we will be able to bound the integral kernel. Their probability will be controlled by either Proposition \ref{prop_LDP_valeurssingulieres_SpC} or Proposition \ref{prop_LDP_valeurssingulieres_SpONN}. In order to estimate blocks of the products of transfer matrices as in \cite{MaSo22}, we  introduce, given a vector subspace $F$ of $\C^{2N}$, the orthogonal projection onto $F$  $\pi_F : \C^{2N} \to F$  and we set
$$\pi_F^*\ :\  \begin{array}{ccl}
   F  & \to & \C^{2N} \\
   x & \mapsto & x
\end{array}.$$

For any integer $n\in [-L/3, L/3]$, any $T\in \mathcal{M}_{2\mathrm{N}}(\C)$ and any vector subspace $F\subset \C^{2N}$, we define
\begin{align}
    \Omega_\epsilon^F[T]:=& \left\{\max_{1\leq p\leq N}\left(  \left|\frac{1}{\ell|n-L|}\log s_p\left(T\right)-\gamma_p(E)\right| + \right. \right. \\ \nonumber 
    & \qquad \qquad \qquad \qquad \left. \left. \left|\frac{1}{\ell|n-L|}\log s_p\left(T\pi_F^*\right)-\gamma_p(E)\right| \right)\leq \frac{\epsilon}{100N}\right\}.\label{eq_def_Omega_F_epsilon}
\end{align}
where $s_p(\cdot)$ denotes the $p$-th singular value of the considered matrix.  
Let
\begin{equation}\label{eq_def_F_pm}
F_+:=\left\{ \left( \begin{smallmatrix}
u \\ 0    
\end{smallmatrix} \right)\ | u\in \C^{N} \right\} \subset \C^{2N} \ \mbox{ and }\   F_-:=\left\{ \left( \begin{smallmatrix}
0 \\ v    
\end{smallmatrix} \right)\ | v\in \C^{N} \right\} \subset \C^{2N}
\end{equation}
and for any $n\in \Z$,
\begin{equation}
 F_n:=\left\{ \left( \begin{smallmatrix}
u \\ v    
\end{smallmatrix} \right)\in  \C^{2N}\ \big| \ u=-\Phi_+^\downarrow(\ell n)(\Phi_+^\uparrow(\ell n))^{-1}  v \right\}.  
\end{equation}

Then, set
\begin{equation}\label{eq_def_Omega_epsilon}
     \Omega_\epsilon(n):= \Omega_\epsilon^{F_n}[(T_{\ell L}^{ \ell n}(E))^*] \cap \Omega_\epsilon^{F_+}[(T_{\ell L}^{ \ell n}(E))^*] \cap \Omega_\epsilon^{F_+}[T_{\ell L}^{ \ell n}(E)] \cap \Omega_\epsilon^{F_-}[(T_{\ell L}^{ \ell n}(E))^*] \cap \Omega_\epsilon^{F_-}[T_{\ell L}^{ \ell n}(E)].
\end{equation}
Note that if the transfer matrices are in $\SpC$, the vector subspace $F_n$ is $J$-Lagrangian, \emph{i.e.} it is orthogonal to itself for $J$ and of dimension $N$. If the transfer matrices are in $\SpONN$, $F_n$ is $(J,S)$-Lagrangian, \emph{i.e.} it is orthogonal to itself for $J$ and for $S$ and it is of dimension $N$. Also note that $F_+$ and $F_-$ are always $J$-Lagrangian, but they are not $(J,S)$-Lagrangian. To bypass this difficulty in the case of $N=2d$ even, we write $F_{+}=F_+^{(+)}\oplus F_+^{(-)}$ and $F_{-}=F_-^{(+)}\oplus F_-^{(-)}$ where, 
\begin{equation}\label{def_Fpm+-}
F_+^{(\pm)}:=\left\{ \left(\begin{smallmatrix}
x_1\\ \pm x_1 \\ \vdots\\ x_d \\ \pm x_d \\0 \\ \vdots \\0    
\end{smallmatrix} \right)\ ;\ (x_1,\ldots ,x_d)\in \C^d \right\} \ \mbox{ and }\   F_-^{(\pm)}:=\left\{ \left(\begin{smallmatrix}
 0 \\ \vdots \\0  \\  x_1\\ \pm x_1 \\ \vdots\\ x_d \\ \pm x_d
\end{smallmatrix} \right) \ ;\ (x_1,\ldots ,x_d)\in \C^d \right\}.
\end{equation}
Then all the $F_{\pm}^{(\pm)}$ are $(J,S)$-Lagrangian. 

One then define the event
\begin{align}
    \tilde{\Omega}_\epsilon(n):=  \Omega_\epsilon^{F_n}[(T_{\ell L}^{ \ell n}(E))^*] & \cap \Omega_\epsilon^{F_+^{(+)}}[(T_{\ell L}^{ \ell n}(E))^*] \cap \Omega_\epsilon^{F_+^{(+)}}[T_{\ell L}^{ \ell n}(E)]  \cap \Omega_\epsilon^{F_+^{(-)}}[(T_{\ell L}^{ \ell n}(E))^*] \cap \Omega_\epsilon^{F_+^{(-)}}[T_{\ell L}^{ \ell n}(E)]  \nonumber \\ 
    & \cap \Omega_\epsilon^{F_{-}^{(+)}}[(T_{\ell L}^{ \ell n}(E))^*] \cap \Omega_\epsilon^{F_-^{(+)}}[T_{\ell L}^{ \ell n}(E)]  \cap \Omega_\epsilon^{F_{-}^{(-)}}[(T_{\ell L}^{ \ell n}(E))^*] \cap \Omega_\epsilon^{F_-^{(-)}}[T_{\ell L}^{ \ell n}(E)].         \label{eq_def_Omega_epsilon_SpoNN}
\end{align}

\bigskip
We can control $||G_{\Lambda_{L}}^\omega(E,x,y)||$ on these two events.
\bigskip

\begin{prop}\label{bound:int_ker}
Let $I\subset \R$ a compact interval on which for every $E\in I$, $\inf_{E\in I}\gamma_N(E)>0$. Assume either that 
\begin{enumerate}
    \item $\forall E\in I$, $G(E)\in \SpC$ and $\omega \in \Omega_\epsilon:=\cap_{n\in [-L/3,L/3]}\Omega_\epsilon(n) $ or 
\item  $\forall E\in I$, $G(E)\in \SpONN$, $N$ is even and $\omega \in \tilde{\Omega}_\epsilon:=\cap_{n\in [-L/3,L/3]}\tilde{\Omega}_\epsilon(n).$
\end{enumerate}
Let us choose $\epsilon\in\left(0,\inf_{E\in I} \frac{\gamma_N(E)}{3}\right)$.    Then, there exists $C>0$ such that, for $L$ large enough, for all $E\in I$,  $x\in\supp \chi_{0,L/3} $ and $y \in\supp \Gamma_{0,L} $,
    \begin{equation}\label{ILSE_bound_G}
        ||G_{\Lambda_{L}}^\omega(E,x,y)||\leq C\e^{-2(\gamma_N(E)-\frac74\epsilon)\ell L}.
    \end{equation}
\end{prop}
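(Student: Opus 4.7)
The approach is to combine the explicit Green's kernel formula from Lemma~\ref{lem:Green_kerD} with the singular-value estimates encoded in the good events $\Omega_\epsilon$ (resp.\ $\tilde\Omega_\epsilon$). For $x\leq y$ (the other case is symmetric) we have $G_{\Lambda_L}^\omega(E,x,y)=\Phi_+(y)\alpha_+(x)$. The boundary conditions~\eqref{bc_Phi_ILSE} together with the definition of transfer matrices give the identification $\Phi_\pm(z)=T_{\pm\ell L}^z(E)\pi_{F_-}^*$, so that the blocks $\Phi_\pm^\sharp(z)=\pi_{F_\pm}T_{\pm\ell L}^z(E)\pi_{F_-}^*$ appearing in~\eqref{def:alpha+}--\eqref{def:alpha-} are exactly the projected transfer matrices that the good event controls. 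Using Lemma~\ref{lem:Gronwall1} it suffices to estimate the kernel at lattice sites $z=\ell n$, $n\in[-L/3,L/3]$, and then extend the bound to all $x,y$ in the relevant supports by a Grönwall argument; this extension together with the various multiplicative errors produces the $\frac{7}{4}\epsilon$ appearing in the final exponent.

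The core of the proof consists of two singular-value estimates. First, from $\omega\in\Omega_\epsilon(n)$ together with the fact that $F_+$ and $F_-$ are $J$-Lagrangian, the simultaneous control of $T_{\pm\ell L}^{\ell n}(E)\pi_{F_\pm}^*$ and its adjoint forces the blocks $\Phi_\pm^\sharp(\ell n)$ to inherit the singular values $e^{\gamma_p(E)\ell|n\mp L|}(1+O(\epsilon))$ for $p=1,\ldots,N$, hence $\|\Phi_\pm^\sharp(\ell n)^{-1}\|\leq e^{-(\gamma_N(E)-\epsilon)\ell|n\mp L|}$ and $\|\Phi_\pm^\sharp(\ell n)\|\leq e^{(\gamma_1(E)+\epsilon)\ell|n\mp L|}$. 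Second, using the Lagrangian identities $\Phi_\pm^{\uparrow *}\Phi_\pm^\downarrow=\Phi_\pm^{\downarrow *}\Phi_\pm^\uparrow$, one proves the algebraic factorisation
\[
\Phi_+^\downarrow(z)(\Phi_+^\uparrow(z))^{-1}-\Phi_-^\downarrow(z)(\Phi_-^\uparrow(z))^{-1}=(\Phi_+^\uparrow(z))^{-*}\,\mathcal{W}\,(\Phi_-^\uparrow(z))^{-1},
\]
where $\mathcal{W}:=\Phi_+^{\downarrow *}\Phi_-^\uparrow-\Phi_+^{\uparrow *}\Phi_-^\downarrow$ is the constant $J$-Wronskian of the two solutions and equals $\Phi_-^\uparrow(\ell L)=\pi_{F_+}T_{-\ell L}^{\ell L}(E)\pi_{F_-}^*$. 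Since the propagation distance from $-\ell L$ to $\ell L$ is $2\ell L$, the control of $T_{-\ell L}^{\ell L}(E)\pi_{F_-}^*$ yields $\|\mathcal{W}^{-1}\|\leq e^{-2(\gamma_N(E)-\epsilon)\ell L}$. Substituting back into~\eqref{def:alpha+} and tracking the cancellations among the block factors granted by the alignment of projections in $\Omega_\epsilon(n)$, one obtains $\|\alpha_\pm(\ell n)\|\leq Ce^{-2(\gamma_N(E)-O(\epsilon))\ell L}$; combined with the bound $\|\Phi_\pm(y)\|=O(1)$ for $y$ within $3\ell$ of $\pm\ell L$, this yields the claimed estimate~\eqref{ILSE_bound_G}.

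The $\SpONN$ case requires one modification: since $F_\pm$ are not $(J,S)$-Lagrangian, the large deviation bounds provided by $\tilde\Omega_\epsilon$ are not directly applicable. One exploits instead the decomposition $F_\pm=F_\pm^{(+)}\oplus F_\pm^{(-)}$ of~\eqref{def_Fpm+-} into $(J,S)$-Lagrangian subspaces and runs the above analysis on each summand, reassembling the block and Wronskian estimates by direct sum. The main obstacle is the Wronskian step: identifying the constant $J$-Wronskian, factorising the difference of graph representations of the two Lagrangian subspaces, and carrying out the bookkeeping of cancellations in $\alpha_\pm$ to produce the factor $2$ in the exponent all require simultaneous use of every component of $\Omega_\epsilon(n)$ (or $\tilde\Omega_\epsilon(n)$) together with the Lagrangian geometry preserved by symplectic transfer matrices.
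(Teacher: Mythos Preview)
Your proposal differs substantially from the paper's argument and, as written, contains a genuine gap.

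\medskip

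\textbf{What the paper actually does.} The paper never factorises through the constant Wronskian $\mathcal W$. Instead it writes the singular-value decomposition $T_{\ell L}^x(E)=U^{(x)}\Sigma^{(x)}V^{(x)}$, expresses $\Phi_+^{\uparrow}(x)=U_{11}^{(x)}\Sigma_+^{(x)}V_{12}^{(x)}+U_{12}^{(x)}\Sigma_-^{(x)}V_{22}^{(x)}$, and bounds $s_N(\Phi_+^{\uparrow})$ from below using the event to control $s_N(\Sigma_+^{(x)})$ together with the lower bounds $s_N(U_{11}^{(x)}),s_N(V_{12}^{(x)})\ge e^{-\frac{\epsilon}{25}\ell|n-L|}$ supplied by Claim~3.4 and Remark~3.5 of \cite{MaSo22}. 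For the difference $\Phi_+^{\downarrow}(\Phi_+^{\uparrow})^{-1}-\Phi_-^{\downarrow}(\Phi_-^{\uparrow})^{-1}$ the paper cites Equation~(33) of \cite{MaSo22} directly, obtaining only the mild lower bound $s_N(\,\cdot\,)\ge C^{-1}e^{-\epsilon\ell L/2}$; all genuine exponential decay in the final estimate comes from $\|(\Phi_+^{\uparrow}(x))^{-1}\|$ alone. The event $\Omega_\epsilon^{F_n}[(T_{\ell L}^{\ell n})^*]$ enters precisely at this Wronskian step, via the angle between the Lagrangian $F_n$ (determined by $\Phi_+$) and the stable direction of $(T_{\ell L}^{\ell n})^*$.

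\medskip

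\textbf{Where your argument breaks.} Two concrete problems:

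\emph{(i) The good event does not see $T_{-\ell L}^{\,\cdot}$.} By construction $\Omega_\epsilon(n)$ (and $\tilde\Omega_\epsilon(n)$) involve only $T_{\ell L}^{\ell n}(E)$ and its adjoint. Your bounds on $\Phi_-^{\sharp}(\ell n)=\pi_{F_\pm}T_{-\ell L}^{\ell n}\pi_{F_-}^*$ and on $\mathcal W=\pi_{F_+}T_{-\ell L}^{\ell L}\pi_{F_-}^*$ therefore have no justification from the hypothesis $\omega\in\Omega_\epsilon$; neither $T_{-\ell L}^{\ell n}$ nor $T_{-\ell L}^{\ell L}$ appears anywhere in the definition of the event.

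\emph{(ii) The factorisation does not telescope.} Your identity is correct, and it gives
\[
\alpha_+^{\text{left}}(x)=(\Phi_+^{\uparrow}(x))^{-1}\,\Phi_-^{\uparrow}(x)\,\mathcal W^{-1}\,(\Phi_+^{\uparrow}(x))^{*}.
\]
Even granting optimal control of every factor, a crude norm bound produces $\|(\Phi_+^{\uparrow})^{-1}\|\cdot\|\Phi_-^{\uparrow}\|\cdot\|\mathcal W^{-1}\|\cdot\|\Phi_+^{\uparrow}\|$, and the two growing factors carry the \emph{top} exponent $\gamma_1$, not $\gamma_N$. The ``cancellations among the block factors granted by the alignment of projections'' that you invoke are precisely the content that is missing; you identify this yourself as ``the main obstacle'' but never carry it out. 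Without that step the argument gives no decay when $\gamma_1>\tfrac32\gamma_N$.

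\medskip

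The paper circumvents both issues by never writing down $\mathcal W$: the \cite{MaSo22} bound on the graph-difference uses only the event $\Omega_\epsilon^{F_n}[(T_{\ell L}^{\ell n})^*]$ (i.e.\ information about $T_{\ell L}^{\ell n}$ together with the $\Phi_+$-Lagrangian $F_n$), and the final estimate never sees $\gamma_1$. To repair your route you would need, at minimum, to enlarge the good event to include $T_{-\ell L}^{\ell n}\pi_{F_\pm}^*$, and then to replace the naive product-of-norms bound on the four-factor expression above by a genuine singular-value argument that keeps the exponent at $\gamma_N$.
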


\begin{proof} 
We give a complete proof in case (1).
Then at the end we will explain the slight differences in the proof when we assume 
that we are in case (2).

We will prove the proposition only for $y$ in $[\ell L-3\ell,\ell L-\ell]$. The proof for $y$ in $[-\ell L+\ell,-\ell L+3\ell]$ is similar. Since, in this case, $x<y$, the Green kernel will be given by the first line of \eqref{defGD}.

We begin with proving that on $\Omega_\epsilon$, for all $x\in[-\ell L/3,\ell L/3]$, the matrices $\Phi^\uparrow_+(x)$ and $\Phi_+^\downarrow(x)$ are invertible and we estimate their inverse. It corresponds to bounding from below their $N$-th singular value.The reader who is not familiar with inequalities on singular values can refer to Chapter~3 of~\cite{HornJohn}.

 By definition (recall~\eqref{eq_def_transfer_mat_xy}), for any $x$ in $\Lambda_L$,
 \[\begin{pmatrix}
  \Phi_+^\uparrow(x)\\\Phi_+^\downarrow(x)  
\end{pmatrix}=T_{\ell L}^{x }(E)\begin{pmatrix}
    0\\I_N
\end{pmatrix}.\]
We write the  singular value decomposition of $T_{\ell L}^{x}(E)$:
\begin{equation}
T_{\ell L}^{x}(E)=U^{(x)}\Sigma^{(x)} V^{(x)},\end{equation}
where $U^{(x)}$ and $V^{(x)}$ are unitary and, since we have a symplectic matrix, we can write $\Sigma^{(x)}=\left(\begin{smallmatrix}
    \Sigma_+^{(x)}&0\\0&\Sigma_-^{(x)}
\end{smallmatrix}\right)$ with $\Sigma_+^{(x)}=\diag(s_1(T_{\ell L}^{x}),\dots,s_N(T_{\ell L}^{x}))$ and $\Sigma_-^{(x)}=\diag(1/s_1(T_{\ell L}^{x}),\dots,1/s_N(T_{\ell L}^{x}))$ with $s_1(T_{\ell L}^{x})\geq\dots\geq s_N(T_{\ell L}^{x})\geq 1$. We can write a block decomposition for $U^{(x)}$ and $V^{(x)}$: $U^{(x)}=\left(\begin{smallmatrix}
    U_{11}^{(x)}&U_{12}^{(x)}\\U_{21}^{(x)}&U_{22}^{(x)}
\end{smallmatrix}\right)$ and $V^{(x)}=\left(\begin{smallmatrix}
    V_{11}^{(x)}&V_{12}^{(x)}\\V_{21}^{(x)}&V_{22}^{(x)}
\end{smallmatrix}\right)$. We have then
\[\begin{pmatrix}
  \Phi_+^\uparrow(x)\\\Phi_+^\downarrow(x)  
\end{pmatrix}=\begin{pmatrix}U_{11}^{(x)}\Sigma_+^{(x)}V_{12}^{(x)}+U_{12}^{(x)}\Sigma_-^{(x)}V_{22}^{(x)}\\U_{21}^{(x)}\Sigma_+^{(x)}V_{12}^{(x)}+U_{22}^{(x)}\Sigma_-^{(x)}V_{22}^{(x)}\end{pmatrix}.\]
As a consequence,  we can write that
\begin{align}
    s_N\left(\Phi_+^\uparrow(x)\right)&\geq s_N\left(U_{11}^{(x)}\Sigma_+^{(x)}V_{12}^{(x)}\right)-\|U_{12}^{(x)}\Sigma_-^{(x)}V_{22}^{(x)}\|\\
   &\geq s_N\left(U_{11}^{(x)}\right)s_N\left(\Sigma_+^{(x)}\right)s_N\left(V_{12}^{(x)}\right)-\|\Sigma_-^{(x)}\|,
\end{align}
where we used in the last line that  the blocks $U_{ij}^{(x)}$, $V_{ij}^{(x)}$ have norm less than 1.

We now estimate $s_N\left(\Sigma_+^{(x)}\right)$, which is also $\|\Sigma_-^{(x)}\|^{-1}$, on the event $\Omega_\epsilon$.
Let $n$ be the unique integer such that $x\in[(n-1)\ell, n\ell)$.
We first remark that, for all $p=1,\dots,2N$,
\begin{equation}\label{proche_entier}s_p\left(T_{\ell L}^{x}\right)\geq s_p\left(T_{\ell L}^{\ell n}\right)s_{2N}\left(T_{\ell n}^{x}\right)=s_p\left(T_{\ell L}^{\ell n}\right)\|T_{\ell n}^{x}\|^{-1}.\end{equation}
But we know from Lemma~\ref{lem:Gronwall1} that there exists a constant $C>0$, independent of $E$, $x$, $\omega$ and $L$, such that $\|T_{\ell n}^{x}\|\leq C$. As a consequence, $s_p\left(T_{\ell L}^{x}\right)\geq C^{-1}s_p\left(T_{\ell L}^{\ell n}\right)$. 
Consequently, on the event $\Omega_\epsilon(n)$,  $s_N\left(T_{\ell L}^{x}\right)\geq C^{-1}\e^{(\gamma_N(E)-\epsilon)\ell |n-L|}$ and $\|\Sigma_-^{(x)}\|\leq C\e^{-(\gamma_N(E)-\epsilon)\ell |n-L|}$, so
\begin{equation}
    s_N\left(\Phi_+^\uparrow(x)\right)\geq C^{-1}\e^{(\gamma_N(E)-\epsilon)\ell |n-L|}s_N\left(U_{11}^{(x)}\right)s_N\left(V_{12}^{(x)}\right)-C\e^{-(\gamma_N(E)-\epsilon)\ell |n-L|}
\end{equation}
and, similarly,
\begin{equation}
    s_N\left(\Phi_+^\downarrow(x)\right)\geq C^{-1}\e^{(\gamma_N(E)-\epsilon)\ell |n-L|}s_N\left(U_{21}^{(x)}\right)s_N\left(V_{12}^{(x)}\right)-C\e^{-(\gamma_N(E)-\epsilon)\ell |n-L|}.
\end{equation}

We are left with controlling $s_N\left(U_{11}^{(x)}\right)$, $s_N\left(U_{21}^{(x)}\right)$ and $s_N\left(V_{12}^{(x)}\right)$. Recalling the notation introduced in~\eqref{eq_def_F_pm}, we have that $U_{11}^{(x)}=\pi_{F_+}^*U^{(x)}\pi_{F_+}$, $U_{21}^{(x)}=\pi_{F_-}^*U^{(x)}\pi_{F_+}$ and $V_{12}^{(x)}=\pi_{F_+}^*V^{(x)}\pi_{F_-}$.
Since the sets $F_+$ and $F_-$ as defined in \eqref{eq_def_F_pm} are $J$-Lagrangian, we can prove as in Claim~3.4 and Remark~3.5 of~\cite{MaSo22} that, on $\Omega_\epsilon(n)$, we have for $L$ large enough (depending only on $\epsilon$ and $\ell$)
\begin{equation}
    s_N\left(V_{12}^{(x)}\right)\geq \e^{-\frac{\epsilon}{25}\ell|n-L|}\text{, } s_N\left(U_{21}^{(x)}\right)\geq \e^{-\frac{\epsilon}{25}\ell |n-L|}\text{ and } s_N\left(U_{11}^{(x)}\right)\geq \e^{-\frac{\epsilon}{25}\ell |n-L|}.
\end{equation}
More precisely, the first inequality corresponds to the event  $\Omega_\epsilon^{F_-}[T_{\ell L}^{ \ell n}(E)]$, the second to the event $\Omega_\epsilon^{F_-}[(T_{\ell L}^{ \ell n}(E))^*]$ and the third to $\Omega_\epsilon^{F_+}[(T_{\ell L}^{ \ell n}(E))^*]$.


As a consequence, on $\Omega_\epsilon(n)$, 
\begin{equation}\label{norm_centre}
    s_N\left(\Phi_+^{\uparrow\downarrow}(x)\right)\geq C^{-1} \e^{(\gamma_N(E)-\frac{27\epsilon}{25})\ell |n-L|}-C\e^{-(\gamma_N(E)-\epsilon)\ell |n-L|}.
\end{equation}
For $L$ large enough (depending here again only on $C$, $\epsilon$ and $\ell$), one gets on $\Omega_\epsilon(n)$, using that $|n-L|\geq 2L/3$,
\begin{equation}\label{norm_Phi_D}
     s_N\left(\Phi_+^{\uparrow\downarrow}(x)\right)\geq \e^{(\gamma_N(E)-2\epsilon) \tfrac{2\ell L}{3}} >0.
\end{equation}
In particular, $\Phi_+^\uparrow(x)$ and $\Phi_+^\downarrow(x)$ are invertible.

The next step to be able to apply Lemma~\ref{lem:Green_kerD} is to prove that $\Phi_+^\downarrow(x)\Phi_+^\uparrow(x)^{-1}-\Phi_-^\downarrow(x)\Phi_-^\uparrow(x)^{-1}$ and $\Phi_+^\uparrow(x)\Phi_+^\downarrow(x)^{-1}-\Phi_-^\uparrow(x)\Phi_-^\downarrow(x)^{-1}$ are invertible. By \eqref{bc_Phi_ILSE}, $\Phi_+^\downarrow(x)\Phi_+^\uparrow(x)^{-1}=(T_{\ell L}^x)_{22}((T_{\ell L}^x)_{12})^{-1} $ and $\Phi_-^\downarrow(x)\Phi_-^\uparrow(x)^{-1}= (T_{-\ell L}^x)_{22}((T_{-\ell L}^x)_{12})^{-1}$. It does not exactly correspond to the blocks involved in $X^+$ and $X^-$ in Equation~(33) of~\cite{MaSo22} since we did not use the same boundary conditions, but we will be able to carry on the same proof as \cite{MaSo22}. Since the sets $F_n$, $F_+$ and $F_-$ are $J$-Lagrangian, we prove as in Equation~(33) of~\cite{MaSo22} that there exists $C>0$ such that on $\Omega_\epsilon(n)$ 
\begin{equation}\label{norm_wron_D}s_N\left(\Phi_+^\downarrow(x)\Phi_+^\uparrow(x)^{-1}-\Phi_-^\downarrow(x)\Phi_-^\uparrow(x)^{-1}\right)\geq C^{-1} \e^{-\frac{\epsilon}{2}\ell L}
\end{equation} 
as well as the same result for $\Phi_+^\uparrow(x)\Phi_+^\downarrow(x)^{-1}-\Phi_-^\uparrow(x)\Phi_-^\downarrow(x)^{-1}$. Note that the constant $C>0$ appears as we estimate the $N$-th singular value of the matrix at point $x$ by its value at $n$  its integer part and the norm of the difference which is bounded in $L$ times $s_N(U_{21})$ which is lower bounded by an exponentially small term.  Note also that in the proof appears $s_N\left(V_{11}^{(x)}\right)$ which is controlled on $\Omega_\epsilon^{F_+}[T_{\ell L}^{ \ell n}(E)]$. 

Consequently, these two matrices are invertible and the norm of their inverse is smaller than $C \e^{\frac{\epsilon}{2}\ell L}$.

 Finally, we  remark that, for  $y\in[\ell L-3\ell,\ell L-\ell]$, we have by Lemma~\ref{lem:Gronwall1},
 \begin{equation}\label{norm_bord}
     ||\Phi_+^\uparrow(y)||^2+ ||\Phi_+^\downarrow(y)||^2\leq N\exp\left(2\int_{y}^{\ell L}|V_\omega(t)|\d t\right)\leq C,
 \end{equation}
 with $C$ independent of $E$, $\omega$ and $L$.

Together with \eqref{norm_Phi_D} and~\eqref{norm_wron_D}, it gives that on $\Omega_{\epsilon}$
\begin{equation}
    ||G_{\Lambda_L}^\omega(E,x,y)||\leq C\e^{-2(\gamma_N(E)-\frac{7}{4}\epsilon)\ell L},
\end{equation}
which concludes the proof of the proposition under the assumptions of Theorem \ref{thm:loc_criterion1}.

Now let us briefly explain how to adapt the proof when we are in case (2). First, remark that if the transfer matrices are in $\SpONN$, then $F_n$ is a $(J,S)$-Lagrangian since it depends on the transfer matrices through two of their blocks. Moreover, all the $F_{\pm}^{(\pm)}$ are $(J,S)$-Lagrangian.  Since, $\pi_{F_+}^*=\pi_{F_+^{(+)}}^* + \pi_{F_+^{(-)}}^*$, one has inequalities such as :
\begin{equation}\label{eq_sv_U_SpoNN}
s_N(U_{11}^{(x)})=s_N(\pi_{F_+} U^{(x)} \pi_{F_+}^{*}) =s_N(\pi_{F_+} U^{(x)} \pi_{F_+^{(+)}}^{*} +\pi_{F_+} U^{(x)} \pi_{F_+^{(-)}}^{*} )  \geq s_d (\pi_{F_+} U^{(x)} \pi_{F_+^{(+)}}^{*}) \geq   \e^{-\frac{\epsilon}{25}\ell |n-L|}.  
\end{equation}

The first inequality comes from the fact that all the singular values are of multiplicity $2$ using the decomposition $(4)$ of Proposition \ref{prop:decompG} hence $s_d(\pi_{F_+} U^{(x)} \pi_{F_+^{(+)}}^{*})$ is the $2d$-th singular value of $\pi_{F_+} U^{(x)} \pi_{F_+^{(+)}}^{*}$. The second inequality comes from the fact that $F_+^{(+)}$ is a subspace of $F_+$ for which we can use Claim~3.4 and Remark~3.5 of~\cite{MaSo22}. The same kind of inequalities allow to obtain a lower bound for $s_N(U_{21}^{(x)}-\Phi_+^\downarrow(x)(\Phi_+^\uparrow(x))^{-1} U_{11}^{(x)})$ which is needed to obtain a lower bound as \eqref{norm_wron_D} under the assumptions of case (2). Having in mind these specific properties when the transfer matrices are in $\SpONN$, one proves \eqref{ILSE_bound_G} in this case, the same way as it was done when they are in $\SpC$.
\end{proof}

In order to estimate the probability of $\Omega_{\epsilon}$ and $\tilde{\Omega}_{\epsilon}$, we prove a Large Deviation Property for the singular values of the products of transfer matrices. We need to do this separately for $\SpC$ and $\SpONN$.

\begin{prop}\label{prop_LDP_valeurssingulieres_SpC}

We fix a compact interval $I\subset \R$. We assume that for every $E\in I$:
\begin{enumerate}
    \item the Furstenberg  group $G(E)$ is included in $\SpC$  ;
    \item for every $p\in \{1,\ldots, N\}$, $G(E)$ is $J$-$L_p$-strongly irreducible.
\end{enumerate}
Then for all $\epsilon >0$ and all $E\in I$, there exist $C(\epsilon,E)>0$ and $c(\epsilon,E)>0$ such that, for all $ p\in\{1,\dots,N\}$, any $J$-Lagrangian subspace $F$ and all integers $m,n$,
\begin{equation}\label{eq_LDP_vs_1}
\PP\left(\left\{\left|\frac{1}{\ell(n-m)}\log s_p\left(T_{\ell m}^{\ell n}(E)\right)-\gamma_p(E)\right|\geq \epsilon\right\}\right) \leq C(\epsilon,E) \e^{-c(\epsilon,E) \ell |n-m|}
\end{equation}
and 
\begin{equation}\label{eq_LDP_vs_2}
\PP\left(\left\{\left|\frac{1}{\ell(n-m)}\log s_p\left(T_{\ell m}^{\ell n}(E)\pi_{F}^{*}\right)-\gamma_p(E)\right|\geq \epsilon\right\}\right) \leq C(\epsilon,E) \e^{-c(\epsilon,E) \ell |n-m|}. 
\end{equation}
\end{prop}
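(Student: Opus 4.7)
The plan is to reduce everything to the classical Le Page-type large deviation theorems for products of i.i.d.\ random matrices (\emph{cf.}\ \cite[Theorems V.6.1 and V.6.2]{BL}), applied on the exterior powers $\Lambda^p \C^{2N}$. Through the real-symplectic identification $\pi$ defined in the excerpt, we may view the sequence $(T_{\omega^{(k)}}(E))_{k\in\Z}$ as a sequence of i.i.d.\ matrices in $\mathrm{Sp}_{2N}(\R)$, and $J$-$L_p$-strong irreducibility becomes $L_p$-strong irreducibility in the classical sense of \cite{BL}. By \cite[Proposition A.IV.3.4]{BL}, the combination of $p$-contractivity and $L_p$-strong irreducibility implies Assumption $\mathbf{(L^{(p)})}$ at $E$: in particular, the existence of a unique $\mu_E$-invariant probability measure $\nu_{p,E}$ on $\mathsf{P}(J\text{-}L_p)$ and the separation of the Lyapunov exponent $\gamma_p(E)$ from $\gamma_{p+1}(E)$.

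With Assumption $\mathbf{(L^{(p)})}$ at hand, I would invoke Le Page's exponential large deviation estimate \cite[Theorem V.6.2]{BL}: there exist constants $C_p(\epsilon,E)>0$ and $c_p(\epsilon,E)>0$ such that for every unit vector $v\in J\text{-}L_p$,
\begin{equation*}
\PP\!\left(\left|\tfrac{1}{n}\log\|\Lambda^p\Phi_E(n,\omega)\,v\|-\!\sum_{i=1}^p \gamma_i(E)\right|\geq \epsilon\right)\leq C_p(\epsilon,E)\,\e^{-c_p(\epsilon,E)\, n}.
\end{equation*}
Maximizing over $v$ gives the same bound for $\tfrac{1}{n}\log\|\Lambda^p\Phi_E(n,\omega)\|$, and a translation in the $\Z$-index together with stationarity transfers this estimate to $\tfrac{1}{\ell(n-m)}\log\|\Lambda^p T_{\ell m}^{\ell n}(E)\|$. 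The singular value identity $\|\Lambda^p M\|=s_1(M)\cdots s_p(M)$ then yields
\begin{equation*}
\log s_p(M)=\log\|\Lambda^p M\|-\log\|\Lambda^{p-1} M\|,
\end{equation*}
so a union bound over $p$ and $p-1$ produces \eqref{eq_LDP_vs_1}.

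For \eqref{eq_LDP_vs_2}, the key observation is that if $F\subset \C^{2N}$ is $J$-Lagrangian, then every $p$-decomposable vector built from elements of $F$ lies in $J$-$L_p$; equivalently, $\Lambda^p F\subset J\text{-}L_p$. Since $T_{\ell m}^{\ell n}(E)\pi_F^*$ is a $2N\times N$ matrix whose singular values satisfy
\begin{equation*}
\|\Lambda^p(T_{\ell m}^{\ell n}(E)\pi_F^*)\|=\sup_{v\in \Lambda^p F,\ \|v\|=1}\|\Lambda^p T_{\ell m}^{\ell n}(E)\cdot v\|,
\end{equation*}
the same Le Page estimate, now applied pointwise on the compact unit sphere of $\Lambda^p F\subset J\text{-}L_p$, governs the growth. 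The same ratio identity as above converts the LDP for $\|\Lambda^p(T\pi_F^*)\|$ into an LDP for each $s_p(T\pi_F^*)$ separately.

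The main technical obstacle will be making the LDP truly uniform over unit vectors $v$ in the compact set $\Lambda^p F\cap S^{2N-1}$, rather than just pointwise. I plan to handle this by taking a finite $\eta$-net $\{v_1,\dots,v_{K(\eta)}\}$ of the unit sphere of $\Lambda^p F$ and combining the pointwise bound at each $v_j$ with a deterministic continuity estimate of the form $|\|\Lambda^p M v\|-\|\Lambda^p M v'\||\leq \|\Lambda^p M\|\,\|v-v'\|$. Choosing $\eta=\e^{-\kappa\ell|n-m|}$ for suitable $\kappa>0$, the deterministic Gr\"onwall bound on $\|\Lambda^p M\|$ (Lemma \ref{lem:Gronwall1} together with $\|\Lambda^p M\|\leq \|M\|^p$) ensures that the net-approximation error is dominated by $e^{(\gamma_p(E)-\epsilon/2)\ell|n-m|}$, while $K(\eta)$ grows only polynomially in $\e^{\kappa\ell|n-m|}$, which is absorbed by the exponential LDP bound. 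This finishes \eqref{eq_LDP_vs_2}, with constants $C(\epsilon,E)$, $c(\epsilon,E)$ depending only on the collection of norms $\|\Lambda^p(\cdot)\pi_F^*\|$, hence independent of the particular $J$-Lagrangian subspace $F$ used.
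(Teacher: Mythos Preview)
Your approach is essentially the paper's: reduce to the real symplectic group, apply Le Page's large deviation theorem \cite[Theorem A.V.6.2]{BL} on the strongly irreducible action in $J$-$L_p$, and recover individual singular values via $\log s_p=\log\|\Lambda^p\cdot\|-\log\|\Lambda^{p-1}\cdot\|$ together with a union bound. Two corrections are in order, however.

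First, you invoke $p$-contractivity in order to obtain Assumption $\mathbf{(L^{(p)})}$, but the proposition does \emph{not} assume contractivity---and you do not need it. The LDP of \cite[Theorem A.V.6.2]{BL} requires only strong irreducibility of the action on $J$-$L_p$ together with the integrability condition there, the latter being automatic since the potential is uniformly bounded and hence $\supp\mu_E$ is bounded. You should cite the LDP directly rather than detouring through $\mathbf{(L^{(p)})}$.

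Second, the $\eta$-net argument is superfluous. For any \emph{fixed} unit vector $v_0\in\Lambda^p F\subset J\text{-}L_p$ one has the deterministic sandwich
\[
\|\Lambda^p T\cdot v_0\|\ \leq\ \|\Lambda^p(T\,\pi_F^{*})\|\ \leq\ \|\Lambda^p T\|,
\]
so the upper deviation of $\tfrac{1}{\ell|n-m|}\log\|\Lambda^p(T\,\pi_F^{*})\|$ is controlled by the norm version of the LDP and the lower deviation by the vector version at the single deterministic $v_0$. This is exactly how the paper handles \eqref{eq_LDP_vs_2}, and it sidesteps any uniformity issue over the unit sphere of $\Lambda^p F$.
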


\begin{rem}\label{rem:LDP}
The constants $C(\epsilon,E)$ and $c(\epsilon,E)$ depend \emph{a priori} on $E$ and $\epsilon$ but they can be taken uniform in $\epsilon$ as it tends to $0$ and uniform in $E$ on the compact interval $I$. This is one of the reasons why we need to take the interval $I$ compact. 
\end{rem}

\begin{proof}
First of all, recall that $\SpC$ can be identified to  $\mathrm{Sp}_{2\mathrm{N}}(\R)$  using the following application which split the real and imaginery parts of the matrices in $\mathcal{M}_{\mathrm{2N}}(\C)$: 
$$\begin{array}{cll}
           \mathcal{M}_{\mathrm{2N}}(\C) & \to & \mathcal{M}_{\mathrm{4N}}(\R) \\[2mm]
	    A+iB & \mapsto & \left( \begin{smallmatrix}
				      A & -B \\
				      B & A
				      \end{smallmatrix} \right) .
				\end{array}$$
In this identification, all the multiplicities of the Lyapunov exponents and the singular values are doubled. Hence, we can use freely all the results concerning i.i.d. sequences of random matrices in the real symplectic group.

Recall that for a symplectic matrix $M\in\SpC$, $s_p(M^{-1})=s_p(M)$ for every $p\in \{1,\ldots 2N\}$. Hence one can assume that $m\leq n$ without loss of generality since $(T_{\ell m}^{\ell n}(E))^{-1}=T_{\ell n}^{\ell m}(E)$.

Let  $ p\in\{1,\dots,N\}$. As in the proof of Proposition \ref{prop:sepLyap}, for each $M\in \SpN$, we denote by $\widehat{M}$ the matrix of $\GL_k(\R)$ such that
 \begin{equation}
  \widehat{M}_{ij}=\langle f_i, \Lambda^{p}Mf_j\rangle.
 \end{equation}
where $k$ is the dimension of $J\mbox{-}L_{p}$ and $(f_1,\dots,f_k)$ is an orthonormal basis of $J$-$L_{p}$, with $f_1=e_1\wedge\cdots\wedge e_p$. Let us now denote by $\widehat{G(E)}$ the subgroup of $\GL_k(\R)$ which is generated by the matrices $\widehat{M}$ for $M\in G(E)$. Then, since $G(E)$ is $J$-$L_p$-strongly irreducible,   $\widehat{G(E)}$ is strongly irreducible. Hence, applying \cite[Theorem A.V.6.2]{BL}, one gets the existence of $\alpha>0$ such that for any $\epsilon >0$ and any $\bar{x}\in \mathsf{P}(J\mbox{-}L_p)$,
\begin{equation}\label{eq_LDP_proof_1}
\limsup_{|n-m|\to +\infty} \frac{1}{\ell|n-m|} \log\ \mathbb{P}\left( \left| \frac{1}{\ell|n-m|}\log(||\Lambda^p T_{\ell m}^{\ell n}(E) \bar{x}||)- (\gamma_1+\cdots+\gamma_p)(E) \right| >  \epsilon\right) \leq -\alpha
\end{equation} 
and 
\begin{equation}\label{eq_LDP_proof_2}
 \limsup_{|n-m|\to +\infty} \frac{1}{\ell|n-m|} \log\ \mathbb{P}\left(  \left| \frac{1}{\ell|n-m|}\log(||\Lambda^p T_{\ell m}^{\ell n}(E)||)- (\gamma_1+\cdots+\gamma_p)(E) \right| >  \epsilon\right) \leq -\alpha.   
\end{equation}

Indeed, since the function  $x\mapsto V_{\omega}^{(n)} (x)$ is  uniformly bounded in $x$, $n$ and $\omega$, and since the support of the common law of the transfer matrices is bounded, the assumption of finiteness of the integral in \cite[Theorem A.V.6.2]{BL} is satisfied. 

Now, let us take $F$ a $J$-Lagrangian subspace of $\C^{2N}$. Then,
\begin{align*}
|| \Lambda^p (T_{\ell m}^{\ell n}(E) \pi_F^*)|| & = \sup_{\substack{u_1\wedge\cdots \wedge u_p \in \mathsf{P}(J\mbox{-}L_p) \\ u_i \in F} } ||(\Lambda^p T_{\ell m}^{\ell n}(E))(u_1\wedge\cdots \wedge u_p)|| \\
 & = || \Lambda^p T_{\ell m}^{\ell n}(E) \ \bar{u}||,\ \mbox{ for some } \bar{u}\in  \mathsf{P}(J\mbox{-}L_p)
\end{align*}
since the supremum is attained by compactness of $\mathsf{P}(J\mbox{-}L_p)$. Hence, \eqref{eq_LDP_proof_1} rewrites,
\begin{equation}\label{eq_LDP_proof_3}
\limsup_{|n-m|\to +\infty} \frac{1}{\ell|n-m|} \log\ \mathbb{P}\left( \left| \frac{1}{\ell|n-m|}\log(||\Lambda^p (T_{\ell m}^{\ell n}(E) \pi_F^*)||)- (\gamma_1+\cdots+\gamma_p)(E) \right| >  \epsilon\right) \leq -\alpha
\end{equation} 
for any $J$-Lagrangian $F$. Let, for $n,m\in \Z$, $p\in \{1,\ldots,N\}$, $\epsilon>0$ and $F$ a $J$-Lagrangian,
$$A_{n,m,p}(\epsilon,F)=\left\{ \left| \frac{1}{\ell|n-m|}\log(||\Lambda^p (T_{\ell m}^{\ell n}(E) \pi_F^*) ||)- (\gamma_1+\cdots+\gamma_p)(E) \right| >  \epsilon  \right\},$$

$$A_{n,m,p}(\epsilon)=\left\{ \left| \frac{1}{\ell|n-m|}\log(||\Lambda^p T_{\ell m}^{\ell n}(E)||)- (\gamma_1+\cdots+\gamma_p)(E) \right| >  \epsilon\right\},$$

$$B_{n,m,p}(\epsilon,F)=\left\{ \left| \frac{1}{\ell|n-m|}\left(\log(||\Lambda^p (T_{\ell m}^{\ell n}(E) \pi_F^*)||)- \log(||\Lambda^{p-1} (T_{\ell m}^{\ell n}(E) \pi_F^*)||) \right) - \gamma_p(E) \right| >  \epsilon\right\}$$
and
$$B_{n,m,p}(\epsilon)=\left\{ \left| \frac{1}{\ell|n-m|}\left( \log(||\Lambda^p T_{\ell m}^{\ell n}(E)||)- \log(||\Lambda^{p-1} T_{\ell m}^{\ell n}(E) ||) \right)- \gamma_p(E) \right| >  \epsilon\right\}.$$
Then, one has
\begin{equation}\label{eq_LDP_proof_4}
B_{n,m,p}(2\epsilon) \subset A_{n,m,p}(\epsilon) \cap A_{n,m,p-1}(\epsilon) \ \mbox{ and } B_{n,m,p}(2\epsilon,F) \subset A_{n,m,p}(\epsilon,F) \cap A_{n,m,p-1}(\epsilon,F).
\end{equation}

Since for any $p\in \{1,\ldots ,N\}$, $||\Lambda^p T_{\ell m}^{\ell n}(E) ||=s_1(T_{\ell m}^{\ell n}(E))\cdots s_p(T_{\ell m}^{\ell n}(E))$, combining  \eqref{eq_LDP_proof_4} and \eqref{eq_LDP_proof_2} one gets \eqref{eq_LDP_vs_1}. One also has, for any $J$-Lagrangian $F$,  $||\Lambda^p (T_{\ell m}^{\ell n}(E) \pi_F^*) ||=s_1(T_{\ell m}^{\ell n}(E) \pi_F^*)\cdots s_p(T_{\ell m}^{\ell n}(E) \pi_F^*)$. Hence, combining  \eqref{eq_LDP_proof_4} and \eqref{eq_LDP_proof_3} one gets \eqref{eq_LDP_vs_2}. This achieves the proof.
\end{proof}

We then prove the counterpart of Proposition \ref{prop_LDP_valeurssingulieres_SpC} in the case covered by Theorem \ref{thm:loc_criterion2}.

\begin{prop}\label{prop_LDP_valeurssingulieres_SpONN}

We fix a compact interval $I\subset \R$. We assume that for every $E\in I$:
\begin{enumerate}
    \item the Furstenberg  group $G(E)$ is included in $\SpONN$  ;
    \item for every $2p\in \{1,\ldots, N\}$, $G(E)$ is $(J,S)$-$L_{2p}$-strongly irreducible.
\end{enumerate}
Then for all $\epsilon >0$ and all $E\in I$, there exist $C(\epsilon,E)>0$ and $c(\epsilon,E)>0$ such that, for all $ p\in\{1,\dots,N\}$, any $(J,S)$-Lagrangian subspace $F$ and all integers $m,n$,
\begin{equation}\label{eq_LDP_vs_3}
\PP\left(\left\{\left|\frac{1}{\ell|n-m|}\log s_{p}\left(T_{\ell m}^{\ell n}(E)\right)-\gamma_p(E)\right|\geq \epsilon\right\}\right) \leq C(\epsilon,E) \e^{-c(\epsilon,E) \ell |n-m|}
\end{equation}
and 
\begin{equation}\label{eq_LDP_vs_4}
\PP\left(\left\{\left|\frac{1}{\ell|n-m|}\log s_p\left(T_{\ell m}^{\ell n}(E)\pi_{F}^{*}\right)-\gamma_p(E)\right|\geq \epsilon\right\}\right) \leq C(\epsilon,E) \e^{-c(\epsilon,E) \ell |n-m|}. 
\end{equation}
\end{prop}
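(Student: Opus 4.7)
The plan is to mimic the proof of Proposition~\ref{prop_LDP_valeurssingulieres_SpC}, replacing the subspace $J$-$L_p$ by $(J,S)$-$L_{2p}$ and absorbing the consequences of the doubly degenerate singular values in $\SpONN$ proved in Proposition~\ref{prop:decompG}(4) and Proposition~\ref{degen_gamma}. Fix $E\in I$. For each $p$ with $2p\in\{1,\ldots,N\}$, consider the restriction map $M\mapsto \widehat{M}\in\GL_k(\R)$ introduced in the proof of Proposition~\ref{prop:sepLyap}, where $k$ is the dimension of $(J,S)$-$L_{2p}$ and $\widehat{M}_{ij}=\langle f_i,\Lambda^{2p}Mf_j\rangle$ for an orthonormal basis $(f_1,\dots,f_k)$ of $(J,S)$-$L_{2p}$. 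The hypotheses of $2p$-contractivity and $(J,S)$-$L_{2p}$-strong irreducibility of $G(E)$ translate into contractivity and strong irreducibility of $\widehat{G(E)}$ as a subgroup of $\GL_k(\R)$, exactly as in the proof of Proposition~\ref{prop:sepLyap}. The uniform boundedness of the potential implies $\E(\log\|T^{\ell n}_{\ell m}(E)\|)<\infty$ uniformly in $E\in I$, so the integrability hypothesis of \cite[Theorem~A.V.6.2]{BL} holds.

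Applying \cite[Theorem~A.V.6.2]{BL} to $\widehat{G(E)}$ yields, for every $\epsilon>0$, constants $C(\epsilon,E)$, $c(\epsilon,E)>0$ and exponential large deviation estimates analogous to \eqref{eq_LDP_proof_1}--\eqref{eq_LDP_proof_2}: for every $\bar{x}\in\mathsf{P}((J,S)$-$L_{2p})$,
\[
\PP\!\left(\left|\tfrac{1}{\ell|n-m|}\log\|\Lambda^{2p}T^{\ell n}_{\ell m}(E)\,\bar{x}\|-\hat\gamma_1\right|>\epsilon\right)\le C(\epsilon,E)\e^{-c(\epsilon,E)\ell|n-m|},
\]
and similarly for $\log\|\Lambda^{2p}T^{\ell n}_{\ell m}(E)\|$. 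By Proposition~\ref{prop:sepLyap}, $\hat\gamma_1=\sum_{i=1}^{2p}\gamma_i(E)$ and $\|\widehat{M}\|=\|\Lambda^{2p}M\|$, so these estimates become large deviations for $\log\|\Lambda^{2p}T^{\ell n}_{\ell m}(E)\|$ centered at $\sum_{i=1}^{2p}\gamma_i(E)$. To handle the projected norm $\|\Lambda^{2p}(T^{\ell n}_{\ell m}(E)\pi_F^{*})\|$ for a $(J,S)$-Lagrangian $F$, note that any $2p$ vectors $u_1,\ldots,u_{2p}\in F$ satisfy $J(u_i,u_j)=S(u_i,u_j)=0$, so $u_1\wedge\cdots\wedge u_{2p}\in (J,S)$-$L_{2p}$. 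By compactness of the Grassmannian, the supremum defining $\|\Lambda^{2p}(T\pi_F^{*})\|$ is attained on $(J,S)$-$L_{2p}$, and the same large deviation statement applies.

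It remains to convert large deviations on exterior power norms into large deviations on individual singular values. Since Proposition~\ref{prop:decompG}(4) gives $s_{2k-1}(M)=s_{2k}(M)$ for every $M\in\SpONN$ and every $k$,
\[
\|\Lambda^{2p}M\|=\prod_{j=1}^{2p}s_j(M)=\prod_{k=1}^{p}s_{2k-1}(M)^{2},
\]
hence
\[
\log\|\Lambda^{2p}M\|-\log\|\Lambda^{2(p-1)}M\|=2\log s_{2p-1}(M)=2\log s_{2p}(M).
\]
Introducing events $A_{n,m,2p}(\epsilon)$, $A_{n,m,2p}(\epsilon,F)$, $B_{n,m,2p}(\epsilon)$ and $B_{n,m,2p}(\epsilon,F)$ as in the proof of Proposition~\ref{prop_LDP_valeurssingulieres_SpC}, the inclusion $B_{n,m,2p}(2\epsilon)\subset A_{n,m,2p}(\epsilon)\cap A_{n,m,2(p-1)}(\epsilon)$ and its projected counterpart transform the large deviations for $\|\Lambda^{2p}\cdot\|$ into large deviations for $s_{2p-1}=s_{2p}$ simultaneously, for every $p$ with $2p\in\{1,\ldots,N\}$. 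This delivers \eqref{eq_LDP_vs_3} and \eqref{eq_LDP_vs_4} for every $p\in\{1,\ldots,N\}$.

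The main obstacle is not a single sharp technical step but rather the bookkeeping needed to track the double degeneracy throughout: the estimates in \cite{BL} are stated for irreducible contracting groups in $\GL_k(\R)$ and give information only on $\hat\gamma_1$, so one must argue at the level of $\Lambda^{2p}$, pair by pair, and invoke Proposition~\ref{prop:decompG} to recover individual singular values. The $(J,S)$-Lagrangian nature of $F$ is essential: without it the supremum in $\|\Lambda^{2p}(T\pi_F^{*})\|$ would not land in $(J,S)$-$L_{2p}$, and \cite[Theorem~A.V.6.2]{BL} would not apply to $\widehat{T\pi_F^{*}}$.
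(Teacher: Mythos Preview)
Your proof follows essentially the same route as the paper's: pass to $\widehat{G(E)}\subset\GL_k(\R)$ via the construction of Proposition~\ref{prop:sepLyap}, invoke \cite[Theorem~A.V.6.2]{BL} for the large deviations of $\|\Lambda^{2p}T\|$ and $\|\Lambda^{2p}T\bar x\|$, use the $(J,S)$-Lagrangian condition on $F$ to ensure the relevant decomposable vectors lie in $(J,S)$-$L_{2p}$, and then telescope via the inclusion $B_{n,m,2p}(2\epsilon)\subset A_{n,m,2p}(\epsilon)\cap A_{n,m,2(p-1)}(\epsilon)$ together with the double degeneracy of singular values to recover the estimates for each $s_p$.

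One small inaccuracy: you invoke ``the hypotheses of $2p$-contractivity and $(J,S)$-$L_{2p}$-strong irreducibility'', but the proposition assumes only the latter. This is harmless for the argument, since \cite[Theorem~A.V.6.2]{BL} requires only strong irreducibility of $\widehat{G(E)}$, and the norm identity $\|\widehat{M}\|=\|\Lambda^{2p}M\|$ (hence $\hat\gamma_1=\sum_{i=1}^{2p}\gamma_i$) is established in the proof of Proposition~\ref{prop:sepLyap} from the $KRU$ decomposition of Proposition~\ref{prop:decompG} alone, without contractivity. Simply drop the contractivity clause and your proof matches the paper's.
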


\begin{proof}
The proof is very similar to the proof of Proposition \ref{prop_LDP_valeurssingulieres_SpC}. First, we introduce $\widetilde{G(E)}$ defined as in the proof of Proposition \ref{prop:sepLyap}. Then, since $G(E)$ is $(J,S)$-$L_{2p}$-strongly irreducible,   $\widehat{G(E)}$ is strongly irreducible and we can again apply  \cite[Theorem A.V.6.2]{BL} to get for each $2p\in \{1,\ldots,N\}$, the analogous of \eqref{eq_LDP_proof_2} for $\Lambda^{2p}$ instead of $\Lambda^p$ and the same for \eqref{eq_LDP_proof_1} for any $\bar{x}\in \mathsf{P}((J,S)\mbox{-}L_{2p})$. Hence, one gets \eqref{eq_LDP_proof_3} for any $(J,S)$-Lagrangian $F$ and for $2p$ instead of $p$.
Then, one has, for  any $n,m\in \Z$, $2p\in \{1,\ldots,N\}$, $\epsilon>0$ and $F$ a $(J,S)$-Lagrangian,
\begin{equation}\label{eq_LDP_proof_5}
B_{n,m,2p}(2\epsilon) \subset A_{n,m,2p}(\epsilon) \cap A_{n,m,2p-2}(\epsilon) \ \mbox{ and } B_{n,m,2p}(2\epsilon,F) \subset A_{n,m,2p}(\epsilon,F) \cap A_{n,m,2p-2}(\epsilon,F).
\end{equation}
One conclude as in the proof of Proposition \ref{prop_LDP_valeurssingulieres_SpC} using that for each $2p\in \{1,\ldots, N\}$ and each $(J,S)$-Lagrangian $F$,  $s_{2p}(T_{\ell m}^{\ell n}(E))=s_{2p+1}(T_{\ell m}^{\ell n}(E))$, $s_{2p}(T_{\ell m}^{\ell n}(E)\pi_{F}^{*})=s_{2p+1}(T_{\ell m}^{\ell n}(E)\pi_{F}^{*})$ and $\gamma_{2p}(E)=\gamma_ {2p+1}(E)$ since the transfer matrices are in $\SpONN$.
\end{proof}

\begin{proof}[Proof of the ILSE \eqref{eq:ILSEstrong}]
Assume either the assumptions of Theorem \ref{thm:loc_criterion1} or those of  Theorem \ref{thm:loc_criterion2}. Applying Proposition \ref{prop_LDP_valeurssingulieres_SpC} for the $J$-Lagrangian spaces $F_n$, $F_+$ and $F_-$ one gets  that there exist $C(\epsilon,E)>0$ and $c(\epsilon,E)>0$ such that,
\begin{equation}\label{Estimee_proba_omega_epsilon_n}
\forall \epsilon >0,\ \forall n\in \Lambda_{\frac{L}{3}},\ \PP\left( \Omega_{\epsilon }(n)\right) \leq C(\epsilon,E) \e^{-c(\epsilon,E) \ell |n-L|}
\end{equation}
from which we deduce, since there is only a finite number of real numbers $n \ell$ in $\Lambda_{\frac{L}{3}}$, that there exist $C>0$ and $c>0$ such that,
\begin{equation}\label{Estimee_proba_omega_epsilon}
\forall \epsilon >0,\ \PP\left( \Omega_{\epsilon }\right) \leq C \e^{-c \frac43 \ell L}
\end{equation}
remembering that we can choose the constants $C(\epsilon,E)$ and $c(\epsilon,E)$ uniform in $\epsilon$ when it tends to $0$ and uniform in $E\in I$ by compactness of $I$.

Applying Proposition \ref{prop_LDP_valeurssingulieres_SpONN} for the $(J,S)$-Lagrangian spaces $F_n$, $F_+^{\pm}$ and $F_-^{\pm}$ one gets also that there exist $C>0$ and $c>0$ such that,
\begin{equation}\label{Estimee_proba_tilde_omega_epsilon}
\forall \epsilon >0,\ \PP\left( \tilde{\Omega}_{\epsilon }\right) \leq C \e^{-c \frac43 \ell L}
\end{equation}

Proposition~\ref{bound:int_ker} together with~\eqref{eq:schur} and \eqref{Estimee_proba_omega_epsilon} or \eqref{Estimee_proba_tilde_omega_epsilon} (up to a passage to the complementary sets of $\Omega_{\epsilon }$ and $\tilde{\Omega}_{\epsilon }$) lead to~\eqref{eq:ILSEstrong} under either assumptions of Theorem \ref{thm:loc_criterion1} or Theorem \ref{thm:loc_criterion2}.
\end{proof}


\begin{proof}[Proof of Theorem~\ref{thm:loc_criterion1} and Theorem \ref{thm:loc_criterion2}]
In view of Theorems~\ref{pi} and~\ref{thm:main-section4}, proving Theorem~\ref{thm:loc_criterion1} and Theorem \ref{thm:loc_criterion2} reduces to prove  all the hypotheses stated in Section \ref{subsec:MSA} on interval $I$. It is easy to see that the restriction of the operators to intervals with Dirichlet boundary conditions  makes $\{\DO \}_{\omega \in \Omega}$ a  standard family of operators.

As in~\cite{Z23}, we first remark that, for all families of operators of the form given by \eqref{def_op_D_general}, (SGEE), (SLI), (IAD) and (EDI) have already been proven in \cite[Proof of Theorem 4.1]{BCZ} on $\R$. Similarly, (NE) is proven in the same paper, in the proof of Theorem~4.2. Even if there are extra hypotheses in that paper, they are not used in the proof of these specific assumptions. We already proven the Wegner estimate (W) under the hypotheses of Theorem~\ref{thm:loc_criterion1} or Theorem \ref{thm:loc_criterion2}.

 Finally, we have have just proven in this Section the ILSE under both hypotheses of Theorem~\ref{thm:loc_criterion1} or Theorem \ref{thm:loc_criterion2}, which achieves the proof of these two theorems.
\end{proof}

\section{Application for potentials splitting in a sum of two Pauli matrices}\label{sec:splitting:Pauli}



The goal of this section is  to establish the properties of Lyapunov exponents  for the models introduced in Section~\ref{sec:exemples}. We prove that the hypotheses of either Theorem~\ref{thm:loc_criterion1}, \ref{thm:loc_criterion2}  or~\ref{thm:deloc_crit} are satisfied, which respectively leads to the localization and delocalization results given by Theorem~\ref{thm:localization:exemple}.  The fact that the hypotheses of the localization criterion are satisfied depends in fact not really on the family of measures $\mu_E$, but only on the Furstenberg groups $G(E)$. Indeed, the Furstenberg  group can be independent of $E$ even if the measure $\mu_E$ depends on $E$.

Therefore, our first task is to determine the Furstenberg groups in the five different cases.
Since all transfer matrices are symplectic and have real elements, $G(E)$ is always included in the real symplectic group $\SpN$.  In some of the cases, we can have more. Let $\Delta$, the tridiagonal matrix with 0 on the diagonal and 1 on the upper and lower diagonal.
\begin{prop}\label{prop:inclusFurst}
    In all cases, $G(E)\subset\SpN$. Moreover, 
    \begin{itemize}
        \item in Case~1, $G(E)\subset \SO_{\mathrm{N}}(\R)$;
        \item in Case~5, if $V_{\mathrm{per}}=\Delta$, $G(E)\subset\SpONN$.
    \end{itemize}
\end{prop}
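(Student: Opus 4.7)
Since $\hat V_{\mathrm{per}}$ is constant and each $v_i = \mathbf{1}_{[0,\ell]}$, the full potential $V_{\mathrm{per}} + V_\omega$ is constant on each cell $[n\ell,(n+1)\ell)$. There the equation $\DO u = Eu$ reduces to the linear system $u' = J\bigl(V_{\mathrm{per}} + V_\omega|_{[n\ell,(n+1)\ell)} - EI_{2N}\bigr)u$ (using $J^{-1} = -J$, which follows from $J^2 = -I_{2N}$). Consequently,
\begin{equation*}
T_{\omega^{(n)}}(E) = \exp\bigl(\ell\, A_{\omega^{(n)}}(E)\bigr), \qquad A_{\omega^{(n)}}(E) := J\bigl(V_{\mathrm{per}} + V_\omega|_{[n\ell,(n+1)\ell)} - EI_{2N}\bigr),
\end{equation*}
and to prove an inclusion $G(E) \subset H$ for a closed Lie subgroup $H$ it suffices to check that $A_{\omega^{(n)}}(E) \in \mathrm{Lie}(H)$ for every realization.

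\textbf{General symplectic inclusion and Case 1.} For any real symmetric $V$, a direct computation using $J^T = -J$ and $J^2 = -I_{2N}$ gives $A^T J + JA = 0$ for $A = J(V - EI_{2N})$, so every transfer matrix is real and symplectic, proving $G(E) \subset \SpN$ in all cases. In Case~1 the potential on each cell has the form $I_2 \otimes W$ with $W$ real symmetric, hence
\begin{equation*}
A_{\omega^{(n)}}(E) = \begin{pmatrix} 0 & -(W - EI_N) \\ W - EI_N & 0 \end{pmatrix}
\end{equation*}
is antisymmetric, so $\exp(\ell A_{\omega^{(n)}}(E))$ is orthogonal, and combined with symplecticity it lies in $\SpN \cap \mathrm{O}(2N)$.

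\textbf{Case 5 with $\hat V_{\mathrm{per}} = \Delta$.} By Assumption~\ref{assumption2} there are two subcases, with $V_{\mathrm{per}}$ on $\sigma_1$ or on $\sigma_3$; I treat the $\sigma_1$ version, the other being analogous. Writing $D = \alpha_1 \Delta$ and $W = \beta_0 \hat V_{\omega^{(n)}}$,
\begin{equation*}
A_{\omega^{(n)}}(E) = \begin{pmatrix} -D & EI_N - W \\ -(EI_N - W) & D \end{pmatrix},
\end{equation*}
and a short block computation reduces the condition $A^T S + SA = 0$ to the two identities $KW = WK$ and $KD + DK = 0$. The first is immediate since both matrices are diagonal. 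The second, which is the only substantive computation and the sole place the hypothesis $\hat V_{\mathrm{per}} = \Delta$ enters, reads entrywise
\begin{equation*}
(K\Delta + \Delta K)_{ij} = \Delta_{ij}\bigl((-1)^{i+1} + (-1)^{j+1}\bigr),
\end{equation*}
and vanishes because $\Delta_{ij} \neq 0$ forces $|i - j| = 1$, making the two signs opposite. A standard Lie-group argument (the derivative of $t \mapsto \exp(tA)^T S \exp(tA)$ vanishes when $A^T S + SA = 0$) then yields $T_{\omega^{(n)}}(E) \in \SpONN$. The main obstacle, such as it is, lies in this parity-matching identity: it is exactly why $\Delta$ (rather than a generic tridiagonal matrix, or one with a nonvanishing diagonal) is singled out in the statement, and any diagonal modification of $V_{\mathrm{per}}$ would break the cancellation.
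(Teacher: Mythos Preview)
Your proof is correct and follows essentially the same route as the paper's: both reduce the group inclusions to Lie-algebra conditions on the generator $A=J(V-E)$, and your antisymmetry check in Case~1 is equivalent to the paper's observation that $[V_\omega+V_{\mathrm{per}},J]=0$ forces $\frac{d}{dy}(T^*T)=0$. In Case~5 the paper merely says ``a direct computation shows $T_x^y(E)\in\SpONN$'', so your explicit verification of $KD+DK=0$ for $D=\alpha_1\Delta$ via the parity argument actually fills in the step the paper omits.
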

\begin{proof}
We have already proved in Section~\ref{sec:trans_mat} that the Furstenberg group is always included in $\SpN$. In some of the cases defined in Table~\ref{tab:cas}, we can prove stronger inclusions. First, we have that
\[\frac{\d}{\d y}\left[(T_x^y(E))^*T_x^y(E)\right]=(T_x^y(E))^*[V_\omega+V_{\mathrm{per}},J]T_x^y(E).\]
When $V_\omega+V_{\mathrm{per}}=\sigma_0\otimes V_0$, \emph{i.e.} in case 1, $[V_\omega+V_{\mathrm{per}},J]=0$ so $(T_x^y(E))^*T_x^y(E)=I_{2N}$ for all $x$ and $y$: $G(E) \subset \SO_{2N}(\R)$.

If $V_{\mathrm{per}}=\Delta$, in Case 5, a direct computation shows that for any $x,y\in \R$ and any $E\in \R$, $T_x^y(E)\in \SpONN$ hence the result on $G(E)$. 
\end{proof}

This has already consequences on the Lyapunov exponents in some cases. We prove that some Lyapunov exponents vanish.

\begin{thm}\label{thm:zeroLyap}
    In Case 1, all Lyapunov exponents are identically 0.

    In Case 5 for odd $N$ and $V_{\mathrm{per}}=\Delta$, the lowest Lyapunov exponent $\gamma_N$ is identically 0.
\end{thm}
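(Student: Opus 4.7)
My plan splits along the two cases of the statement, each of which reduces to a short structural argument using the results already in hand.

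For Case~1, the strategy is to exploit the inclusion $G(E)\subset\SO_{2N}(\R)$ given by Proposition~\ref{prop:inclusFurst}. Every element of $\SO_{2N}(\R)$ is orthogonal, hence all its singular values equal~$1$, so $\|\Lambda^p M\|=1$ for every $p\in\{1,\dots,2N\}$. Applied to every realization of the cocycle, this gives $\|\Lambda^p\Phi_E(n,\omega)\|=1$ almost surely, so Definition~\ref{def_lyap_exp} (or equivalently Proposition~\ref{prop_lyapvps}) yields $\sum_{i=1}^p\gamma_i(E)=0$ for all $p$. Subtracting two consecutive such identities gives $\gamma_p(E)=0$ for every $p$.

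For Case~5 with $V_{\mathrm{per}}=\Delta$ and odd $N=2d+1$, I would combine two ingredients. First, Proposition~\ref{prop:inclusFurst} gives $G(E)\subset\SpONN$, and then Proposition~\ref{degen_gamma} yields the double degeneracy $\gamma_{2p-1}=\gamma_{2p}$ for every $p\leq N$. Specializing to $p=d+1$ (which is legitimate since $N=2d+1$) gives $\gamma_N=\gamma_{N+1}$. Second, the symplectic symmetry $\gamma_{2N-i+1}=-\gamma_i$ recorded after Proposition~\ref{prop_lyapvps} (valid because $G(E)\subset\SpN$) applied at $i=N$ gives $\gamma_{N+1}=-\gamma_N$. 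Combining the two identities forces $\gamma_N=-\gamma_N$, so $\gamma_N=0$.

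No step in this argument should present real difficulty: the two parts are essentially immediate corollaries of the inclusions of the Furstenberg group proven in Proposition~\ref{prop:inclusFurst} together with the general facts about Lyapunov exponents in orthogonal and in $\SpONN$-valued cocycles. The only subtlety is the parity check in Case~5, which is the precise reason why the conclusion $\gamma_N=0$ fails in general for even $N$: the middle pair $(\gamma_N,\gamma_{N+1})$ straddles two distinct degeneracy pairs when $N$ is even, and the symplectic symmetry no longer forces cancellation within one pair.
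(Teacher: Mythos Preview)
Your proposal is correct and follows essentially the same route as the paper: for Case~1 you use the orthogonality of the transfer matrices (exactly the content of \cite[Proposition~A.III.5.6]{BL} invoked there), and for Case~5 with odd $N$ you combine Proposition~\ref{degen_gamma} with the symplectic symmetry $\gamma_{N+1}=-\gamma_N$ just as the paper does. The only cosmetic difference is that you spell out the singular-value reasoning explicitly rather than citing the reference.
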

\begin{proof}
     The fact that all Lyapunov exponents associated with a sequence of matrices included in the compact group  $\mathrm{SO}_{2N}(\R)$ are  identically zero  is a direct consequence of Proposition~A.III.5.6 of~\cite{BL}.

The result for Case 5 when $N$ is odd, is a consequence of Proposition~\ref{degen_gamma} combined with $\gamma_p=-\gamma_{2N-p+1}$ for all $p$, which leads to $\gamma_N=\gamma_{N+1}=0$.
\end{proof}
From Theorems~\ref{thm:zeroLyap} and~\ref{thm:deloc_crit}, we directly get the  \emph{delocalization} results in Theorem~\ref{thm:localization:exemple}.

To get localization results, we need more information.
In Section~\ref{Liealg}, we compute the  Furstenberg groups in the different case when  $\ell$ is close to 0, which means that the system is strongly disordered.
\begin{thm}\label{thm:Furst_groups}
   Recall the different cases given in Table~\ref{tab:cas}. In case 1, $G(E)$ is a subgroup of $\mathrm{SO}_{2N}(\R)$.   Moreover, there exists $\ell_C>0$ such that, for all $\ell\in(0,\ell_C)$, there exists a compact interval $I(N,\ell)$ such that for all $E\in I(N,\ell)\setminus\{0\}$,
    \begin{itemize}
     \item in cases 2, 3 and 4, $G(E)=\SpN$.
     \item in Case~5,  if $V_{\mathrm{per}}=\Delta$, $G(E)=\SpONN$.
    \end{itemize}
\end{thm}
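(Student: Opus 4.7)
The plan is to realize each transfer matrix as the exponential of an element of $\spN$, compute the Lie algebra generated by these exponents, and invoke a Lie-algebra-to-Lie-group argument valid for $\ell$ small. Since both $V_{\mathrm{per}}$ and $\hat{V}_{\omega^{(n)}}$ are piecewise constant, one has the explicit formula $T_{\omega^{(n)}}(E) = \exp(\ell B_{\omega^{(n)}}(E))$ with $B_{\omega^{(n)}}(E) := J(V_{\mathrm{per}} + V_{\omega^{(n)}} - E I_{2N})$, which lies in $\spN$ since $J$ multiplied by a real symmetric matrix belongs to the symplectic algebra. I would take $I(N,\ell)$ to be a compact interval of energies on which $\ell \|B_\omega(E)\|$ stays below an absolute threshold, so that each $T_{\omega^{(n)}}(E)$ lies in a fixed exponential chart and the BCH series is unambiguous; this interval expands to cover $\R$ as $\ell \to 0$. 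The hypothesis $\{0,1\} \subset \supp \nu_i$ ensures that $\supp \mu_E$ contains the $2^N$ matrices $\exp(\ell B_\omega(E))$ obtained by letting each $\omega_i$ take the value $0$ or $1$.

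For case 1, Proposition~\ref{prop:inclusFurst} already yields the desired inclusion. In the remaining cases I denote by $\mathfrak{g}(E)$ the Lie subalgebra of $\spN$ generated by $\{B_\omega(E) : \omega \in \{0,1\}^N\}$. Subtracting the generator $B_0(E)$ shows that $\mathfrak{g}(E)$ contains both $X_0(E) := J V_{\mathrm{per}} - E J$ and the rank-one generators $Y_i := J \cdot (\beta_k \sigma_k) \otimes E_{ii}$ for $i=1,\dots,N$, where $\sigma_k$ is the single Pauli matrix appearing in $V_\omega$ and $E_{ii}$ is the elementary diagonal matrix. Iterated commutators of the $Y_i$ against $X_0(E)$ then propagate this rank-one pattern through the Pauli structure (using $[\sigma_a,\sigma_b]=2\mathrm{i}\epsilon_{abc}\sigma_c$) and through the off-diagonal structure of $V_{\mathrm{per}}$. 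I would check, case by case along the entries of Table~\ref{tab:cas}, that this procedure yields $\mathfrak{g}(E) = \spN$ in cases 2, 3, 4 and yields exactly $\spONN$ in case 5 with $V_{\mathrm{per}} = \Delta$; in the latter case the upper bound $\mathfrak{g}(E) \subset \spONN$ comes from Proposition~\ref{prop:inclusFurst}, while the matching lower bound uses the structural characterization of $\spONN$ established in Appendix~\ref{app:G5}. The exclusion $E \neq 0$ appears precisely because the term $-EJ$ in $X_0(E)$ is what breaks the extra symmetry that would otherwise force $\mathfrak{g}(0)$ into a proper subalgebra.

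Finally, the passage from Lie algebra to Lie group rests on the classical principle that, if $X_1,\dots,X_k \in \mathfrak{g}$ generate the Lie algebra of a connected Lie group $G$ and $\ell > 0$ is small enough, then the closed subgroup of $G$ generated by $\{\exp(\ell X_i)\}$ equals $G$; this follows from iterated BCH identities of the form $[\exp(\ell X_i),\exp(\ell X_j)]=\exp(\ell^2[X_i,X_j]+O(\ell^3))$ together with Cartan's closed-subgroup theorem and the connectedness of $\SpN$ and of $\SpONN$. The main obstacle is the case-by-case Lie algebra computation sketched above: one must carefully track both the Pauli index and the matrix index through long chains of commutators and verify that the resulting family spans the target algebra $\spN$ or $\spONN$ for every $E$ in $I(N,\ell)\setminus\{0\}$, which is where the genuine work of Section~\ref{Liealg} concentrates.
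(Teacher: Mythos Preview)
Your overall strategy matches the paper's exactly: write the transfer matrices as $\exp(\ell X_{\omega^{(0)}}(E))$, restrict to the $2^N$ generators indexed by $\{0,1\}^N$, compute the Lie algebra $\mathfrak{g}(E)$ they generate (this is the paper's Proposition~\ref{prop:Lie_alg}), choose $\ell_C$ and $I(N,\ell)$ so that the generators lie in a fixed exponential chart (Lemma~\ref{lem:normX}), and then pass from Lie algebra to Lie group.

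The one genuine gap is in that last step. The ``classical principle'' you invoke---that if $X_1,\dots,X_k$ generate $\mathfrak{g}$ then for small $\ell$ the closed subgroup generated by the $\exp(\ell X_i)$ is all of $G$---is \emph{false} without further hypotheses, and does not follow from BCH identities plus Cartan's theorem. A counterexample is $G=S^1$ with $\mathfrak{g}=\R$ and $X_1=1$: for any $\ell$ with $\ell/2\pi\in\Q$, the closed subgroup generated by $\exp(\ell X_1)$ is finite, and such $\ell$ exist in every neighbourhood of $0$. The BCH commutator $[\exp(\ell X_i),\exp(\ell X_j)]=\exp(\ell^2[X_i,X_j]+O(\ell^3))$ only produces elements whose logarithms are \emph{approximately} brackets, and there is no mechanism in your argument to kill the $O(\ell^3)$ errors or to conclude that the $X_i$ themselves lie in the Lie algebra of the closed subgroup.

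The paper closes this gap by invoking the Breuillard--Gelander theorem (Theorem~\ref{thm_breuillard}), which gives exactly the statement you want but under the crucial hypothesis that $G$ is \emph{semisimple}. This is why the paper takes the trouble, in Appendix~\ref{app:G5}, to verify that $\SpONN$ is connected and semisimple (Proposition~\ref{prop:decompG}, items~(6) and~(8)); without semisimplicity the theorem does not apply. Your proposal should therefore replace the BCH/Cartan justification by a citation of Breuillard--Gelander and an explicit check of semisimplicity for both $\SpN$ and $\SpONN$.
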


We then get that the hypotheses of either Theorem~\ref{thm:loc_criterion1} or Theorem~\ref{thm:loc_criterion2} are satisfied.
\begin{thm}\label{thm:sep_Lyap}
Fix $\ell<\ell_C$, for the $\ell_C$ introduced in Theorem~\ref{thm:Furst_groups}.
 Then, for all $E\in I(N,\ell,V)\setminus\{0\}$,
\begin{itemize}
\item in case 2, 3 or 4, $G(E)$ is $p$-contracting and $J$-$L_p$-strongly irreducible for all  $p$ in $\{1,\dots,N\}$.
\item in case 5,  if $V_{\mathrm{per}}=\Delta$, $G(E)$ is $2p$-contracting and $(J,S)$-$L_{2p}$-strongly irreducible for every $2p$ in $\{1,\dots,N\}$.
\end{itemize}
\end{thm}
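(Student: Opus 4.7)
The plan is to reduce the statement to a purely group-theoretic one via Theorem~\ref{thm:Furst_groups}: for the values of $\ell$ and $E$ under consideration, $G(E)=\SpN$ in Cases~2, 3, 4 and $G(E)=\SpONN$ in Case~5 with $V_{\mathrm{per}}=\Delta$. Since the required properties depend only on the Furstenberg group as a subset of $\GL_{2N}(\R)$, it is enough to verify $p$-contractivity and $J$-$L_p$-strong irreducibility for $\SpN$ in the first instance, and $2p$-contractivity and $(J,S)$-$L_{2p}$-strong irreducibility for $\SpONN$ in the second.

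For $G(E)=\SpN$ (Cases~2, 3, 4), I would appeal directly to classical results. The group $\SpN$ acts transitively on the variety of $p$-dimensional $J$-isotropic subspaces of $\R^{2N}$, which implies $J$-$L_p$-strong irreducibility: any $\SpN$-stable algebraic subset of the projectivization of $J$-$L_p$ is either empty or coincides with a whole irreducible component. For $p$-contractivity, a diagonal element $D=\diag(t_1,\ldots,t_N,t_1^{-1},\ldots,t_N^{-1})$ of $\SpN$ with $t_1>\cdots>t_N>1$ satisfies $\|\Lambda^p D^n\|^{-1}\Lambda^p D^n\to P$, where $P$ projects onto $e_1\wedge\cdots\wedge e_p$. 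These statements are Propositions A.IV.3.4 and A.IV.3.5 of~\cite{BL}, so nothing new is needed here.

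For $G(E)=\SpONN$ (Case~5), the strategy leans on Appendix~\ref{app:G5}. The Cartan-like decomposition of Proposition~\ref{prop:decompG}(5) writes any $M\in\SpONN$ as $M=KRU$ with $K,U$ orthogonal in $\SpONN$ and $R$ diagonal with entries $\e^{t_1},\e^{t_1},\ldots,\e^{t_d},\e^{t_d},\e^{-t_1},\e^{-t_1},\ldots,\e^{-t_d},\e^{-t_d}$ (each singular value of multiplicity two). Picking $t_1>\cdots>t_d>0$ and iterating $R^n$, one checks that $\|\Lambda^{2p}R^n\|^{-1}\Lambda^{2p}R^n$ converges to the rank-one projector onto the vector $f_1$ from~\eqref{def:f1}, which lies in $(J,S)$-$L_{2p}$ by~\eqref{def:Lp}; this yields $2p$-contractivity. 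For $(J,S)$-$L_{2p}$-strong irreducibility, I would combine the Cartan decomposition with the explicit elements $M_0$ and $M_0'$ already constructed in the proof of Proposition~\ref{prop:sepLyap} to show that the $\SpONN$-action on the generating orbit of $(J,S)$-$L_{2p}$ from~\eqref{def:Lp} has sufficiently rich orbits, and then invoke the classical algebraic argument (since the invariant Zariski closure of a generic point is the whole variety, and each of the finitely many candidate subspaces is proper and closed) to rule out any finite invariant union of proper subspaces.

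The main obstacle is expected to be the $(J,S)$-$L_{2p}$-strong irreducibility for $\SpONN$: this group is strictly smaller than $\SpN$, so one cannot invoke the standard Bougerol--Lacroix results directly, and one must carefully use the structural description of $\SpONN$ from Appendix~\ref{app:G5} to produce enough elements to exclude any candidate finite union of proper invariant subspaces of $(J,S)$-$L_{2p}$. The rest of the argument should then follow routinely from the Cartan-type decomposition and the explicit basis vectors $f_1, f_2$ already used in the proof of Proposition~\ref{prop:sepLyap}.
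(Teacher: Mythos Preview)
Your reduction via Theorem~\ref{thm:Furst_groups} and your treatment of Cases~2, 3, 4 by citing Bougerol--Lacroix are exactly what the paper does. Your $2p$-contractivity argument for $\SpONN$ is also essentially the paper's: iterate the Cartan factor $R$ of Proposition~\ref{prop:decompG}(4) with $t_1>\cdots>t_d>0$ and use the singular-value criterion of \cite[Proposition~A.IV.2.1]{BL}. (One small inaccuracy: $R$ is not diagonal but block-diagonal in the $2\times2$ blocks $B_{t_i}$; the singular values are still $\e^{\pm t_i}$ in pairs, so your conclusion survives.)

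The genuine gap is in your sketch for $(J,S)$-$L_{2p}$-strong irreducibility. Bringing in the elements $M_0,M_0'$ from the proof of Proposition~\ref{prop:sepLyap} is a red herring: in the paper those matrices serve only to exhibit a second basis vector $f_2$ for the purpose of lower-bounding $\|\Lambda^2\widehat{M}\|$; they do not by themselves give irreducibility, and ``rich orbits plus a classical algebraic argument'' is too vague to constitute a proof. The paper's argument (Proposition~\ref{prop:G5CSI}) rests on three concrete ingredients you do not mention. First, connectedness of $\SpONN$ (Proposition~\ref{prop:decompG}(6)) reduces strong irreducibility to ordinary irreducibility via \cite[Exercise~A.IV.2.9]{BL}, so one only needs to rule out a single proper invariant subspace $V$. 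Second, for $R$ with $t_1>\cdots>t_d$, the vector $f_1$ is the \emph{unique} (up to scalar) eigenvector of $\Lambda^{2p}R$ for its top eigenvalue $\e^{2(t_1+\cdots+t_p)}$; since $R$ is symmetric, both $V$ and $V^\perp$ are $\Lambda^{2p}R$-invariant, and decomposing $f_1=v+v^\perp$ forces either $f_1\in V$ or $f_1\in V^\perp$. Third, if $f_1\in V$ then $V=(J,S)\text{-}L_{2p}$ by the orbit definition~\eqref{def:Lp}; if $f_1\in V^\perp$, one uses that $\SpONN$ is closed under transpose (Proposition~\ref{prop:decompG}(1)) to get $\langle \Lambda^{2p}M f_1,v\rangle=\langle f_1,\Lambda^{2p}({}^tM)v\rangle=0$ for all $M\in\SpONN$ and $v\in V$, whence $V\perp (J,S)\text{-}L_{2p}$ and $V=\{0\}$. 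You should replace your orbit-richness sketch by these three steps.
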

\begin{proof}

    The case where $G(E)=\SpN$ is treated in the proof of Proposition~A.IV.3.5 of~\cite{BL}.

    The remaining case is studied in section~\ref{sec:cas5} of the present paper.
\end{proof}
Combining Theorems~\ref{thm:sep_Lyap} and~\ref{thm:loc_criterion1} or~\ref{thm:loc_criterion2}, we get the \emph{localization} results of Theorem~\ref{thm:localization:exemple}, at least for $V_{\mathrm{per}}=\Delta$. We give the complete proof of  Theorem~\ref{thm:localization:exemple} in Section \ref{subsec:proof-loc-exemple}.

\subsection{Determination of the Furstenberg groups}\label{Liealg}


In this section, we prove Theorem~\ref{thm:Furst_groups}.
 For models of the type introduced in Section~\ref{sec:exemples}, the common distribution of the $T_{\omega^{(n)}}(E)$'s is $\mu_E:=(T_{\omega^{(0)}}(E))_{*}\, (\nu)$ and we have the internal description of $G(E)$:
\begin{equation}\label{eq_GE_intern}
G(E)=\overline{<T_{\omega^{(0)}} (E)\ ;\ \omega^{(0)} \in \supp \nu >}\ \mbox{for all}\ E\in \R.
\end{equation}

Since we consider potentials which are constant on each interval $(n\ell, (n+1)\ell)$,
one has: 
\begin{equation}\label{mat_transfer_explicit_Xn}
\forall \ell>0,\ \forall n \in \Z,\ \forall E\in \R,\ T_{\omega^{(n)}}(E) = \exp\left( \ell  X_{\omega^{(n)}}(E) \right),
\end{equation}
where, using the decomposition~\eqref{decom:Vper}--\eqref{decomp:Vomega},
\begin{equation}\label{def_Xn}
\begin{split}
\forall E \in \R,\ \forall n\in \Z,\  X_{\omega^{(n)}}(E)&:= J(V_{\mathrm{per}}+V_{\omega^{(n)}}-E)\\
&=\left( \begin{smallmatrix}
-\alpha_1 \hat{V}_{\mathrm{per}} - \beta_1 \hat{V}_{\omega^{(n)}} & E-(\alpha_0-\alpha_3)\hat{V}_{\mathrm{per}} - (\beta_0-\beta_3)  \hat{V}_{\omega^{(n)}} \\
-E+(\alpha_0+\alpha_3)\hat{V}_{\mathrm{per}} + (\beta_0+\beta_3)  \hat{V}_{\omega^{(n)}} & \alpha_1 V + \beta_1 \hat{V}_{\omega^{(n)}}
\end{smallmatrix}\right)
\end{split}
\end{equation}

The explicit formula for transfer matrices, given in~\eqref{mat_transfer_explicit_Xn}, is not easy to handle since it is a matrix exponential in which the random parameters and the energy parameter are mixed. As a consequence, we would like to determine the Furstenberg groups directly from the matrices $X_{\omega^{(n)}}(E)$. This is possible when the disorder is large enough thanks to the following
 result due to  Breuillard and Gelander \cite{BG03}.

\begin{thm}[\cite{BG03}, Theorem 2.1]\label{thm_breuillard}
Let $G$ be a real, connected, semisimple Lie group, whose Lie algebra is $\mathfrak{g}$. Then, there is a neighborhood $\tilde{\mathcal{O}}$ of $1$ in $G$, on which $\log=\exp^{-1}$ is a well defined diffeomorphism, such that $g_{1},\ldots,g_{m}\in \tilde{\mathcal{O}}$ generate a dense subgroup whenever $\log g_{1},\ldots,\log g_{m}$ generate $\mathfrak{g}$.
\end{thm}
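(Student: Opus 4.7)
Given any $g_1, \ldots, g_m \in \tilde{\mathcal{O}}$ with $\log g_1, \ldots, \log g_m$ generating $\mathfrak{g}$, my plan is to form the subgroup $\Gamma := \langle g_1, \ldots, g_m \rangle$ and its closure $H := \overline{\Gamma}$, and to argue that $H = G$. Since $H$ is a closed subgroup of $G$, Cartan's closed subgroup theorem gives that $H$ is an embedded Lie subgroup, with some Lie algebra $\mathfrak{h} \subseteq \mathfrak{g}$. If one could show $\mathfrak{h} = \mathfrak{g}$, then the identity component $H_0$ of $H$ would have full dimension and hence would be open in $G$; connectedness of $G$ would then force $H_0 = G$, so $H = G$ and $\Gamma$ is dense. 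The problem thus reduces to proving $\mathfrak{h} = \mathfrak{g}$.

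For this, I would show that $\log g_i \in \mathfrak{h}$ for every $i$. Since the $\log g_i$ are assumed to generate $\mathfrak{g}$ as a Lie algebra and $\mathfrak{h}$ is a Lie subalgebra (being the Lie algebra of the Lie subgroup $H$), this yields the equality $\mathfrak{g} = \mathfrak{h}$ at once. Recall that $\mathfrak{h} = \{X \in \mathfrak{g} : \exp(tX) \in H \text{ for every } t \in \R\}$, so knowing merely that $g_i = \exp(\log g_i) \in H$ is not enough; one needs the entire one-parameter subgroup $\{\exp(t\log g_i) : t \in \R\}$ inside $H$. This will follow from the standard local description of a closed subgroup: there exists a neighborhood $V$ of $0$ in $\mathfrak{g}$ such that $H \cap \exp(V) = \exp(\mathfrak{h} \cap V)$. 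If $\tilde{\mathcal{O}} \subseteq \exp(V)$, then $g_i \in H \cap \tilde{\mathcal{O}}$ forces $\log g_i \in \mathfrak{h}$.

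The main obstacle is that the neighborhood $V$ furnished by the usual proof of Cartan's theorem depends on the subgroup $H$, whereas the theorem asks for a single $\tilde{\mathcal{O}}$ that works for every tuple $(g_1,\ldots,g_m)$, hence for every possible closure $H$. Such a uniformity is genuinely needed: in the abelian Lie group $\R^d$ one can find arbitrarily small generators whose logarithms span $\R^d$ yet which generate a discrete cocompact lattice, so no such uniform $\tilde{\mathcal{O}}$ exists without the semisimplicity hypothesis. The existence of a uniform $\tilde{\mathcal{O}}$ for semisimple $G$ is the substance of the theorem, and is proved via the Zassenhaus--Kazhdan--Margulis lemma: for semisimple $G$ there exists a neighborhood $U$ of $1$ such that any discrete subgroup of $G$ generated by elements of $U$ is virtually nilpotent. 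Combined with the Baker--Campbell--Hausdorff formula, which produces iterated commutators $[g_{i_1},[g_{i_2},[\cdots]]] \in H$ close to $\exp$ of the corresponding iterated Lie brackets, this allows one to recover, inside $H$, enough elements to describe the identity component $H_0$ in terms of $\mathfrak{h}$; choosing $\tilde{\mathcal{O}} \subseteq U$ sufficiently small then guarantees $\log g_i \in \mathfrak{h}$ for all $i$ and closes the proof.
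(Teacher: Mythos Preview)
The paper does not prove this theorem at all: it is quoted verbatim from \cite{BG03} and used as a black box. There is therefore no ``paper's own proof'' to compare against; your proposal is effectively a sketch of the argument in the cited reference.

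As a sketch, your outline is on the right track and correctly isolates the real difficulty. The reduction to showing $\log g_i\in\mathfrak h$ for the Lie algebra $\mathfrak h$ of the closure $H=\overline{\langle g_1,\dots,g_m\rangle}$ is exactly what is needed, and your observation that the Cartan-type neighborhood $V$ with $H\cap\exp(V)=\exp(\mathfrak h\cap V)$ depends on $H$, whereas the theorem requires a uniform $\tilde{\mathcal O}$, pinpoints the crux. The abelian counterexample you give is precisely why semisimplicity is essential. Where your write-up becomes thin is the final paragraph: invoking the Zassenhaus--Kazhdan--Margulis neighborhood is indeed the key idea in \cite{BG03}, but you have not actually explained how ``virtually nilpotent'' for a discrete group generated in $U$ forces a contradiction with $\log g_1,\dots,\log g_m$ generating the semisimple $\mathfrak g$, nor how the BCH asymptotics convert group commutators into the needed Lie-algebra information inside $H$. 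These steps are nontrivial (they occupy most of the proof in \cite{BG03}), so as written the proposal identifies the right ingredients but does not yet constitute a proof.
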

Note that this result holds only when we consider a finite number of generators. For any $(i,j)$ in $\{1,\dots,N\}$ we denote by $E_{ij}$  the $N$-by-$N$ matrix with a coefficient 1 in position $(i,j)$ and 0 everywhere else. For any $P\in\{0,1\}^N$, we denote by $X_J(E)$ the matrix $X_{\omega^{(0)}}(E)$ for $\hat{V}_{\omega^{(0)}}=\sum_{P_i=1}E_{ii}$. Let us introduce
\[G_{\{0,1\}}(E):=<\exp\left(\ell X_P (E)\right)\ ;\ P\in\{0,1\}^N >\subset G(E)\] since 
$\{0,1\}^N\subset \supp \nu$. The group $G_{\{0,1\}}(E)$ has $2^N$ generators. We will use Theorem~\ref{thm_breuillard} to prove that $G_{\{0,1\}}(E)$ is dense in $\SpN$ or $\SpONN$, depending on the case. Together with Proposition~\ref{prop:inclusFurst}, it will conclude the proof of Theorem~\ref{thm:Furst_groups}.

We first need to ensure that the groups $\SpN$ and $\SpONN$ are connected and semisimple. It is well-known that the symplectic group has these properties. In Appendix~\ref{app:G5}, we prove that they hold for the group~$\SpONN$ as well.


We next prove that, under good conditions on $E$ and $\ell$, the transfer matrices are in an appropriate neighborhood of the unit. We prove it in a general result holding for all operators of the form given by~\eqref{def_op_D_general} such that the  potential $V_\omega^{(n)}$ is constant on $[0,\ell]$ and has values in the space of real symmetric matrices.
\begin{lemma}\label{lem:normX}
    For all neighborhood $\mathcal{O}$ of $\mathrm{I}_{\mathrm{2N}}$ in $\SpN$, there exists $\ell_C>0$, $c_1,c_2\in \R$ and $d>0$ such that if $\ell< \ell_C$, then $c_1-\frac{d}{\ell}<c_2+\frac{d}{\ell}$ and $T_{\omO}(E)\in\mathcal{O}$ for all $\omO\in\{0,1 \}^N$ and all $E\in[c_1-\frac{d}{\ell},c_2+\frac{d}{\ell}]$.
\end{lemma}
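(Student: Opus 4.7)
The plan is to exploit the explicit form $T_{\omO}(E) = \exp(\ell X_{\omO}(E))$ together with continuity of the exponential map at the origin of the Lie algebra $\spN$. Once $\ell X_{\omO}(E)$ is made small enough (in the operator norm, say) it will automatically be sent into $\mathcal{O}$ by $\exp$, and the point is simply to estimate $\|X_{\omO}(E)\|$ explicitly and solve for the admissible range of $E$.

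First I would bound $X_{\omO}(E)$ uniformly over $\omO \in \{0,1\}^N$. From $X_{\omO}(E) = J(V_{\mathrm{per}}+V_{\omO}-E I_{2N})$, the triangle inequality and $\|J\|=1$ give
\[
\|X_{\omO}(E)\| \;\le\; C_1 + |E|, \qquad C_1 \;:=\; \max_{\omO\in\{0,1\}^N}\,\|V_{\mathrm{per}}+V_{\omO}\|,
\]
where the maximum is over the $2^N$ binary realisations and is therefore finite (and independent of $\ell$ and $E$).

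Next, given the neighborhood $\mathcal{O}$ of $I_{2N}$ in $\SpN$, continuity of $\exp:\spN \to \SpN$ at $0$, together with $\exp(0)=I_{2N}\in\mathcal{O}$, produces a radius $r>0$ such that $\exp(Y)\in\mathcal{O}$ whenever $\|Y\|\le r$. Combined with the previous estimate, a sufficient condition for $T_{\omO}(E)\in\mathcal{O}$ to hold for every $\omO\in\{0,1\}^N$ is
\[
\ell(C_1+|E|)\;\le\; r, \qquad \text{i.e.,}\qquad |E|\;\le\;\frac{r}{\ell}-C_1.
\]

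Finally I would package this in the form demanded by the statement by setting $c_1:=C_1$, $c_2:=-C_1$, $d:=r$ and $\ell_C:=r/C_1$. Then for every $\ell<\ell_C$ one has $C_1<r/\ell$, so $c_1-d/\ell < c_2+d/\ell$, and the interval $[c_1-d/\ell,\,c_2+d/\ell]=[C_1-r/\ell,\,-C_1+r/\ell]$ coincides with the set of $E$ satisfying $|E|\le r/\ell-C_1$; on this interval the above chain of inequalities yields $T_{\omO}(E)\in\mathcal{O}$ for every $\omO\in\{0,1\}^N$. I do not foresee any real obstacle: the lemma is essentially the observation that $\exp$ is continuous at $0$, quantified so that the energy window widens like $1/\ell$ as the disorder gets stronger, which is exactly the regime in which Theorem~\ref{thm_breuillard} will be applicable in the subsequent identification of the Furstenberg groups.
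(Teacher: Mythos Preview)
Your proof is correct and follows essentially the same approach as the paper: both use continuity of $\exp$ at $0$ to reduce to a bound on $\ell\|X_{\omO}(E)\|$, then solve for the admissible $E$-range. The only minor difference is that the paper computes $\|X_{\omO}(E)\|=\max_i|\lambda_i^{\omO}-E|$ exactly (using that $J$ is orthogonal and the potential is symmetric), yielding an interval centered on the spectrum of the potential, whereas your triangle-inequality bound $\|X_{\omO}(E)\|\le C_1+|E|$ gives a slightly smaller interval centered at $0$; both suffice for the lemma.
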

Note that we do not need to prove an analogous result for $\SpONN$
 since, for any neighborhood $\mathcal{O}$ of $\mathrm{I}_{\mathrm{2N}}$ in $\SpONN$, there exists a neighborhood $\mathcal{O}'$ of $\mathrm{I}_{\mathrm{2N}}$ in $\SpN$ such that $\mathcal{O}=\mathcal{O}'\cap \SpONN.$
\begin{proof}
 Given a neighborhood $\mathcal{O}$ of 1 in $\SpN$, we set: 
$$\dlO:=\sup\{ R>0\ |\ \exp\left[\overline{B}(0,R)\right]\subset\mathcal{O} \},$$
where $\overline{B}(0,R)$ is the closed ball, centered on $0$ and of radius $R>0$, for the topology induced on the Lie algebra $\spN$  by the matrix norm induced by the euclidean norm on $\R^{2N}$. Then, $T_{\omO}(E)$ will be in $\mathcal{O}$ as soon as $\ell\|X_{\omO}(E)\|\leq \dlO$.

We remark that $X_{\omO}(E)=J^{-1}(E-V_{\omO})$. Since $J^{-1}$ is an orthogonal matrix, we have, denoting by $(\lambda_i^{\omO})_{1\leq i \leq 2N}$ the eigenvalues of the symmetric matrix of $V_{\omO}$, \[\|X_{\omO}(E)\|=\|E-V_{\omO}\|=\max_{1\leq i\leq 2N}|\lambda_i^{\omO}-E|.\]

We want to find an interval of values of $E$ on which  $\ell \|X_{\omO}(E)\|<\dlO$. In other words, we want to characterize the set
\[I(\ell,\mathcal{O}):=\left\{E\in\R, \max_{\omega\in\{0,1\}^N}\max_{1\leq i\leq N}|\lambda_i^\omega-E|\leq\frac{\dlO}{\ell}\right\}.\]
We see that 
\[I(\ell,\mathcal{O})=\bigcap_{\omega\in\{0,1\}^N}\bigcap_{1\leq i\leq N}\left[\lambda_i^\omega-\frac{\dlO}{\ell},\lambda_i^\omega+\frac{\dlO}{\ell}\right].\]

We define 
\[\lambda_{\max}:=\max_{\omega\in\{0,1\}^N}\max_{1\leq i\leq N}\lambda_i^\omega\text{ and }\lambda_{\min}:=\min_{\omega\in\{0,1\}^N}\min_{1\leq i\leq N}\lambda_i^\omega.\]
If $(\lambda_{\max}-\lambda_{\min})/2<\dlO/\ell$, then $I(\ell, \mathcal{O})=[\lambda_{\max}-\dlO/\ell,\lambda_{\min}+\dlO/\ell]$ which is a nonempty interval.
As a consequence, we have proved that if $\ell<\ell_C:=2\dlO/(\lambda_{\max}-\lambda_{\min})$, then for all $E\in I(\ell,N)$ $\ell\|X_{\omO}(E)\|<\dlO$.
As a consequence, $T_{\omO}(E)=\exp(\ell X_\omO(E))\in \mathcal{O}$. 
\end{proof}

Finally, we prove that, in each case, the Lie algebra generated by the $X_{\omega^{(0)}}(E)$'s corresponds with the Lie algebra associated with the group appearing in Theorem~\ref{thm:Furst_groups}.
Let us denote, for every $E\in \R$,
\begin{equation}\label{def_gE}
\mathfrak{g}(E): = \mathrm{Lie}\left\{ X_P(E) ;\ P\in\{0,1\}^N\right\}.
\end{equation}

We explicitly compute $\gg(E)$ in the different cases described in Table~\ref{tab:cas}. In order to state the result, we introduce the following algebras:
\begin{itemize}
    \item $\spN$ is the algebra associated with the symplectic group $\SpN$, it consists of all matrices of the shape $\begin{pmatrix}
        A&B\\C &-^tA
    \end{pmatrix}$, where $B$ and $C$ are symmetric.
    \item Let us introduce the operator $s$ defined on $\mathcal{M}_{N}(\R)$ by
$(^sM)_{ij}=(-1)^{i-j+1}M_{ji}$. 
 We define the Lie algebra 
 \begin{equation}\label{eq_expr_sponn}
\spONN:= \left\{\begin{pmatrix}
     A&B\\^sB&-^tA
    \end{pmatrix}
\text{ with } ^sA=A, ^tB=B\right\}.
\end{equation} This is the Lie algebra associated with the group $\SpONN$, as we prove in Appendix~\ref{app:G5}.
    \end{itemize}
We have the following result.
\begin{prop}\label{prop:Lie_alg}
Let us recall the 5 cases defined in Table~\ref{tab:cas} and assume that $V_{\mathrm{per}}=\Delta$. Then, for all $E\neq0$,
\begin{itemize}
    \item in cases 2, 3 and 4, $\mathfrak{g}(E)=\spN$; 
    \item in case 5, $\gg(E)=\spONN$.
\end{itemize}
\end{prop}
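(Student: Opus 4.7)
The plan is to exhibit, inside the Lie algebra $\mathfrak{g}(E)$, enough elements to span the target algebra---$\spN$ in Cases~2, 3, 4 and $\spONN$ in Case~5. The starting point is the decomposition
\[
X_P(E) = -EJ + \alpha_k\, J(\sigma_k \otimes \Delta) + \beta_m\, J(\sigma_m \otimes \mathrm{diag}(P)),
\]
where $k$ (resp.\ $m$) is the unique index with $\alpha_k \neq 0$ (resp.\ $\beta_m \neq 0$). Writing $e_i \in \{0,1\}^N$ for the standard basis vector supported at position $i$, the finite differences $X_{e_i}(E) - X_0(E)$ yield (up to the nonzero scalar $\beta_m$) the atomic generators $Z_i := J(\sigma_m \otimes E_{ii})$ for $i = 1, \ldots, N$ inside $\mathfrak{g}(E)$, and $X_0(E) = -EJ + \alpha_k J(\sigma_k \otimes \Delta)$ is also in $\mathfrak{g}(E)$. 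The inclusions $\mathfrak{g}(E) \subset \spN$ (all cases) and $\mathfrak{g}(E) \subset \spONN$ (Case~5) follow by passing to tangent spaces in Proposition~\ref{prop:inclusFurst}, so only the reverse inclusions are nontrivial.

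The core computation is a systematic bracketing, organized via the $2 \times 2$ Pauli-block decomposition ($J = -i\sigma_2 \otimes I_N$, with every real $2 \times 2$ matrix expressible in the basis $\{\sigma_0, \sigma_1, \sigma_3, i\sigma_2\}$). The bracket $[X_0(E), Z_i]$ splits into a piece $2E(\tau \otimes E_{ii})$ for a Pauli combination $\tau$ arising from $[-EJ, Z_i]$, and a piece $\alpha_k(\tau' \otimes [\Delta, E_{ii}])$ or $\alpha_k(\tau' \otimes \{\Delta, E_{ii}\})$ arising from $[J(\sigma_k \otimes \Delta), Z_i]$, with $\tau$ and $\tau'$ sitting in linearly independent Pauli components. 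Because $E \neq 0$, each summand can be isolated in $\mathfrak{g}(E)$. Since $[\Delta, E_{ii}]$ has support exactly at $(i, i \pm 1)$ and $(i \pm 1, i)$, iterating brackets with the $Z_j$'s and with $J(\sigma_k \otimes \Delta)$ propagates support across every index pair and ultimately generates matrices of the form $\sigma \otimes E_{ij}$ for all permitted $\sigma \in \{\sigma_0, \sigma_1, \sigma_3, i\sigma_2\}$ and all $(i,j)$.

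In Cases~2, 3, 4, the Pauli combinations obtained via $\sigma_k, \sigma_m, i\sigma_2$ cover all four basis elements of $\mathcal{M}_2(\R)$, and combined with full $E_{ij}$-propagation this yields a spanning family for $\spN$ in its Pauli-block decomposition. In Case~5, the Pauli side is again fully accessible, but the constraint on the $N \times N$ side is tighter: the anti-$K$-symmetry $K\Delta K = -\Delta$ (for $K = \mathrm{diag}((-1)^{i+1})$) forces the iterated brackets against $\Delta$ to land in matrices compatible with the ${}^s$-symmetry defining $\spONN$, and a dimension count $\dim \spONN = N^2$ matches the number of independent generators produced. The main obstacle lies precisely in this Case~5 verification: one must confirm that the bracket-closure saturates the $N^2$-dimensional algebra $\spONN$ rather than stopping at a proper subalgebra, and this rests on the delicate matching between the tridiagonal support of $\Delta$ and the allowed ${}^s$-symmetric support pattern of matrices in $\spONN$, which is exactly what the choice $V_{\mathrm{per}} = \Delta$ guarantees.
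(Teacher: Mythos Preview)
Your strategy matches the paper's: extract atomic generators $Z_i$ from the differences $X_{e_i}(E)-X_0(E)$, bracket with $X_0(E)$ to produce further elements, and iterate until a spanning family for the target algebra appears. The Pauli-tensor bookkeeping you use is a clean organizing device and is essentially equivalent to the paper's block-matrix computations.

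However, there is a genuine gap at the ``isolation'' step. You write that $[X_0(E),Z_i]=2E(\tau\otimes E_{ii})+\alpha_k(\tau'\otimes C_i)$ with $\tau,\tau'$ in linearly independent Pauli components, and conclude from $E\neq0$ that each summand lies separately in $\mathfrak{g}(E)$. That inference is not valid: a single Lie-algebra element that splits across two Pauli components does not place either piece in the algebra by itself. One needs an additional mechanism to separate them, and this is where the actual work lies. In the paper this is done with case-specific tricks. In Case~2 one brackets $X_P(E)$ against $\sum_i Z_i=\sigma_1\otimes I_N$, so that the $\Delta$-contribution vanishes (since $[\Delta,I_N]=0$), producing $\sigma_3\otimes I_N$; only then do individual brackets against $Z_i$ isolate the diagonal pieces. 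In Cases~3 and~4 one takes the alternating sum $\sum_i(-1)^{i+1}[X_0(E),Z_i]$ and uses the identity $\{K,\Delta\}=0$ to kill the $\Delta$-piece and extract $\sigma_3\otimes K$. You mention the anti-$K$-symmetry $K\Delta K=-\Delta$ only in connection with Case~5, but it is precisely this identity that makes the separation work in Cases~3 and~4 as well.

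For Case~5 you explicitly acknowledge that saturation of $\spONN$ is the delicate point and appeal to a dimension count together with a qualitative remark about tridiagonal support; this is not a proof. The paper closes the argument with a structural lemma (Lemma~\ref{lem:struc_spON}) exhibiting a small explicit generating set for $\spONN$, and then checks by direct commutator computation that $\mathfrak{g}(E)$ contains that set. Your outline would become a complete proof once you supply the separation tricks above and either prove such a generating-set lemma for $\spONN$ or carry out the explicit bracket expansions.
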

In the rest of the section, we prove this proposition in the different cases. We fix $E\neq 0$.
We use the following lemma.
\begin{lemma}\label{lem:struc_spN}
 We introduce the following notations:
 \begin{equation}\label{def:XijYij}
Z_{ij}:=\begin{pmatrix}
        0&E_{ij}+E_{ji}\\
        E_{ij}+E_{ji}&0
       \end{pmatrix}
       \quad
Y_{ij}:=\begin{pmatrix}
        0&E_{ij}+E_{ji}\\
        -E_{ij}-E_{ji}&0
       \end{pmatrix}.
       \end{equation}
If some subalgebra $\gg$ of $\spN$ contains all the $Z_{ij}$'s and $Y_{ij}$'s for $|i-j|\leq 1$ , then $\gg=\spN$.
  \end{lemma}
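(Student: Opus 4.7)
The plan is to exhibit every element of a natural spanning set of $\spN$ as an iterated Lie bracket of the given generators. To this end, introduce
\[A_{ij}:=\begin{pmatrix} E_{ij} & 0 \\ 0 & -E_{ji}\end{pmatrix},\quad B_{ij}:=\begin{pmatrix} 0 & E_{ij}+E_{ji} \\ 0 & 0\end{pmatrix},\quad C_{ij}:=\begin{pmatrix} 0 & 0 \\ E_{ij}+E_{ji} & 0\end{pmatrix},\]
so that $\{A_{ij},B_{ij},C_{ij}\}_{1\le i,j\le N}$ linearly spans $\spN$. The proof will show that each of these matrices lies in $\gg$, in four stages.

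First, since $B_{ij}=\tfrac12(Z_{ij}+Y_{ij})$ and $C_{ij}=\tfrac12(Z_{ij}-Y_{ij})$, the algebra $\gg$ directly contains every $B_{ij}$ and $C_{ij}$ with $|i-j|\le 1$. Second, a short block computation relying only on the elementary rule $E_{ab}E_{cd}=\delta_{bc}E_{ad}$ gives $[B_{ii},C_{ii}]=4A_{ii}$ and $[B_{ii},C_{i,i\pm 1}]=2A_{i,i\pm 1}$; this places all ``tridiagonal'' $A_{ij}$'s (those with $|i-j|\le 1$) inside $\gg$, including the full Cartan subalgebra $\{A_{ii}\}_{1\le i\le N}$.

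Third---the step carrying the real combinatorial content of the argument---I will propagate these tridiagonal $A$'s to every pair of indices by induction on $|i-j|$, based on the telescoping identity $[A_{i,i+1},A_{i+1,j}]=A_{ij}$ for $j>i+1$ and its mirror for $j<i-1$. Once every $A_{ij}$ is available, the remaining $B$'s and $C$'s are recovered from the uniform formulas $[A_{ij},B_{jj}]=2B_{ij}$ and $[A_{ji},C_{jj}]=-2C_{ij}$, which hold for arbitrary $i,j$ and whose right-hand sides involve only generators already known to lie in $\gg$.

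I do not anticipate any serious obstacle: each bracket identity is a one-line matrix computation using $E_{ab}E_{cd}=\delta_{bc}E_{ad}$, and the whole argument is essentially the standard Chevalley-style presentation of $\spN$ in terms of its nearest-neighbour raising and lowering operators. The only point requiring attention is the bookkeeping in the inductive step, namely checking that the two interior indices in $[A_{i,i+1},A_{i+1,j}]$ produce precisely the cancellation needed to yield $A_{ij}$ rather than an unwanted lower-weight term.
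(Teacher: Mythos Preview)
Your proof is correct: each bracket identity you state checks out (the only one worth a second glance is $[A_{i,i+1},A_{i+1,j}]=A_{ij}$ for $j\neq i$, and there the two cross-terms $E_{i+1,i}E_{j,i+1}$ and $E_{i+1,j}E_{i,i+1}$ indeed vanish, so no unwanted lower-weight term appears). The paper itself does not give a proof but refers to the proof of Lemma~1 in \cite{Bou09}; your argument is the standard Chevalley-type generation of $\spN$ from its nearest-neighbour root vectors and is exactly the kind of computation that reference carries out.
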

  A proof of this result can be found in the proof of Lemma~1 of~\cite{Bou09}, even if it is not really stated.
  



\subsubsection*{Case 2: potentials both on $\sigma_3$}
Recall that, according to Table~\ref{tab:cas}, this case covers the case where potentials are both on $\sigma_1$ as well. We consider the family of matrices $\{X_P(E)\}_{P\in \{0,1\}^N}$, as defined in~\eqref{def_Xn}, for $\alpha_3=\beta_3=1$ and all other coefficients equal to 0. We have\[X_P(E)=\left( \begin{smallmatrix}
0 & E+\Delta +\sum_{P_i=1} E_{ii} \\
-E+\Delta +  \sum_{P_i=1} E_{ii} & 0
\end{smallmatrix} \right).\]

In view of Lemma~\ref{lem:struc_spN}, we will prove that, for such a family $\{X_P(E)\}_{P\in\{0,1\}^N}$,  all  matrices $Z_{ij}$ and $Y_{ij}$ for $|i-j|\leq 1$ are in $\gg(E)$.
First, for any $P,P'\in \{0,1\}^N$, \[X_P(E)-X_{P'}(E)=\left(\begin{smallmatrix}                                                                        0&\sum_{P_i=1} E_{ii}-\sum_{P'_i=1} E_{ii}\\                                                                        \sum_{P_i=1} E_{ii}-\sum_{P'_i=1} E_{ii}&0                                                                                     \end{smallmatrix}\right).\] 
For all $i\in \{1,\dots,N\}$, we can choose $P=(0,\dots,1,\dots,0)$ with the 1 in position $i$ and $P'=(0,\dots,0)$. We find that $Z_{ii}$ is in $\gg(E)$.

Then, we have that for all $P\in \{0,1\}^N$
\[\left[X_{P}(E),\begin{pmatrix}0&I_N\\
                I_N&0
               \end{pmatrix}\right]=
              -2E\begin{pmatrix}
                    I_N&0\\
                    0&-I_N
                   \end{pmatrix}.
               \]
Therefore, since $E\neq 0$,  $\begin{pmatrix}     I_N&0\\
                    0&-I_N
                   \end{pmatrix}\in\gg(E)$.
                   We can then see that for all $i$ in $\{1,\dots, N\}$
                   \[\left[X_{ii},\begin{pmatrix}
                    I_N&0\\
                    0&-I_N
                   \end{pmatrix}\right]=-2Y_{ii},\]
                   and thus all the $Y_{ii}$ are in $\gg(E)$.      
We have thus that
\begin{equation*}
    \{\left(\begin{smallmatrix} 0&D_1\\
D_2&0
\end{smallmatrix}\right), D_1,\ D_2\text{ diagonal}\}\subset \gg(E).
\end{equation*}
Therefore, 
\begin{equation*}
 X_P(E)-  \left( \begin{smallmatrix}
0 & E + \sum_{P_i=1} E_{ii} \\
-E+  \sum_{P_i=1} E_{ii} & 0
\end{smallmatrix} \right)= \left(\begin{smallmatrix}
0&\Delta\\                                                                    
\Delta&0                                                      \end{smallmatrix}\right)\in\gg(E).
\end{equation*}
 We  compute, for $i$ in $\{1,\dots,N\}$,
\[\left[ \begin{pmatrix}
  0&\Delta\\\Delta&0                                     \end{pmatrix}, Y_{ii}\right]=-2\begin{pmatrix}
                B_i&0\\
                0& -B_i
               \end{pmatrix},\]
where $B_i:=\Delta E_{ii}+E_{ii}\Delta=E_{i,i-1}+E_{i,i+1}+E_{i-1,i}+E_{i+1,i}$ with the convention that $E_{ij}=0$ if $i$ or $j$ is not in $[1,N]$.
We remark that  \begin{equation*}
    \text{Span}(\{B_i, i=1,\dots,N-1\})=\text{Span}(\{E_{i,i+1}+E_{i+1,i}, i=1,\dots,N-1\}),
\end{equation*}
which is the space of symmetric tridiagonal matrices with zeros on the diagonal.
 
 Now, for any $i$ in $\{1,\dots,N-1\}$,
 \[\left[\begin{pmatrix}
          E_{i,i+1}+E_{i+1,i}&0\\
          0&-(E_{i,i+1}+E_{i+1,i})
         \end{pmatrix},\begin{pmatrix}0&I_N\\
         I_N&0\end{pmatrix}\right]=2Y_{i,i+1}.
         \]
As a consequence,  all the $Y_{ij}$'s with $|i-j|=1$ are in $\gg(E)$.
Last, 
\[\left[\begin{pmatrix}
         E_{i,i+1}+E_{i+1,i}&0\\
          0&-(E_{i,i+1}+E_{i+1,i})
         \end{pmatrix},\begin{pmatrix}0&I_N\\
         -I_N&0\end{pmatrix}\right]=2Z_{i,i+1}.
         \]
We have that all the $Z_{ij}$'s with $|i-j|=1$ are in $\gg(E)$.
With Lemma~\ref{lem:struc_spN}, this concludes the proof that $\gg(E)=\spN$.

\subsubsection*{Case 3: deterministic potential on $\sigma_0$, random potential on $\sigma_3$.}
Recall that, according to Table~\ref{tab:cas}, this covers as well the case with the deterministic potential on $\sigma_0$ and the random potential on $\sigma_1$. We consider the family of matrices $\{X_P(E)\}_{P\in\{0,1\}^N}$, as defined in~\eqref{def_Xn}, with $\alpha_0=\beta_3=1$ and all the other coefficients equal to 0.
In other words,  \[X_P(E)=\left( \begin{smallmatrix}
0 & E-\Delta + \sum_{P_i=1} E_{ii} \\
-E+\Delta +  \sum_{P_i=1} E_{ii} & 0
\end{smallmatrix} \right).\]

We  prove that all the $Z_{ij}$'s and $Y_{ij}$'s, for $|i-j|\leq1$, are in $\gg(E)$.
First, we see that
\begin{equation}
    X_{P}(E)-X_{P'}(E)=\left(\begin{smallmatrix}                                                                        0&\sum_{P_i=1} E_{ii}-\sum_{P'_i=1} E_{ii}\\                                                                       \sum_{P_i=1} E_{ii}-\sum_{P'_i=1} E_{ii}&0                                                                                     \end{smallmatrix}\right)
\end{equation}
so, for all $i$,  $Z_{ii}\in\gg(E)$.

Then,  \[\left( \begin{smallmatrix}
0 & E-\Delta \\
-E+\Delta  & 0
\end{smallmatrix} \right)\in\gg(E).\]
Taking the bracket, we find that for all $i$ in $\{1,\dots,N\}$,
\begin{equation}\label{commut:DDelta}\left[\left( \begin{smallmatrix}
0 & E-\Delta \\
-E+\Delta  & 0
\end{smallmatrix}\right),Z_{ii}\right]=\left( \begin{smallmatrix}
2EE_{ii}-\{E_{ii},\Delta\} &0 \\0&
-2EE_{ii}+\{E_{ii},\Delta\}  
\end{smallmatrix}\right) \in\gg(E).\end{equation}

Recall that we have defined the matrix $K=\sum_{i=1}^N(-1)^{i+1}E_{ii}$. We find that  $\{K,\Delta\}=0$. As a consequence, since $E\neq0$, we find that
\begin{equation}
\sum_{i=1}^N(-1)^{i+1}\left[\left( \begin{smallmatrix}
0 & E-\Delta \\
-E+\Delta  & 0
\end{smallmatrix}\right),Z_{ii}\right]=2E
\left(\begin{smallmatrix}
K &0 \\0&
-K  
\end{smallmatrix}\right) \in\gg(E).\end{equation}
Taking another bracket, we find that, for all $i\in\{1,\dots,N\}$,
\begin{equation}
\left[\left( \begin{smallmatrix}
K &0 \\0&
-K 
\end{smallmatrix}\right),\left( \begin{smallmatrix}
0 & E_{ii} \\
E_{ii} & 0
\end{smallmatrix}\right)\right]=\left( \begin{smallmatrix}
0 &2(-1)^{i+1}E_{ii} \\-2(-1)^{i+1}E_{ii}&
0  
\end{smallmatrix}\right)=(-1)^{i+1}Y_{ii} \in\gg(E).
\end{equation}

On the other hand,  we find that
\begin{equation}
\Theta:=\sum_{i=1}^N\left[\left( \begin{smallmatrix}
0 & E-\Delta \\
-E+\Delta  & 0
\end{smallmatrix}\right),Z_{ii}\right]=\left( \begin{smallmatrix}
E-\Delta &0 \\0&
-E+\Delta 
\end{smallmatrix}\right) \in\gg(E).
\end{equation}
For $i$ between 2 and $N-1$,
\begin{equation}
\frac12[\Theta,Z_{ii}]=\left( \begin{smallmatrix}
0 &2EE_{ii}-\{E_{ii},\Delta\} \\-2EE_{ii}+\{E_{ii},\Delta\}&
0
\end{smallmatrix}\right)=EY_{ii}+Y_{i,i+1}+Y_{i,i-1}\end{equation} and
\begin{equation}
\frac12[\Theta,Z_{11}]=\left( \begin{smallmatrix}
0 &2EE_{11}-\{E_{11},\Delta\} \\-2EE_{11}+\{E_{11},\Delta\}&
0
\end{smallmatrix}\right)=EY_{11}+Y_{12}.\end{equation}
This, together with the fact that all $Y_{ii}$'s are in $\gg(E)$, implies that, for all $i=1,\dots,N-1$, $Y_{i,i+1}\in\gg(E)$. Similarly, we prove, computing the commutators $[\Theta,Y_{ii}]$ that all $Z_{i,i+1}$ are in $\gg(E)$. As a consequence, by Lemma~\ref{lem:struc_spN}, $\gg(E)=\sp_N(\R)$.

\subsubsection*{Case 4: deterministic potential on $\sigma_1$, random potential on $\sigma_3$}

Recall that, according to Table~\ref{tab:cas}, this covers as well the case where the deterministic potential is on $\sigma_3$ and the random potential is on $\sigma_1$. We consider the family of matrices $\{X_{P}(E)\}_{P\in \{0,1\}^N}$, as defined in~\eqref{def_Xn}, with $\alpha_1=\beta_3=1$ and all other coefficients equal to 0.
As a consequence,  \[X_{P}(E)=\left( \begin{smallmatrix}-\Delta & E+ \sum_{P_i=1} E_{ii} \\
-E+   \sum_{P_i=1} E_{ii}&  \Delta 
\end{smallmatrix} \right)                                       .\]
As in the previous cases, we will use Lemma~\ref{lem:struc_spN} and only prove that all the $Z_{ij}$'s and $Y_{ij}$'s, for $|i-j|\leq1$, are in $\gg(E)$.
First, we see that for all $P,P'$
\begin{equation}
    X_P(E)-X_{P'}(E)=\sum_{P_i=1}Z_{ii}-\sum_{P'_i=1}Z_{ii}\in\gg(E)
\end{equation}
so, for all $i$, $Z_{ii}\in\gg(E)$.

Then,  \[X_P(E)-\sum_{P_i=1}Z_{ii}=\left( \begin{smallmatrix}
-\Delta & E \\
-E&\Delta  
\end{smallmatrix} \right)\in\gg(E).\]
Taking the bracket, we find that for all $i$ in $\{1,\dots,N\}$,
\begin{equation}\label{commut:DDeltabis}\left[\left( \begin{smallmatrix}
-\Delta & E \\
-E&-\Delta  
\end{smallmatrix}\right),Z_{ii}\right]=\left( \begin{smallmatrix}
-2EE_{ii}&-\{E_{ii},\Delta\}\\\{E_{ii},\Delta\}  &2EE_{ii}
\end{smallmatrix}\right) \in\gg(E).\end{equation}

Similarly to the previous case, we can write
\begin{equation}
\sum_{i=1}^N(-1)^{i+1}\left[\left( \begin{smallmatrix}
0 & E-\Delta \\
-E+\Delta  & 0
\end{smallmatrix}\right),X_{ii}\right]=\left(\begin{smallmatrix}
2EK &0 \\0&
-2EK  
\end{smallmatrix}\right) \in\gg(E).\end{equation}
Taking another bracket, we find that, for all $i$,
\begin{equation}
\left[\left( \begin{smallmatrix}
K &0 \\0&
-K 
\end{smallmatrix}\right),Z_{ii}\right]=(-1)^{i+1}Y_{ii} \in\gg(E).
\end{equation}

This implies that $\Theta':=\left( \begin{smallmatrix}
-\Delta&0 \\0&
\Delta  
\end{smallmatrix}\right)\in\gg(E).$
We can conclude as in the previous case.

\subsubsection*{Case 5: deterministic potential on $\sigma_3$, random potential on $\sigma_0$.}
Recall that, according to Table~\ref{tab:cas}, this covers as well the case with the deterministic potential on $\sigma_1$ and the random potential on $\sigma_0$. We consider the family of matrices $\{X_P(E)\}_{P\in  \{0,1\}^N}$, as defined in~\eqref{def_Xn}, with $\alpha_3=\beta_0=1$ and all other coefficients equal to 0.
We have
\begin{equation}
 X_{P}(E)=\left( \begin{smallmatrix}
0 & E+\Delta - \sum_{P_i=1} E_{ii} \\
-E+\Delta +  \sum_{P_i=1} E_{ii} & 0
\end{smallmatrix} \right)
\end{equation}
Here, the Lie algebra $\gg(E)$  will not be $\spN$ but its subalgebra $\spONN$. We have the following result, similar to Lemma~\ref{lem:struc_spN}.
\begin{lemma}\label{lem:struc_spON}
We adopt the same notation as in Lemma~\ref{lem:struc_spN}.
If some subalgebra $\gg$ of $\spONN$ contains all the  $Y_{ii}$'s as well as the $X_{ij}$'s for $|i-j|= 1$ , then $\gg=\spONN$.
  \end{lemma}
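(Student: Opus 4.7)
The plan is to mimic the structure of Lemma~\ref{lem:struc_spN} while respecting the extra symmetry constraints imposed by $\spONN$. Using~\eqref{eq_expr_sponn}, a basis of $\spONN$ splits into two families: block-antidiagonal matrices $\begin{pmatrix} 0 & B \\ {}^sB & 0 \end{pmatrix}$ with ${}^tB=B$, and block-diagonal matrices $\begin{pmatrix} A & 0 \\ 0 & -{}^tA \end{pmatrix}$ with ${}^sA=A$. Translating the condition ${}^sA=A$, the natural basis of $\mathfrak{h}:=\{A\in\mathcal{M}_{\mathrm{N}}(\R):{}^sA=A\}$ consists of the matrices $M_{ij}=E_{ij}+E_{ji}$ for $i<j$ with $|i-j|$ odd, and $M_{ij}=E_{ij}-E_{ji}$ for $i<j$ with $|i-j|$ even. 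Setting $W_{ij}:=\begin{pmatrix} M_{ij} & 0 \\ 0 & -{}^tM_{ij}\end{pmatrix}\in\spONN$, a basis of $\spONN$ is then provided by the $Y_{ii}$'s, the $Z_{ij}$'s for $|i-j|$ odd, the $Y_{ij}$'s for $|i-j|$ even and nonzero, together with the $W_{ij}$'s for $i<j$.

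First I would carry out the direct block expansion giving $\tfrac{1}{2}[Y_{ii},Z_{i,i+1}] = W_{i,i+1}$ for $i=1,\ldots,N-1$, which places all nearest-neighbour $W$'s in $\gg$. The core step is then an induction on $k=|i-j|$ showing that these Lie-generate $\mathfrak{h}$: a short computation, based on the product rule $E_{ab}E_{cd}=\delta_{bc}E_{ad}$, yields the identity
\[
[M_{i,i+k-1},M_{i+k-1,i+k}]=M_{i,i+k},
\]
in which the prescribed parity structure of the right-hand side (symmetric for $|i-j|$ odd, antisymmetric for $|i-j|$ even) emerges automatically from the two vanishing terms among the four products. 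Since $A\mapsto\begin{pmatrix} A & 0 \\ 0 & -{}^tA\end{pmatrix}$ is a Lie algebra morphism from $\mathfrak{h}$ into $\spONN$, this yields every $W_{ij}$ in $\gg$.

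Finally, to produce the remaining basis elements, I would compute $[W_{ij},Y_{jj}]$. Using the elementary identities $(E_{ij}\pm E_{ji})E_{jj}=E_{ij}$ and $E_{jj}(E_{ij}\pm E_{ji})=\pm E_{ji}$, a short block computation shows that this commutator equals $2Z_{ij}$ when $|i-j|$ is odd and $2Y_{ij}$ when $|i-j|$ is even and nonzero. Combined with the generators $Y_{ii}$, this exhausts the chosen basis of $\spONN$ and proves $\gg=\spONN$.

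The main obstacle lies in correctly identifying the basis of $\spONN$ adapted to the generators and tracking the parity-dependent signs throughout: the constraint ${}^sA=A$ forces the $A$-block of any element to alternate between symmetric and antisymmetric combinations of $E_{ij}$ and $E_{ji}$ according to the parity of $|i-j|$, and the same alternation occurs in the $B$-block (through $Z_{ij}$ versus $Y_{ij}$). Once this is set up correctly, the inductive step for $\mathfrak{h}$ and the final commutators with $Y_{jj}$ are routine block-matrix computations.
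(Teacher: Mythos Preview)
Your proof is correct and follows essentially the same strategy as the paper's: identify a basis of $\spONN$ adapted to the parity of $|i-j|$, generate the block-diagonal part from the given generators via commutators, then recover the remaining block-antidiagonal elements. The paper uses a pair of families $V_{ij}$ (block-diagonal) and $W_{ij}$ (block-antidiagonal, in the paper's notation) and alternates between them via general structure relations $[W,W]\sim V$ and $[V,W]\sim W$, climbing up in $|i-j|$ one step at a time. Your organization is slightly different and arguably cleaner: you isolate the block-diagonal part as the image of the Lie-algebra morphism $A\mapsto\begin{pmatrix}A&0\\0&-{}^tA\end{pmatrix}$ from $\mathfrak{h}=\{A:{}^sA=A\}$, run the entire induction $[M_{i,i+k-1},M_{i+k-1,i+k}]=M_{i,i+k}$ purely inside $\mathfrak{h}$, and only at the end produce all the off-diagonal $Z_{ij}$/$Y_{ij}$ in one stroke via $[W_{ij},Y_{jj}]$. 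Both routes amount to the same computations; yours just separates the two halves of the argument more sharply.
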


\begin{proof}
We first remark that all the $Y_{ii}$'s as well as the $X_{ij}$'s for $|i-j|= 1$ are in $\spONN$. To prove the other inclusion, let $\gg$ be a subalgebra of $\spONN$ which contains all these matrices.
Recall that a basis of $\{A\in \mathcal{M}_{N}(\R)\ |\ ^sA=A \}$ is given by $\{ E_{ij} + (-1)^{i-j+1}E_{ji}\}_{1\leq i\leq j\leq N}$ and one of $\{B\in \mathcal{M}_{N}(\R)\ |\ ^tB=B \}$ is given by $\{ E_{ij} + E_{ji}\}_{1\leq i\leq j\leq N}$. We introduce, for every $i,j\in \{1,\ldots N\}$, the matrices
$$V_{ij}= \left(\begin{smallmatrix}
     E_{ij}+(-1)^{i-j+1} E_{ji}&0\\0& (-1)^{i-j} E_{ij}-E_{ji}
    \end{smallmatrix}\right)\ \mbox{ and } \ W_{ij}= \left(\begin{smallmatrix}
    0 & E_{ij}+E_{ji}\\ (-1)^{i-j+1} (E_{ij}+E_{ji}) & 0    \end{smallmatrix}\right).$$
By \eqref{eq_expr_sponn}, $\{V_{ij},W_{ij} \}_{1\leq i\leq j \leq N}$ is a basis of $\spONN$. Hence, to prove Lemma \ref{lem:struc_spON}, it suffices to prove that $\gg$ contains all the $V_{ij}$'s and $W_{ij}$'s for $1\leq i\leq j \leq N$. A direct computation shows that for every $i,j,k,r\in \{1,\ldots,N\}$, 
\begin{align}
[V_{ij},W_{kr}] & = \delta_{jk} \left( \begin{smallmatrix} \mathrm{I}_{\mathrm{N}} & 0\\ 0 & (-1)^{k-j} \mathrm{I}_{\mathrm{N}} \end{smallmatrix} \right) W_{ir} + \delta_{ik} (-1)^{i-j+1}\left( \begin{smallmatrix} \mathrm{I}_{\mathrm{N}} & 0\\ 0 & (-1)^{i+k} \mathrm{I}_{\mathrm{N}} \end{smallmatrix} \right) W_{jr} \nonumber \\
 & \qquad \qquad + \delta_{ir} (-1)^{i-j+1}\left( \begin{smallmatrix} \mathrm{I}_{\mathrm{N}} & 0\\ 0  & (-1)^{i-r} \mathrm{I}_{\mathrm{N}} \end{smallmatrix} \right) W_{jk} + \delta_{jr} \left( \begin{smallmatrix} \mathrm{I}_{\mathrm{N}} & 0\\ 0 & (-1)^{-j-r} \mathrm{I}_{\mathrm{N}} \end{smallmatrix} \right) W_{ik} \nonumber \\
& = (-1)^{i-j+1} (\delta_{ik} W_{jr} + \delta_{ir} W_{jk}) + \delta_{jk} W_{ir} + \delta_{jr} W_{ik} \label{eq_brackets_lemma_spon}
\end{align}
and if $(-)_{ijkr} = (-1)^{i-j+k-r+1}$,
\begin{align}
 [W_{ij},W_{kr}] &  =(-1)^{k-r+1}\left[ \delta_{jk}  \left( \begin{smallmatrix} E_{ir}+(-)_{ijkr} E_{ri} & 0\\ 0 & -((-)_{ijkr} E_{ir} + E_{ri})  \end{smallmatrix} \right) + \delta_{jr} \left( \begin{smallmatrix} E_{ik}+(-)_{ijkr} E_{ki} & 0\\ 0 & -((-)_{ijkr} E_{ik} + E_{ki})  \end{smallmatrix} \right) \right. \nonumber \\
 &\qquad \left.  + \delta_{ik} \left( \begin{smallmatrix} E_{jr}+(-)_{ijkr} E_{rj} & 0\\ 0 & -((-)_{ijkr} E_{jr} + E_{rj})  \end{smallmatrix} \right) + \delta_{ir} \left( \begin{smallmatrix} E_{jk}+(-)_{ijkr} E_{kj} & 0\\ 0 & -((-)_{ijkr} E_{jk} + E_{kj})  \end{smallmatrix} \right) \right] \nonumber \\
& = (-1)^{k-r+1} (\delta_{jk} V_{ir} + \delta_{jr} V_{ik} + \delta_{ik} V_{jr} + \delta_{ir} V_{jk} )\label{eq_brackets_lemma_spon2}
\end{align}

where $\delta_{ij}$ is the Kronecker's symbol : 
$$\delta_{ij}=\left\lbrace \begin{array}{ccl}
1 & \mathrm{if} & i=j \\
0 & \mathrm{if} & i\neq j.
\end{array} \right.$$
The hypothesis made on $\gg$ implies that for every $i\in \{1,\ldots,N\}$, $W_{ii}\in \gg$ and for $|i-j|=1$, $W_{ij}\in \gg$. Let $i\in \{1,\ldots,N\}$. Then, using \eqref{eq_brackets_lemma_spon2}, $[W_{ii},W_{i,i+1}]=2V_{i,i+1}$ and $V_{i,i+1}\in \gg$. This proves that $\gg$ contains all the $V_{ij}$ and $W_{ij}$ for $|i-j|=1$. Using \eqref{eq_brackets_lemma_spon}, $[V_{i,i+1},W_{i+1,i+2}] = W_{i,i+2}$ and $W_{i,i+2} \in \gg$. Using \eqref{eq_brackets_lemma_spon2} this implies that $[W_{ii},W_{i,i+2}]=-2V_{i,i+2}$ and $V_{i,i+2} \in \gg$. Hence, $\gg$ contains all the $V_{ij}$ and $W_{ij}$ for $|i-j|=2$. By induction, using \eqref{eq_brackets_lemma_spon} and \eqref{eq_brackets_lemma_spon2}, we prove that $\gg$ contains  all the $V_{ij}$ and $W_{ij}$ for $i\leq j$, hence $\spONN \subset \gg$ and $\gg=\spONN$.
\end{proof}

To compute $\gg(E)$, we begin as usual by taking $X_P(E)-X_{P'}(E)$ to find that all the $Y_{ii}$'s are in $\gg(E)$.
Then, \[\begin{pmatrix}0&\Delta\\\Delta&0\end{pmatrix}\in\gg(E).\]
This implies that
\[\left[\begin{pmatrix}0&-\Delta\\\Delta&0\end{pmatrix},\begin{pmatrix}
0&I_N\\-I_N&0\end{pmatrix}\right]=-2\begin{pmatrix}\Delta&0\\0&-\Delta\end{pmatrix}\in\gg(E).\]
Taking commutators between this last matrix and the $Y_{ii}$'s, we find that all the $Z_{ij}$'s with $|i-j|=1$ are in $\gg(E)$. \qed

\subsection{$p$-contractivity and $L_p$-strong irreducibility}\label{sec:cas5}
As we already explained,  $\SpN$ is $p$-contracting and $J$-$L_p$-strongly irreducible for all $p\in\{1,\dots,N\}$.
We prove here the corresponding result for $\SpONN$.
We prove a similar result for $\SpONN$.
\begin{prop}\label{prop:G5CSI}
    The group $\SpONN$ is $2p$-contracting and $(J,S)$-$L_{2p}$-strongly irreducible for all $p\in \{1,\ldots, N\}$.
\end{prop}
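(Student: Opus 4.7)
The plan is to prove the two properties separately, relying on the structural results on $\SpONN$ established in Appendix~\ref{app:G5}, in particular the singular value decomposition of Proposition~\ref{prop:decompG}~(5) and the connectedness of $\SpONN$.

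For $2p$-contractivity, I exploit that every $M \in \SpONN$ admits a decomposition $M = KRU$ with $K, U$ orthogonal elements of $\SpONN$ and $R$ of a specific form whose $2N$ singular values, listed in decreasing order, appear as $e^{t_1}, e^{t_1}, e^{t_2}, e^{t_2}, \ldots, e^{t_d}, e^{t_d}, e^{-t_d}, e^{-t_d}, \ldots, e^{-t_1}, e^{-t_1}$ with $t_1 \geq \cdots \geq t_d \geq 0$, the doubling coming from Proposition~\ref{prop:decompG}~(4). I would then construct a sequence $(R_n) \subset \SpONN$ of such elements with parameters $t_i^{(n)}$ chosen so that $t_p^{(n)} - t_{p+1}^{(n)} \to \infty$ when $p < d$, resp.\ $t_d^{(n)} \to \infty$ when $p = d$. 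A direct computation of the singular values of $\Lambda^{2p} R_n$ (which are products of $2p$ singular values of $R_n$) yields
\[s_1(\Lambda^{2p} R_n) = e^{2(t_1^{(n)} + \cdots + t_p^{(n)})}\quad\text{and}\quad\frac{s_2(\Lambda^{2p} R_n)}{s_1(\Lambda^{2p} R_n)} = e^{-(t_p^{(n)} - t_{p+1}^{(n)})}\ \text{(resp.\ }e^{-2 t_d^{(n)}}\text{)},\]
so up to extracting a subsequence $\|\Lambda^{2p} R_n\|^{-1} \Lambda^{2p} R_n$ converges to a rank-one matrix, which is $2p$-contractivity.

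For $(J,S)$-$L_{2p}$-strong irreducibility, I would proceed in two steps. Since $\SpONN$ is a connected Lie group (Appendix~\ref{app:G5}), if $W_1 \cup \cdots \cup W_m \subset (J,S)$-$L_{2p}$ is a finite union of proper subspaces preserved by $\Lambda^{2p} \SpONN$, then the continuous map $M \mapsto \Lambda^{2p} M(W_i)$ takes values in the finite set $\{W_1, \ldots, W_m\}$ and is therefore constant on the connected group, so each $W_i$ must be individually invariant. Hence it suffices to show that the representation of $\SpONN$ on $(J,S)$-$L_{2p}$ via $\Lambda^{2p}$ is irreducible. Let $W \subset (J,S)$-$L_{2p}$ be a nonzero invariant subspace; then $W$ is stable under the derived action of $\spONN$. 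Using the explicit basis $\{V_{ij}, W_{ij}\}_{1 \leq i \leq j \leq N}$ of $\spONN$ introduced in the proof of Lemma~\ref{lem:struc_spON}, together with the bracket relations~\eqref{eq_brackets_lemma_spon}--\eqref{eq_brackets_lemma_spon2}, I would compute the matrix of each generator on an adapted basis of $(J,S)$-$L_{2p}$ and show that iterated application of these operators to any nonzero element of $W$ reaches the generating vector $f_1$ defined in~\eqref{def:f1}, forcing $W = (J,S)$-$L_{2p}$.

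The main obstacle is this last combinatorial step. A suitable basis of $(J,S)$-$L_{2p}$ adapted to the block structure of $\spONN$ must be identified, and one must verify that the transitions provided by the $V_{ij}$ and $W_{ij}$ are rich enough to connect any nonzero vector to $f_1$. I expect this to require an induction on a natural grading of the basis, with a separate verification in the boundary case $p = d$ where $2p = N$ and the $(J,S)$-Lagrangian constraint on $f_1$ is maximal.
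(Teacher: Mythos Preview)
Your argument for $2p$-contractivity is essentially the paper's: both build the contracting sequence out of the diagonal-type matrices $R$ from Proposition~\ref{prop:decompG}, and both boil down to the singular value gap $s_{2p+1}(R_n)/s_{2p}(R_n)\to 0$. No issue there.

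For $(J,S)$-$L_{2p}$-strong irreducibility, your reduction via connectedness to a single proper invariant subspace $V$ matches the paper. But from that point you take a much harder road than necessary. You propose to pass to the derived $\spONN$-action, pick an adapted basis of $(J,S)$-$L_{2p}$, and check by hand that the generators $V_{ij},W_{ij}$ connect any nonzero vector to $f_1$. You correctly identify this as the main obstacle, and indeed you do not carry it out; the combinatorics of $(J,S)$-$L_{2p}$ and its basis are nontrivial, especially in the boundary case $2p=N$.

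The paper bypasses all of this with a short spectral argument that exploits two facts you already have at hand but do not use: (i) $L_{2p}$ is \emph{defined} as the span of the $\SpONN$-orbit of $f_1$, and (ii) the matrix $R$ with $t_1>\cdots>t_d$ is symmetric and has $f_1$ as its unique (up to scalar) top eigenvector for $\Lambda^{2p}R$, with eigenvalue $e^{2(t_1+\cdots+t_p)}$. Given a proper invariant subspace $V\subset L_{2p}$, write $f_1=v+v^\perp$ with $v\in V$, $v^\perp\in V^\perp$. Since $R$ is symmetric, $V^\perp$ is $\Lambda^{2p}R$-invariant too, so applying $\Lambda^{2p}R$ to both sides and comparing gives either $v=0$ or $v$ is a top eigenvector, hence $v\in\mathrm{Span}(f_1)$. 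Thus $f_1\in V$ or $f_1\in V^\perp$. In the first case $V\supset L_{2p}$ by definition of $L_{2p}$; in the second, using that $^tM\in\SpONN$ whenever $M\in\SpONN$, one gets $\langle \Lambda^{2p}Mf_1,v\rangle=\langle f_1,\Lambda^{2p}({}^tM)v\rangle=0$ for all $v\in V$, forcing $V\perp L_{2p}$ and hence $V=\{0\}$. Either way $V$ is not proper.

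So your plan is not wrong, but the Lie-algebra computation you flag as the obstacle is avoidable: the cyclic-vector structure of $L_{2p}$ together with the simple top eigenvalue of $\Lambda^{2p}R$ already forces irreducibility.
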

\begin{proof}
    According to \cite[Proposition~A.IV.2.1]{BL}, a subset of $\GL_{2N}(\R)$ is $2p$-contracting if there exists in it a sequence $(M_n)$ such that, if we denote by $s_1(M_n)\geq\dots\geq s_{2N}(M_n)$ the singular values of $M_n$, then $\lim_{n\to\infty}s_{2p+1}(M_n)/s_{2p}(M_n)=0$. But, if we take a sequence $t_1>\dots>t_d>0$ and construct the matrix $R$ as in Proposition~\ref{prop:decompG}~(iv), then for all $n\in \N$ the singular values of $R^n$ are, for $p\leq d$, $s_{2p-1}(M_n)=s_{2p}(M_n)=e^{nt_p}$ so the sequence satisfies the criterion: $\SpONN$ is $2p$-contracting.

Let us now prove that $\SpONN$ is $L_{2p}$-strongly irreducible. Since $\SpONN$ is connected (see Proposition~\ref{prop:decompG}), according to \cite[Exercise~A.IV.2.9]{BL}, we only have to prove that there exists no proper subspace $V$ of $L_{2p}$ such that $(\Lambda^{2p}M)(V)\subset V$ for all $M$ in $\SpONN$. Assume that such a $V$ exists and let us consider a matrix $R$ as in Proposition~\ref{prop:decompG}~(iv), with $t_1>\dots>t_d$. We have in particular that $(\Lambda^{2p}R)(V)\subset V$. For $f_1$ defined in~\eqref{def:f1}, let us write the unique decomposition $f_1=v+v^{\bot}$ with $v\in V$ and $v^{\bot}\in V^{\bot}$. Then, 
\begin{equation}\label{eq:propG5SCI}
(\Lambda^{2p}R)f_1 =e^{2t_1}\cdots e^{2t_{p}} f_1 =(\Lambda^{2p}R)v + (\Lambda^{2p}R)v^{\bot}.    
\end{equation}
Since $V$ is stable by $\Lambda^{2p}R$ and $V^{\bot}$ is stable by $(\Lambda^{2p}R)^*=\Lambda^{2p}R$, $(\Lambda^{2p}R)v\in V$ and $(\Lambda^{2p}R)v^{\bot}\in V^{\bot}$. By orthogonality, $(\Lambda^{2p}R)v=e^{2t_1}\cdots e^{2t_{p}} v$. If $v=0$ then $f_1 \in V^{\bot}$. If $v\neq 0$, $v$ is an eigenvector of $\Lambda^{2p}R$ associated with the eigenvalue $e^{2t_1}\cdots e^{2t_{p}}$. But,
by construction of $R$, its eigenspace associated with the eigenvalue $e^{2t_1}\cdots e^{2t_{p}}$ is $\mathrm{Span}(f_1)$. Hence, if  $v$ is an eigenvector of $\Lambda^{2p}R^n$ associated to the eigenvalue $e^{2nt_1}\cdots e^{2nt_{p}}$, then $v\in \mathrm{Span}(f_1)$ and $f_1\in V$. Thus, we proved that either $f_1\in V$ or $f_1\in V^{\bot}$.

If $f_1\in V$, then by definition $V=L_{2p}$. If $f$ is orthogonal to $V$, then, for all $v\in V$ and $M\in \SpONN$, $\langle \Lambda^{2p}Mf_1,v\rangle=\langle f_1,\Lambda^{2p}(^tM)v\rangle=0$ since $^tM$ is in $\SpONN$ as well. Consequently, $V=L_{2p}^\perp$, which contradicts the hypothesis that $V$ is proper. 
\end{proof}




\subsection{Proof of Theorem \ref{thm:localization:exemple}}\label{subsec:proof-loc-exemple}

The proof of Theorem \ref{thm:localization:exemple} comes from the application of one of the localization criteria or the delocalization criterion (\emph{i.e.} Theorem \ref{thm:loc_criterion1}, Theorem~\ref{thm:loc_criterion2} or Theorem \ref{thm:deloc_crit}) to $\{ \DO \}_{\omega\in \Omega}$ for the particular choice of $V=\Delta$, the tridiagonal matrix with $0$ on the diagonal and $1$ on the upper and lower diagonal.

\begin{proof}[Proof of Theorem \ref{thm:localization:exemple}]
Once we obtained Theorem \ref{thm:localization:exemple} for $V=\Delta$ we apply the genericity argument developed in \cite{Bou13} in the following way. We fix $G \in \{ \SpN, \SpONN\}$  and $\mathfrak{g}\in \{ \spN, \spONN \}$ the corresponding Lie algebra. Both $\SpN$ and $\SpONN$ are algebraic groups. The use of Theorem \ref{thm_breuillard} to obtain the separability of Lyapunov exponents leads us to prove an algebraic property on a Lie algebra generated by a finite number of matrices. Hence, the $n$-tuples of elements in $G$ that do not generate a dense subgroup are contained in a closed analytic subvariety which implies that we can perturb the interaction potential $\Delta$ into a potential $V$ while keeping the property that the Furstenberg group is equal to $G$ for any energy in an interval $I(N,\ell,V)$ for $\ell \in (0,\ell_C(N,V))$. The transfer matrices are written in exponential form,
$$T_{\omO}(E)=\exp(\ell X_{\omO}(E,V)),$$ 
where the $X_{\omO}(E,V)$ are define in \eqref{def_Xn} and where we explicit the dependency in $V$ of these matrices since $E$ and $V$ are both important variables in the genericity argument. We denote the family $\{X_{\omO}(E,V)\}_{\omO\in \{0,1\}^N}$ by ${X_1(E,V),\ldots, X_{2^N}(E,V)}$. For $k\in \N^*$, let
\begin{equation}\label{eq_def_Vk_AB_N_generic}
\mathcal{V}_k=\left\{ (X_1,\ldots,X_k)\in \spN^k \ |\ (X_1,\ldots,X_k)\ \mathrm{does\ not\ generate}\ \mathfrak{g} \right\}.
\end{equation}
Since generating the algebra $\mathfrak{g}$ is an algebraic condition of the type non-vanishing of a finite family of determinants (finite because, for all $m\in \N^*$, the ring $\R[T_1,\ldots,T_m]$ is Noetherian), there exist $Q_1,\ldots,Q_{r_k}\in \R[\mathfrak{g}^k]$ such that:
\begin{equation}\label{eq_prop_Vk__AB_N_generic}
\mathcal{V}_k=\left\{ (X_1,\ldots,X_k)\in \mathfrak{g}^k \ |\ Q_1(X_1,\ldots,X_k)=0,\ldots, Q_{r_k}(X_1,\ldots,X_k)=0 \right\}.
\end{equation} 
Here, we identify $\R[\mathfrak{g}^k]\simeq \R[T_1,\ldots,T_{k\times \dim \mathfrak{g}}]$ where $\dim \mathfrak{g}=2N^2 +N$ for $\mathfrak{g}=\spN$ and $\dim \mathfrak{g}=N^2$ for $\mathfrak{g}=\spONN$. Let $E\in \R$ and,
\begin{equation}\label{eq_def_VE_AB_N_generic}
\mathcal{V}(E)=\left\{ V\in \SN \ |\ { X_1(E,V),\ldots,X_{2^N}(E,V)}\ \mathrm{does\ not\ generate}\ \mathfrak{g} \right\}.
\end{equation}

\noindent We show that $\mathrm{Leb}_{\frac{N(N+1)}{2}}(\mathcal{V}(E))=0$. Indeed, let 
\begin{equation}
f_E\ :\ \begin{array}{ccl}
\SN & \to & \mathfrak{g}^{2^N} \\
V & \mapsto & (X_1(E,V),\ldots,X_{2^N}(E,V))
\end{array}.
\end{equation}

\noindent Then, $f_E$ is polynomial in the $\frac{N(N+1)}{2}$ coefficients defining $V$, and we have:
\begin{equation}\label{eq_prop_VE_AB_N_generic}
V\in \mathcal{V}(E) \ \Leftrightarrow \ (Q_1\circ f_E)(V)=0,\ldots,(Q_{r_{2^N}}\circ f_E)(V)=0,
\end{equation}
where each $Q_i \circ f_E$ is polynomial in the $\frac{N(N+1)}{2}$ coefficients defining $V$. But, we have shown in Section \ref{sec:trans_mat} that in Cases 2, 3, 4 and 5 for $N$ even, $\Delta \notin \mathcal{V}(E)$. Therefore, there exists $i_0\in \{ 1,\ldots, r_{2^N}\}$ such that $(Q_{i_0}\circ f_E)(\Delta)\neq 0$, and since the function $Q_{i_0}\circ f_E$ is polynomial and not identically zero,
\begin{equation}\label{eq_prop_Leb_VE_AB_N_generic}
\mathrm{Leb}_{\frac{N(N+1)}{2}}\left( \{ V\in \SN \ |\ (Q_i \circ f_E)(V)=0) \} \right)=0,
\end{equation}
and, by inclusion,
\begin{equation}\label{eq_prop_Leb_VE_2}
\forall E\in \R,\  \mathrm{Leb}_{\frac{N(N+1)}{2}}(\mathcal{V}(E))=0.
\end{equation}
Finally, let $\mathcal{V}(\mathfrak{g})=\cap_{E\in \R} \mathcal{V}(E)$. Then $\mathcal{V}(\mathfrak{g})$ has Lebesgue measure zero, and if $V\notin \mathcal{V}(\mathfrak{g})$, there exists $E_0\in \R$ such that the family $\{ X_1(E_0,V),\ldots,X_{2^N}(E_0,V)\}$ generates $\mathfrak{g}$. Therefore, there exists $i_0\in \{ 1,\ldots, r_{2^N}\}$ such that $(Q_{i_0} \circ f)(E_0,V)\neq 0$, where:
\begin{equation}\label{eq_def_f_AB_N_generic}
f\ :\ \begin{array}{ccl}
\R \times \SN & \to & \mathfrak{g}^{2^N} \\
(E,V) & \mapsto & (X_1(E,V),\ldots, X_{2^N}(E,V))
\end{array}.
\end{equation}
Now, for $V$ fixed, $E\mapsto (Q_{i_0}\circ f_E)(E,V)$ is polynomial and not identically zero, so it has only a finite set $\mathcal{S}_{\mathrm{V}}$ of zeros, and for all $E\in \R\setminus \mathcal{S}_{\mathrm{V}}$, $\{ X_1(E,V),\ldots,X_{2^N}(E,V)\}$ does not generate $\mathfrak{g}$.

Thus we have obtained that $\mathcal{V}(\mathfrak{g})$ has Lebesgue measure zero, and if $V\notin \mathcal{V}(\mathfrak{g})$, there exists $\mathcal{S}_{\mathrm{V}} \subset \R$ finite such that for all $E\in \R\setminus \mathcal{S}_{\mathrm{V}}$, $\{ X_1(E,V),\ldots,X_{2^N}(E,V)\}$ generates $\mathfrak{g}$.

From there, we finish the proof of Theorem \ref{thm:localization:exemple}. In Cases 2, 3, 4, we fix $V\in \SN \setminus \mathcal{V}(\spN)$ and apply Theorem \ref{thm_breuillard}, using the real number $\ell_C(N,V)$ and the interval $I(N,V,\ell)$ given by Lemma \ref{lem:normX}. Then we get that the hypotheses of Theorem \ref{thm:loc_criterion1} are satisfied for this $V$ on $I(N,V,\ell)\setminus \mathcal{S}_{\mathrm{V}}$ since $G(E)=\SpN$ on this interval. In Case 5 we fix $V\in \SN \setminus \mathcal{V}(\spONN)$ and we do the same as in Cases 2, 3, 4  to get that the hypotheses of Theorem \ref{thm:loc_criterion2} are satisfied since $G(E)=\SpONN$ on the constructed interval.
\end{proof}

\begin{rem}
It is the algebraic nature of the objects involved that allows us to prove a generic result in $V$ and the finiteness of the set of critical energies. We can summarize the ideas used simply by recalling that the set of zeros of a non-zero single-variable polynomial is finite and that more generally, the set of zeros of a non-zero polynomial in several variables is of Lebesgue measure zero.
\end{rem}

\appendix
\section{The group $\SpONN$}\label{app:G5}
In order to study the Lyapunov exponents associated with a sequence of i.i.d. matrices of $\SpONN$, as defined by~\eqref{def:SpONN}, we state here the most important properties of this group.

\begin{prop}\label{prop:decompG}
Let $M\in \SpONN$.
 \begin{enumerate}
  \item $^tM\in \SpONN$.
  \item If $Mv=\lambda v$ for $v\in\R^{2N}$ and $\lambda\in\R$, then $^tMJv=\lambda^{-1}Jv$, $^tMSv=\lambda^{-1}Sv$ and $MJSv=\lambda JSv$.
  \item For all $v\in\R^{2N}$, the vectors $v$ and $JSv$ are orthogonal.
  \item For $t\in\R$, we denote $B_t:=\begin{pmatrix} \cosh t&\sinh t\\ \sinh t&\cosh t\end{pmatrix}$. If $N$ is even, we denote $N=:2d$ and, if $N$ is odd, $N=:2d+1$. There exists $U\in \SO_{2N}(\R)\cap \SpONN$ and real numbers $t_1\geq\dots\geq t_d\geq0$ such that, if we denote  in the even case\[R:=\begin{pmatrix} \diag(B_{t_1},\dots,B_{t_d})&0\\0&\diag(B_{-t_1},\dots,B_{-t_d})\end{pmatrix},\]
  in the odd case\[R:=\begin{pmatrix} \diag(B_{t_1},\dots,B_{t_d},1)&0\\0&\diag(B_{-t_1},\dots,B_{-t_d},1)\end{pmatrix},\]
  then
  \begin{equation}
U^tMM^tU=R.
  \end{equation}
 \item There exists $K$, $U\in \SO_{2N}(\R)\cap \SpONN$ and a matrix $R$ similar to the one introduced above such that
 \begin{equation}
  M=KRU.
 \end{equation}

\item The group $\SpONN$ is pathwise connected.

\item The Lie algebra of the Lie group $\SpONN$ is denoted by $\spONN$ and is given by  
$$\spONN:= \left\{\begin{pmatrix}
     A&B\\^sB&-^tA
    \end{pmatrix}
\text{ with } ^sA=A, ^tB=B\right\}$$
where the operator $^s$ is defined on $\mathcal{M}_{N}(\R)$ by
$(^sM)_{ij}=(-1)^{i-j+1}M_{ji}$. 

\item The group $\SpONN$ is semisimple.
 \end{enumerate}

\end{prop}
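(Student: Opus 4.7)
The plan for the proof of Proposition~\ref{prop:decompG} is as follows.

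Parts (1)--(3) are algebraic consequences of the defining equations $^tMJM=J$ and $^tMSM=S$. A direct verification shows that $J^{-1}=-J$, $S^{-1}=S$, and $(JS)^2=-I_{2N}$, ${}^t(JS)=-JS$, and that $JS$ commutes with $J$ and $S$. The key observation is that any $M\in\SpONN$ satisfies simultaneously $M^{-1}=J^{-1}\,{}^tMJ$ and $M^{-1}=S^{-1}\,{}^tMS$; comparing these two expressions and manipulating with transposes shows that $JS$ commutes with every $M\in \SpONN$. From this, (1) follows by rewriting the defining relations after taking inverses; (2) follows by applying respectively $J$, $S$, and the commuting operator $JS$ to the eigenvalue equation $Mv=\lambda v$; and (3) follows from the skew-symmetry of $JS$, since $\langle v,JSv\rangle={}^tv(JS)v=0$ for any skew-symmetric bilinear form.

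For (4), set $H:=M\,{}^tM$. By (1) and the group law, $H\in \SpONN$, and $H$ is symmetric positive-definite. By (2) applied to $H$, its eigenvalues come in pairs $(\lambda,\lambda^{-1})$, and by (3) each eigenspace splits into orthogonal two-dimensional $JS$-invariant planes $\mathrm{Span}(v,JSv)$. Building an orthonormal basis of $\R^{2N}$ that is simultaneously adapted to this $JS$-pairing and to the decomposition into eigenspaces produces an orthogonal matrix $U\in\SO_{2N}(\R)\cap \SpONN$ such that $U\,H\,{}^tU=R^2$, with $R$ in the block form announced (each pair of reciprocal eigenvalues $e^{\pm 2t_i}$ being encoded as a $B_{t_i}$-block after a suitable reshuffling). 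Part (5) follows from (4) by setting $K:=M\,{}^tU\,R^{-1}$ and checking by direct algebra that $K\in\SO_{2N}(\R)\cap \SpONN$, using that $M\,{}^tM=\,{}^tUR^{2}U$.

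For (6), the $KRU$ decomposition reduces pathwise connectedness of $\SpONN$ to that of $\SO_{2N}(\R)\cap \SpONN$: the one-parameter family $R_\tau$ obtained by replacing every $t_i$ with $\tau t_i$ for $\tau\in[0,1]$ stays in $\SpONN$ and continuously deforms $R$ to $I_{2N}$; it therefore suffices to connect $KU\in \SO_{2N}(\R)\cap\SpONN$ to the identity inside that subgroup, which is a compact Lie group that one identifies (via the complex structure $JS$) with a product of connected unitary groups. For (7), one differentiates the defining relations along a smooth curve $M(\tau)$ with $M(0)=I_{2N}$ and $\dot M(0)=X$: the symplectic relation gives $^tXJ+JX=0$, forcing the block decomposition $X=\left(\begin{smallmatrix}A&B\\C&-{}^tA\end{smallmatrix}\right)$ with $B,C$ symmetric; the condition $^tXS+SX=0$ then imposes $^sA=A$ on the diagonal blocks and $C={}^sB$ on the off-diagonal ones, after the direct block computation of $^tXS+SX$.

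Finally, for (8), one exploits the complex structure $\mathcal{J}:=JS$ to identify $\spONN$ with the Lie algebra of the unitary group of the Hermitian form $h(u,v):={}^tuSv+i\,{}^tuJv$ on $(\R^{2N},\mathcal{J})\simeq \C^{N}$, whose signature is determined by the signature of $S$, and deduces the stated semisimplicity property from the classical theory of this real form. The main obstacle in the whole proof will be the bookkeeping in parts~(4)--(5): one must carefully align the spectral decomposition of $M\,{}^tM$ with the $JS$-pairing in order to produce a diagonalizer that lies simultaneously in $\SO_{2N}(\R)$ and in $\SpONN$, and to arrange the pairs of singular values into the precise $B_{t_i}$-block pattern (with the extra ``$1$'' block appearing when $N$ is odd).
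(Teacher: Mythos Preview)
For parts (1)--(7) your plan is correct and runs close to the paper's, organized more conceptually around the element $\mathcal J:=JS$, which you rightly observe commutes with every $M\in\SpONN$. In (4)--(5) the paper works with four-dimensional blocks $(v,Jv,Sv,JSv)$ rather than your two-dimensional $JS$-invariant planes; your sketch is thinnest here, because landing $U$ in $\SpONN$ (and not merely in the commutant of $JS$) requires tracking how $J$ and $S$ \emph{separately} exchange the $\lambda$- and $\lambda^{-1}$-eigenspaces of ${}^tMM$, which is precisely the four-vector bookkeeping the paper carries out and what you correctly flag as the ``main obstacle''. For (6) the paper passes to an auxiliary pair $(J',S')$ by permuting coordinates and reads off a block-diagonal structure with $\mathrm{Sp}\cap\SO$ blocks; your identification of the maximal compact with a product of unitary groups via the complex structure $JS$ is equivalent and somewhat cleaner. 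Part (7) is handled the same way in both.

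There is a genuine gap in your argument for (8). Your identification of $\spONN$ with the Lie algebra of the unitary group of the Hermitian form $h$ is correct, but that Lie algebra is $\mathfrak u(p,q)$ with $p+q=N$, which is \emph{reductive, not semisimple}: its center is the line $\R\cdot iI$, and under your identification this center is $\R\cdot JS$ --- the very element you have been exploiting throughout as commuting with all of $\SpONN$. So ``the classical theory of this real form'' yields reductivity, not semisimplicity, and your own construction already exhibits the obstruction. The paper takes a different route, arguing via nondegeneracy of the form $(X,Y)\mapsto\tr(XY)$ on $\spONN$; but observe that $\tr((JS)^2)=-2N\neq 0$ while $JS$ is central, so that trace form is not the Killing form of $\spONN$ either. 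Whichever approach you take, the central element $JS$ must be confronted explicitly.
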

\begin{proof}
 \begin{enumerate}
  \item If we take the inverse of the relations defining $\SpONN$, we find, since $S^{-1}=S$ and $J^{-1}=-J$,
  \begin{equation*}
   M^{-1}S (^tM)^{-1}=S\text{ and } M^{-1}J (^tM)^{-1}=J,
  \end{equation*}
so $(^tM)^{-1}\in \SpONN$, which implies that it is the case for $^tM$ as well, since $\SpONN$ is a group.

\item If $Mv=\lambda v$, then $\lambda^tMJv= ^tMJMv=Jv$. The same argument is true if we replace $J$ by $S$. Combining these two results, we find the one with $JS$.

\item Since $JS=\begin{pmatrix} 0&-K\\K&0\end{pmatrix}$, for $v=\begin {pmatrix}v_1\\v_2\end{pmatrix}$, we have \[v\cdot JSv=-v_1\cdot Kv_2+v_2\cdot Kv_1=0\] since the matrix $K$ is symmetric.

\item 
 We first prove that $^tMM$ has an orthonormal basis of eigenvectors which can be written $$(v_1,\dots,v_d,Jv_1,\dots, Jv_d,Sv_1,\dots,Sv_d,JSv_1,\dots,JSv_d)$$ when $N=2d$ and  $$(v_1,\dots,v_d,v_{d+1},Jv_1,\dots, Jv_d,Jv_{d+1}Sv_1,\dots,Sv_d,JSv_1,\dots,JSv_d)$$ when $N=2d+1$. To this purpose, we split the space $\R^{2N}$  into 2 subspaces: $V_1$ is the eigenspace of $^tMM$ associated with the eigenvalue 1 and $V_0=V_1^\perp$.
 
 Let us first consider $V_0$: we prove by induction that its dimension is a multiple of 4 and that it has a basis of the desired form. If the dimension is 0, there is nothing to prove. Otherwise, we take an eigenvector  $v_1$ of  $^tMM$, associated with its largest eigenvalue $e^{t_1}>1$ which has norm 1. We see by Point (2) that $JSv$ is an eigenvector associated with  the same eigenvalue and, by Point (3), it is orthogonal to $v_1$. It has norm 1 as well. Since $^tMM$ is symmetric, $Jv_1$ and $Sv_1$ are by Point (2) eigenvectors of it as well,   associated with the eigenvalue  $e^{-t_1}\neq e^{t_1}$. They both have norm 1, are orthogonal to each other (since $Jv_1=JSSv_1$), and orthogonal to $v_1$ and $JSv_1$ since the eigenspaces associated with different eigenvalues are orthogonal. Then, the space spanned by the 4 vectors $v_1$, $Jv_1$, $Sv_1$ and $JSv_1$ has dimension 4 and has an orthonormal basis of the desired form. We can now consider the orthogonal of the space spanned by these 4 eigenvectors to apply the induction hypothesis.
 
 We now consider the space $V_1$. Note that this space is stable under $S$. Therefore, $V_1=V_1^+\oplus V_1^-$, where the spaces $V_1^\pm:=\{w\in V_1, Sw=\pm w\}$, are orthogonal. These 2 spaces are moreover isomorphic since, if $w\in V_1^+$, then $Jw\in V_1^-$ and \emph{vice versa}. Let us assume that dim$(V_1)\geq4$, which implies that dim $(V_1^+)\geq2$. We can consider two orthogonal vectors $w_1$, $w_2\in V_1^+$, both with norm 1. If we define $v_{r+1}=(w_1+Jw_2)/\sqrt{2}$, we can see that the vectors $v_{r+1}$, $Jv_{r+1}=(Jw_1-w_2)/\sqrt{2}$, $Sv_{r+1}=(w_1-Jw_2)/\sqrt{2}$ and $JSv_{r+1}=(Jw_1+w_2)/\sqrt{2}$ are orthonormal. We can complete the basis by applying the same process to the orthogonal of the space spanned by these 4 vectors in $V_1$. If  $N$ is even, the dimension of $V_1$ is a multiple of 4 so we can construct the whole basis in this way. If $N$ is odd, we are left with a 2-dimensional subspace of $V_1$, of which we can take an orthonormal basis of the form $(v_{d+1},Jv_{d+1})$. 

We now construct the matrix $U$ in the following way. 
    For $i=1,\dots,d$, we define $Uv_i=1/\sqrt{2}(e_{2i}+e_{2i-1})$, $UJv_i=1/\sqrt{2}(e_{2i+N}+e_{2i-1+N})$, $USv_i=1/\sqrt{2}(e_{2i}-e_{2i-1})$ and $UJSv_i=1/\sqrt{2}(e_{2i+N}-e_{2i-1+N})$. If $N$ is odd, we define too $Uv_{d+1}=e_N$ and $UJv_{d+1}=e_{2N}$. We easily see that such a $U$ is unitary. Moreover, by construction, $U^tMM^tUe_{2i-1}=\cosh t_i e_{2i-1}+\sinh t_i e_{2i}$, $U^tMM^tUe_{2i}=\cosh t_i e_{2i}+\sinh t_i e_{2i-1}$, $U^tMM^tUe_{2i-1+N}=\cosh t_i e_{2i-1+N}-\sinh t_i e_{2i+N}$ and $U^tMM^tUe_{2i+N}=\cosh t_i e_{2i+N}-\sinh t_i e_{2i-1+N}$, so $U^tMM^tU=R$. We can explicitly compute that $UJ^tU e_i=e_{N+i}$ and $UJ^tUe_{N+i} =-e_i$ so $UJ^tU=J$. Similarly, we prove that $US^tU=S$ so $U\in \SpONN$.
    
    Since $U$ is a symplectic matrix,   its determinant is 1, so $U\in SO_{2N}(\R)$.
    
    \item Let us use the previous point to write that there exists $R$, $U$ such that 
    $U^tMM^tU=R^2$, since $R^2$ has the same shape as $R$ with $t_i$ replaced by $2t_i$. If we set $K=M^tUR^{-1}$, we see that $K\in G$, in particular it has determinant 1. Moreover, $M=KRU$ and $^tKK=R^{-1}U^tMM^tUR^{-1}=1$.
    
    \item Let us prove that the group $\SO_{2N}(\R)\cap \SpONN$ is pathwise connected. This, together with the decomposition given by point (5), will imply that $\SpONN$ is pathwise connected.
To this purpose, we introduce the matrices \[S'=\begin{pmatrix}
I_k&0\\0&-I_{k'}\end{pmatrix}\text{ and }J'=\begin{pmatrix}
0&-I_{k/2}&0&0\\I_{k/2}&0&0&0\\0&0&0&-I_{k'/2}\\0&0&I_{k'/2}&0\end{pmatrix},\]
where $k=k'=N$ when $N$ is even and $k=N+1$ and $k'=N-1$ when $N$ is odd.
    We see that  $\SO_{2N}(\R)\cap \SpONN$ is homeomorphic to  $\SO_{2N}(\R)\cap G'_5$, where we have
    \[G'_5=\left\{M\in\mathcal{M}_{2N}(\R), ^tMJ'M=J', ^t MS'M=S'\right\}.\]

Indeed, the matrices $J$ and $J'$ (resp. $S$ and $S'$) are unitarily equivalent since $J'$ is obtained from $J$ by permutating rows and columns. Therefore, $\SpONN$ and $\SpONN'$ are unitarily equivalent hence homeomorphic.  

We take a matrix $M=\begin{pmatrix} A&B\\C&D\end{pmatrix}\in \SO_{2N}(\R)\cap G'_5$, where the blocks have respective size $k$ and $k'$.
Since $^tMM=I_{2N}$,  \begin{equation}\left\{\begin{array}{r c l}
^tAB+^tCD&=&0\\
^tAA+^tCC&=&I_k\\
^tBB+^tDD&=&I_{k'}
\end{array}\right.\end{equation}
Similarly, since $^tMS'M=S'$,  \[\left\{\begin{array}{r c l}
^tAB-^tCD&=&0\\
^tAA-^tCC&=&I_k\\
^tBB-^tDD&=&-I_{k'}
\end{array}\right.\]
These two series of equations imply that $B=C=0$. We have then that $A$ and $D$ are orthogonal matrices. 
Finally, the fact that $^tMJ'M=J'$ implies that the two matrices $A$ and $D$ are symplectic matrices. As a consequence, all matrix $M\in \SpONN'\cap\SO_{2N(\R)}$ can be written $\begin{pmatrix} A&0\\0&D\end{pmatrix}$, where $A,D\in \mathrm{Sp}_{k^*/2}(\R)\cap \SO_{k^*}(\R),$ $k^*$ being $k$ or $k'$. One easily checks that all such matrices are in $\SpONN'\cap\SO_{2N}(\R)$.
But, for all $k^*$, $\mathrm{Sp}_{k^*/2}(\R)\cap \SO_{k^*}(\R)$ is pathwise connected as an intersection of 2 pathwise connected Lie groups so $\SpONN'\cap\SO_{2N}(\R)$  is itself pathwise connected. For the pathwise connectedness of $\SO_{k^*}(\R)$, see \cite{AW67}. For the pathwise connectedness of the symplectic group, one uses the fact that this group is generated by the symplectic transvections (see \cite{J85}, Lemma~1, p.~392) to construct a continuous path between any symplectic matrix and the identity matrix.

\item By differentiating both relations $^tMJM=J$ and $^tMSM=S$ one gets that the Lie algebra of $\SpONN$ is the set
$$\left\{ M\in \mathcal{M}_{2N}(\R)\ |\ ^tMJ+JM=0 \mbox{ and } ^tMS+SM=0 \right\} \subset \mathfrak{sp}_N(\R) .$$
If $M=\left(\begin{smallmatrix}
A & B \\ C & D \end{smallmatrix}\right)\in \mathcal{M}_{2N}(\R)$ satisfies $^tMJ+JM=0$ one already gets that $M=\left(\begin{smallmatrix}
 A & B \\ C & {-}^tA   
\end{smallmatrix}\right)$ with $A\in \mathcal{M}_{N}(\R) $ $^tB=B$ and $^tC=C$. For such a matrix $M$, we write the relation $^tMS+SM=0$:
$$\left(\begin{array}{cc}
  ^tA   & C \\
   B  & -A
\end{array} \right) \left(\begin{array}{cc}
  K   & 0 \\
   0  & K
\end{array} \right) + \left(\begin{array}{cc}
  K   & 0 \\
   0  & K
\end{array} \right) \left(\begin{array}{cc}
  A   & B \\
   C  & -^tA
\end{array} \right) =0. $$
Since $B$, $C$ and $K$ are symmetric matrices, it implies that $^tAK+KA =0$ and $BK+KC=0$. This is equivalent, by definitions of the matrix $K$ and of the operator $^s$, to $A={^sA}$ and $C={^sB}$. 

\item First, we remark that since the Lie algebra of the group $\SpONN$ is a sub Lie algebra of the Lie algebra of the symplectic group, its Killing form is the restriction of the Killing form of the Lie algebra of the symplectic group to the Lie algebra of $\SpONN$ given by $(X,Y)\mapsto 2(N+1)\mathrm{Tr}(XY)$ on $\mathrm{Lie}(\SpONN) \times \mathrm{Lie}(\SpONN)$ (see \cite{TY05}). Then, a direct computation shows that this Killing form is non-degenerate, hence the Lie algebra of $\SpONN$ is semisimple. Moreover, since the relationships defining $\SpONN$ are polynomial in the matrix coefficients of its elements, the group $\SpONN$ is an algebraic group (actually it is an algebraic subgroup of the symplectic group). Hence using for example \cite[Proposition 27.2.2]{TY05}, the semisimplicity of the Lie algebra of $\SpONN$ imply the semisimplicity of $\SpONN$ as an algebraic group : the derived group of $\SpONN$ is equal to $\SpONN$.  
\end{enumerate}
\end{proof}

\begin{rem}
The semisimplicity of $\SpONN$ implies that it is topologically perfect hence we can apply to $\SpONN$ the result of Breuillard and Gelander, Theorem \ref{thm_breuillard}, as in the case of the symplectic group.
\end{rem}
\bigskip

\end{document}